\definecolor{dullmagenta}{rgb}{0.4,0,0.4}   
\definecolor{darkblue}{rgb}{0,0,0.4}
\newcommand{\eq}[1]{\eqref{#1}}
\theoremstyle{remark}
\theoremstyle{definition}
\theoremstyle{plain}
\newtheorem{theorem}{Theorem}[section]
\newtheorem{proposition}[theorem]{Proposition}
\newtheorem{lemma}[theorem]{Lemma}
\newtheorem{corollary}[theorem]{Corollary}
\newtheorem{definition}[theorem]{Definition}
\newtheorem{remark}[theorem]{Remark}
\newtheorem*{induction}{Induction hypothesis}
\newtheorem*{remark*}{Remark}
\numberwithin{equation}{section}
\DeclareMathOperator{\tr}{tr}
\DeclareMathOperator{\Ran}{Ran}
\DeclareMathOperator{\dist}{dist}
\DeclareMathOperator{\diam}{diam}
\newcommand{\pr}{\prime}
\newcommand\R{\mathbb R}
\newcommand\N{\mathbb N}
\newcommand\Z{\mathbb Z}
\renewcommand\P{\mathbb P}
\newcommand\E{\mathbb E}
\renewcommand\L{\mathrm{L}}
\newcommand{\cA}{\mathcal{A}}
\newcommand{\cB}{\mathcal{B}}
\newcommand{\cC}{\mathcal{C}}
\newcommand{\cD}{\mathcal{D}}
\newcommand{\cE}{\mathcal{E}}
\newcommand{\cF}{\mathcal{F}}
\renewcommand{\H}{\mathcal{H}}
\newcommand{\cJ}{\mathcal{J}}
\newcommand{\cK}{\mathcal{K}}
\newcommand{\cL}{\mathcal{L}}
\newcommand{\cM}{\mathcal{M}}
\newcommand{\cR}{\mathcal{R}}
\newcommand{\cS}{\mathcal{S}}
\newcommand{\cT}{\mathcal{T}}
\newcommand{\cU}{\mathcal{U}}
\newcommand{\cW}{\mathcal{W}}
\newcommand{\cZ}{\mathcal{Z}}
\newcommand{\Pb}{\boldsymbol{P}}
\newcommand{\bom}{\boldsymbol{\omega}}
\newcommand\eps{\varepsilon}
\newcommand\Chi{\raisebox{.2ex}{$\chi$}}
\newcommand{\abs}[1]{\left\lvert #1 \right\rvert}
\newcommand{\norm}[1]{\left\lVert #1 \right\rVert}
\newcommand{\scal}[1]{\left\langle #1 \right\rangle}
\newcommand{\set}[1]{\left\{ #1 \right\}}
\newcommand{\pa}[1]{\left( #1 \right)}
\newcommand{\br}[1]{\left[ #1 \right]}
\newcommand{\up}[1]{^{(#1)}}
\newcommand{\x}{\boldsymbol{x}}
\newcommand{\bolda}{\boldsymbol{a}}
\newcommand{\boldx}{\boldsymbol{x}}
\newcommand{\boldb}{\boldsymbol{b}}
\newcommand{\boldu}{\boldsymbol{u}}
\newcommand{\boldv}{\boldsymbol{v}}
\newcommand{\y}{\boldsymbol{y}}
\newcommand{\boldy}{\boldsymbol{y}}
\newcommand{\boldlambda}{\mathbf{\Lambda}}
\newcommand{\NboxLa}{\mathbf{\Lambda}_{L}^{(N)}(\bolda)}
\newcommand{\NboxLb}{\mathbf{\Lambda}_{L}^{(N)}(\boldb)}
\newcommand{\NboxLx}{\mathbf{\Lambda}_{L}^{(N)}(\boldx)}
\newcommand{\NboxLy}{\mathbf{\Lambda}_{L}^{(N)}(\boldy)}
\newcommand{\NboxLu}{\mathbf{\Lambda}_{L}^{(N)}(\boldu)}
\newcommand{\Nboxlx}{\mathbf{\Lambda}_{\ell}^{(N)}(\boldx)}
\newcommand{\Nboxly}{\mathbf{\Lambda}_{\ell}^{(N)}(\boldy)}
\newcommand{\Nboxlu}{\mathbf{\Lambda}_{\ell}^{(N)}(\boldu)}
\newcommand{\nboxLa}{\mathbf{\Lambda}_{L}^{(n)}(\bolda)}
\newcommand{\nboxLb}{\mathbf{\Lambda}_{L}^{(n)}(\boldb)}
\newcommand{\nboxLx}{\mathbf{\Lambda}_{L}^{(n)}(\boldx)}
\newcommand{\nboxLy}{\mathbf{\Lambda}_{L}^{(n)}(\boldy)}
\newcommand{\nboxLu}{\mathbf{\Lambda}_{L}^{(n)}(\boldu)}
\newcommand{\nboxx}{\mathbf{\Lambda}^{(n)}(\boldx)}
\newcommand{\nboxy}{\mathbf{\Lambda}^{(n)}(\boldy)}
\newcommand{\nboxlu}{\mathbf{\Lambda}_{\ell}^{(n)}(\boldu)}
\newcommand{\naboxLa}{ \boldlambda^{(n)}(\bolda) = \prod_{i = 1}^{n}  \,\, \Lambda_{L_i}(a_i)   }
\newcommand{\setn}{\bigl\{1, \, ... , \, n \bigr\}}
\newcommand{\setN}{\bigl\{1, \, ... , \, N \bigr\}}
\newcommand{\suitcx}{\Xi_{L,l}(\boldx)}
\newcommand{\suitc}{\Xi_{L,\ell}}
\newcommand{\Ndspace}{\mathbb{Z}^{Nd}}
\newcommand{\ndspace}{\mathbb{Z}^{nd}}
\newcommand{\dspace}{\mathbb{Z}^{d}}
\newcommand{\Bl}{\Bigl}
\newcommand{\Br}{\Bigr}
\newcommand\beq{\begin{equation}}
\newcommand\eeq{\end{equation}}
\newcommand{\qtx}[1]{\quad\text{#1}\quad}
\newcommand{\sqtx}[1]{\;\text{#1}\;}
\begin{document}

\title[Multi-particle Anderson model]
{The bootstrap multiscale analysis  for the  multi-particle Anderson model}

\author{Abel Klein and Son T. Nguyen}
\address{University of California, Irvine,
Department of Mathematics,
Irvine, CA 92697-3875,  USA}
 \email[A. Klein]{aklein@uci.edu}
  \email[S.T. Nguyen]{sondgnguyen1@gmail.com}

\thanks{A.K. was  supported in part by the NSF under grant DMS-1001509.}


\begin{abstract}
 We extend the bootstrap multi-scale analysis developed by Germinet and Klein   to the multi-particle Anderson model, obtaining  Anderson localization, dynamical localization, and decay of eigenfunction correlations. \end{abstract}

\maketitle

\tableofcontents

\section{Introduction}

Localization is by now well understood for the Anderson model, a random Schr\" odinger  operator that describes an electron moving in a medium with random  impurities (e.g., the review 
\cite{Ki}).  More recently, localization has been proved for a multi-particle Anderson model with a  finite range  interaction potential by Chulaevsky and Suhov \cite{CS1,CS2,CS3} and Aizenman and Warzel \cite{AWmp}. Chulaevsky and Suhov  used a multiscale analysis based on  \cite{vDK} and  Aizenman and Warzel \cite{AWmp}  employed the fractional moment method as in  \cite{ASFH}.  Chulaevsky, Boutet de Monvel, and Suhov \cite{CBS} extended the results of Chulaevsky and Suhov to 
the  continuum multi-particle Anderson model. 
  
In this article we extend the bootstrap multi-scale analysis developed by Germinet and Klein \cite{GK1,Kl}  to the multi-particle Anderson model, obtaining  Anderson localization, dynamical localization, and decay of eigenfunction correlations. The advantage of our method is that it extends to the 
 continuum multi-particle Anderson model, yielding the strong localization results 
proven in  \cite{GK1,GKjsp,Kl} for the one particle continuum Anderson model.  This extension will appear in a sequel to this paper.

We start by defining the $n$-particle  Anderson model.

\begin{definition}\label{defAndmodel}
The $n$-particle Anderson model is  the  random Schr\"odinger 
operator on 
$\ell^{2}(\mathbb{Z}^{nd})$ given by
\beq \label{AndH}
H_{\bom}^{(n)}: =  -\Delta^{(n)} + \,V_{\bom}^{(n)} + U ,
\eeq
where:
\begin{enumerate}
\item 
$\Delta^{(n)}$ is the discrete $nd$-dimensional  Laplacian operator.
\item 
$\bom=\{ \omega_x \}_{x\in
\Z^d}$ is a family of independent 
identically distributed random
variables  whose  common probability 
distribution $\mu$ has a bounded density $\rho$ with compact support.
\item 
$V_{\bom}^{(n)}$ is the random potential  given by
\beq 
V_{\bom}^{(n)}(\x)= \sum_{i = 1, ..., n} V_{\bom}^{(1)}(x_i),\,\, \x =(x_1, ..., x_n) \in \mathbb{Z}^{nd} ,
\eeq   
where $V_{\bom}^{(1)}(x)=\omega_{x}$ for every $x \in \mathbb{Z}^{d}$.
\item
$U$ is a potential governing the short range interaction between the $n$ particles.  We take 
\beq\label{2body}
U(\x) = \sum_{1 \leq i < j \leq n } \widetilde{U}(x_i - x_j) ,
\eeq
where $\widetilde{U}\colon \Z^d  \to \R$, $\widetilde{U}(y)= \widetilde{U}(-y)$, and  $\widetilde{U}(y) = 0$ for $\norm{y}_\infty > r_0$ for some $0<r_0 < \infty$.
\end{enumerate}
\end{definition}

\begin{remark*} We took  a two-body interaction potential in \eq{2body} for simplicity, but our  results would still be valid with a more general finite range interaction potential as in \cite{AWmp}.
\end{remark*}

We will generally omit ${\bom}$ from the notation, and 
use the following notation:
\begin{enumerate}
\item  Given $x = (x_1, \ldots, x_d) \in \R^{d}$, we set $\norm{x}=\norm{x}_\infty  := \max \{\abs{x_1}, \ldots, \abs{x_d} \}$.  If  $\bolda = (a_1, \ldots, a_n) \in \R^{nd}$, we let
  $\norm{\bolda} := \max \{ \norm{a_1}, \ldots, \norm{a_n} \}$, $\scal{\bolda} := \sqrt{ 1+ \norm{\bolda}^{2}}$, and $\cS_{ \bolda} = \bigl \{a_1, \,...,\, a_n   \bigr\}$.

\item Given $\bolda, \boldb  \in \R^{nd}$, we set $d_{H} ( \bolda, \, \boldb):= d_{H} ( \cS_{ \bolda}, \, \cS_{ \boldb})$, 
where  $d_{H}(S_1, \,S_2)$ denotes the      the Hausdorff distance between  finite subsets   $S_1, \,\, S_2 \subseteq \R^{d}$,  given by
\begin{align}
d_{H}(S_1, \,S_2)& := \max \Bl\{ \max_{x \in S_1} \, \min_{y \in S_2} \norm{x - y} \, , \,   \max_{y \in S_2} \, \min_{x \in S_1} \norm{x - y}  \Br\}\\
&= \max \Bl\{ \max_{x \in S_1} \, \dist(x, \,\,S_2) \, , \,   \max_{y \in S_2} \, \dist(y, \,\,S_1)  \Br\}.
\notag
\end{align}

\item We use $n$-particle boxes in $\ndspace$ centered at points in $\R^{nd}$.   The $n$-particle box  of side $L\ge 1$  centered at $\boldx=(x_1,x_2,\ldots,x_n) \in \R^{nd}$ is given by
\beq \label{defbox}
\nboxLx= \set{\y \in \ndspace; \norm{\y-\x} \le  \tfrac{L}{2}}= \prod_{i = 1}^{n}  \,\, \Lambda_L(x_i)\subseteq \ndspace.
\eeq
 By a box $\boldlambda_L$ in $\ndspace$ we mean an $n$-particle box  $\nboxLx$.   Note that  
 \beq (L-2)^{nd}<\abs  {\nboxLx}\le (L+1)^{nd}.
 \eeq
Since we always work with $L$ large, we will use $\abs  {\nboxLx}\le L^{nd}$ and ignore the small error.
\item  
 We will occasionally use boxes in  $\R^{nd}$.  We set
 \beq \label{defboxR}
\widehat{\boldlambda}_L^{(n)}(\x)= \set{\y \in \R^{nd}; \norm{\y-\x} \le  \tfrac{L}{2}}; \qtx{note}  
\nboxLx= \widehat{\boldlambda}_L^{(n)}\cap \Z^{nd}.
\eeq

\item Given a box  $\boldlambda_{t}^{(n)}\subseteq \boldlambda_L^{(n)}= \nboxLx$, we let
\begin{align}
\partial^{ \boldlambda_L^{(n)}} \boldlambda_{t}^{(n)}&=\set{(\boldu,\boldv) \in \boldlambda_{t}^{(n)}\times\pa{\boldlambda_L^{n)}\setminus \boldlambda_{t}^{(n)}}\, |\, \norm{u-v}_1=1},
\\
\partial_+^{ \boldlambda_L^{(n)}} \boldlambda_{t}^{(n)}&=\set{\boldv \in\boldlambda_L^{(n)}\setminus \boldlambda_{t}^{(n)}\, |\, (\boldu, \, \boldv) \in \partial \boldlambda_{t}^{(n)}\qtx{for some} \boldu \in \boldlambda_{t}^{(n)}}.\notag
\end{align}
Note that there exists a constant $s_{Nd} $ such that for $t\ge 1$ we have
\beq
\abs{\partial_+^{ \boldlambda_L^{(n)}} \boldlambda_{t}^{(n)}}\le \abs{\partial^{ \boldlambda_L^{(n)}} \boldlambda_{t}^{(n)}}\le s_{nd} t^{nd}.
\eeq
When it is clear that  $\boldlambda_{t}^{(n)}\subseteq \boldlambda_L^{(n)}$ we will simply write 
$\partial \boldlambda_{t}^{(n)}$ and $\partial_+\boldlambda_{t}^{(n)}$.

\item Given an $n$-particle box $\boldlambda=\nboxLx$, we define the finite volume operator  $H_{\boldlambda} = H_{\nboxLx} ^{(n)}$  as the self-adjoint operator on  $\ell^2 \left( \boldlambda \right)$  obtained by restricting  $H^{(n)}$  to  $ \boldlambda $ with Dirichlet (simple)  boundary condition:  $H_{\boldlambda} = \Chi_{\boldlambda} H^{(n)} \Chi_{\boldlambda}$  restricted to   $\ell^2 \left( \boldlambda \right)$.  If 
$z \notin \sigma \left( H_{\boldlambda}   \right)$, we set 
\beq  G_{\boldlambda}(z)= (H_{\boldlambda}  -z)^{-1}, \quad
G_{\boldlambda}(z;\boldu,\y) = \scal{ \delta_{\boldu}, (H_{\boldlambda}  -z)^{-1}\delta_{\y}} \;\; \text{for} \;\;  \boldu, \, \y \in \mathbf{\boldlambda}.
\eeq
\end{enumerate}

We will use several types of good boxes. Note that they are defined for a fixed   $\bom$ (omitted from the notation).

\begin{definition} Let $\boldlambda=\nboxLx$ be an $n$-particle box  and let $E\in \R$. Then:
\begin{enumerate}
\item 
Given  $\theta > 0$, the  $n$-particle box $\boldlambda$ is said to be $(\theta, E)$-suitable if, and only if, 
$E \notin \sigma \Bl(H_{\boldlambda}  \Br)$  and
\beq
\left|G_{\boldlambda}(E; \bolda, \boldb)\right| \leq L^{-\theta}\qtx{for all}
 \bolda, \boldb \in \boldlambda \qtx{with}  \norm{\bolda - \boldb} \geq \tfrac{L}{100} .\eeq
Otherwise, $\boldlambda$ is called $(\theta, E)$-nonsuitable.

\item Given  $\zeta \in (0,1)$,  the  $n$-particle box $\boldlambda$ is said to be $(\zeta, E)$-subexponentially suitable (SES) if, and only if, 
$E \notin \sigma \Bl(H_{\boldlambda}  \Br)$  and
\beq
 \left|G_{\boldlambda}(E; \bolda, \boldb)\right| \leq e^{-L^{\zeta}}
\qtx{for all}\bolda, \boldb \in \boldlambda \qtx{with}  \norm{\bolda - \boldb} \geq \tfrac{L}{100} .
\eeq
Otherwise,  $\nboxLx$ is called  $(\zeta, E)$-nonsubexponentially suitable (nonSES).

\item  Given  $m > 0$, the  $n$-particle box $\boldlambda$ is said to be  $(m, E)$-regular if, and only if, 
$E \notin \sigma \Bl(H_{\boldlambda}  \Br)$  and
\beq
 \left|G_{\boldlambda}(E; \bolda, \boldb)\right| \leq e^{-m\norm{\bolda - \boldb}}
\qtx{for all}\bolda, \boldb \in \boldlambda \qtx{with}  \norm{\bolda - \boldb} \geq \tfrac{L}{100} .
\eeq
Otherwise, $\boldlambda$ is called $(m, E)$-nonregular.

\end{enumerate}
\end{definition}

\begin{remark}  \label{goodbox}
It follows immediately  from the definitions that:
\begin{enumerate}
\item $\nboxLx$  $(m^{*}, E)$-regular  \ $\Longrightarrow$  \ $\nboxLx$  
$\left(\tfrac{  m^{*}L  }{ 100 \log L  }, E\right)$-suitable.
\item $\nboxLx$  $(\theta, E)$-suitable  \ $\Longrightarrow$  \ $\nboxLx$ \,  $\left(\tfrac{\theta \, logL}{L}, E\right)$-regular.
\item $\nboxLx$ $\Bl(L^{\zeta-1}, E \Br)$-regular  \ $\Longrightarrow$  \ $\nboxLx$    $\left(\zeta-\tfrac{log 100}{log L}, E\right)$-SES.
\item 
$\nboxLx$   $\bigl(\zeta, E \bigr)$-SES  \ $\Longrightarrow$  \ $\nboxLx$   $(L^{\zeta-1}, E)$-regular.
\end{enumerate}
\end{remark}

We are ready to state our main result, which
 extends the bootstrap multiscale analysis of Germinet and Klein \cite{GK1} to the multi-particle Anderson model with short range interaction. 
\begin{theorem}  \label{maintheorem}
   There exist  $p_{0} (n)=p_0(d,n)>0$, $n=1,2,\ldots$,
   with the property that for every   $N \in \N$,  given $\theta > 8Nd$,   there exists ${\cL}=\cL(d,\norm{\rho}_\infty, N, \theta)$ such that if for some $L_0\ge \cL$ we have 
\beq \label{condpn}
\sup_{\x \in \R ^{nd}} \P  \Bigl\{ \Lambda_{L_0}^{(n)} (\x) \;\; \text{is} \;\; (\theta,\,E)\text{-nonsuitable} \Bigr\} \leq p_{0}(n) ,
\eeq
for every  $E \in \R$ and every $n=1,2,\ldots,N$,  
then, given $0< \zeta  <1$, we can find a length scale $L_{\zeta} = L_\zeta(d,\norm{\rho}_\infty, N, \theta,L_0)$,  $\delta_\zeta = \delta_{\zeta}(d,\norm{\rho}_\infty, N, \theta,L_0)>0$, and $m_{\zeta} = m_{\zeta}(\delta_\zeta,L_{\zeta}) > 0$,  so that the following holds for $n = 1, 2, ..., N$:

\begin{enumerate}
\item For every $E \in \R$,  $L \geq L_{\zeta}$, and $\bolda \in \R^{nd}$,   we have 
\begin{align} 
\P \Bigl\{   \nboxLa  \;\; \text{is} \;\;  \left ({m_\zeta}, \,E \right )\text{-nonregular} \Bigr\}    \leq e^{-L^{\zeta}}.
\end{align}
\item Given  $E_1 \in \R$, set  $I(E_1)=[E_1-\delta_{\zeta}, E_1+\delta_{\zeta}]$.  Then, for  every $E_1 \in \R$,  $L \geq L_{\zeta}$, and   $\bolda,\boldb \in \R^{nd}$
  with $d_{H} ( \bolda, \, \boldb) \ge L$,  we have 
\begin{align} \label{concmsa}
\P \Bigl\{ \exists \, E \in I(E_1)\;\,\text{so} \;\,  \nboxLa \;\, \text{and} \;\,  \nboxLb  \;\, \text{are} \;\,  \left ({m_\zeta}, \,E \right )\text{-nonregular} \Bigr\} \leq e^{-L^{\zeta}}. 
\end{align}
\end{enumerate}
\end{theorem}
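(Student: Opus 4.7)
The plan is to prove Theorem~\ref{maintheorem} by induction on the particle number $n=1,2,\ldots,N$, at each step running the four-stage Germinet--Klein bootstrap multiscale analysis of \cite{GK1,Kl}, adapted to the multi-particle geometry by splitting $n$-particle boxes into fully and partially interactive types and invoking the induction hypothesis on $n'<n$ to control the latter.

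The case $n=1$ is exactly the single-particle bootstrap MSA of \cite{GK1}, whose hypothesis is precisely \eq{condpn} at $n=1$. For the inductive step, fix $n\le N$ and assume (i)--(ii) already hold at every $n'<n$. Call $\nboxLx$ \emph{fully interactive} if $\diam(\cS_{\boldx})\le cL$ for a small geometric constant $c$, and \emph{partially interactive} otherwise. A partially interactive box admits a nontrivial partition $\{1,\ldots,n\}=\cJ\sqcup\cJ^{c}$ whose two particle clusters sit more than $2r_{0}$ apart; the interaction $U$ then decouples across the partition, $H_{\nboxLx}$ splits as the tensor sum of two finite-volume operators acting on $|\cJ|$- and $|\cJ^{c}|$-particle subsystems, and its Green's function factorizes so that decay follows from the induction hypothesis applied to each factor. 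A fully interactive box, in contrast, confines all $n$ particles to an $O(L)$-window, leaving enough fresh single-site disorder for a standard Wegner estimate based on the bounded density~$\rho$.

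With this deterministic dichotomy in hand, one runs the four bootstrap stages at level $n$. The first stage boosts the polynomial initial estimate \eq{condpn} to polynomial $(\theta,E)$-suitability at all scales $L\ge L_{0}^{\alpha}$, by tiling $\nboxLx$ with $\ell$-subboxes, calling a subbox ``good'' if it is either $(\theta,E)$-suitable (fully interactive case) or controlled by the induction hypothesis (partially interactive case), and iterating geometric resolvent identities; the assumption $\theta>8Nd$ is precisely what absorbs the combinatorial and geometric losses in this iteration. The second stage amplifies the probability bound from polynomial to subexponential $e^{-L^{\zeta'}}$ at a single energy. The third stage replaces the single energy by the interval $I(E_{1})$ of radius $\delta_{\zeta}$ via a double-box variational eigenvalue-concentration argument for the Hausdorff-separated pair $\nboxLa,\nboxLb$. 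The fourth stage converts subexponential suitability into honest exponential $(m_{\zeta},E)$-regularity via Remark~\ref{goodbox}(iv) followed by one more MSA iteration, yielding both conclusions (i) and (ii).

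The subtlest step, and the one that forces the particle-number induction, is the double-box energy-interval stage. In the single-particle case $\norm{a-b}\ge L$ forces $\Lambda_{L}(a)$ and $\Lambda_{L}(b)$ to be disjoint and independent, and a joint Wegner estimate applies immediately. Here the Hausdorff condition $d_{H}(\bolda,\boldb)\ge L$ guarantees only that some particle coordinate on one side lies at distance $\ge L$ from every coordinate of the other, while the two $n$-particle boxes may share many random variables. Extracting an independent site $\omega_{x_{0}}$ that affects $H_{\nboxLa}$ but not $H_{\nboxLb}$ (or vice versa) requires a separate case analysis on whether each of the two boxes is fully or partially interactive; when one side is partially interactive, the decoupling above effectively reduces the joint probability to a lower-particle-number problem, which is exactly where the inductive hypothesis on $n'<n$ enters, and also why the constants $p_{0}(n)$ must be chosen inductively small enough to close the scheme.
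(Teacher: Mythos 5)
Your plan follows the paper's proof essentially verbatim: induction on the particle number, the four-stage Germinet--Klein bootstrap, partially interactive boxes handled by the tensor decomposition of $H$ plus the induction hypothesis at lower particle number, and the Hausdorff condition $d_{H}(\bolda,\boldb)\ge L$ yielding partial separation for the two-box Wegner estimate. One point to sharpen when executing it: the payoff of full interactivity is not the Wegner estimate (which the paper proves for every box) but the fact that $\ell$-distant FI boxes are \emph{fully separated} and hence independent, which is what lets you raise $p_\ell$ to the power $J+1$ when several bad FI sub-boxes occur in one annulus of the cover; also, the passage from a single energy to the interval $I(E_1)$ belongs to the fourth stage (via nonresonance plus the two-box Wegner estimate between partially separated rectangles), not the third.
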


\begin{remark} The hypotheses of Theorem \ref{maintheorem} are verified at high disorder.   Consider the $n$-particle Anderson model given in Definition~\ref{defAndmodel} with a disorder parameter $\lambda>0$ (cf. \eq{AndH}):
\beq \label{AndHlambda}
H_{\bom,\lambda}^{(n)}: =  -\Delta^{(n)} + \lambda\,V_{\bom}^{(n)} + U.
\eeq
 $H_{\bom,\lambda}$ can be rewritten as an $n$-particle Anderson model $H\up{\lambda}_{\bom} $ in the exact form of Definition~\ref{AndH} by replacing the probability distribution $\mu$ by the probability distribution $\mu\up{\lambda}$, defined by 
$\mu\up{\lambda}(B) = \mu(\lambda^{-1}B)$ for all Borels sets $B\subset \R$, with   density $\rho\up{\lambda}(t)=\frac 1{\lambda} \rho(\frac t{\lambda})$. Proceeding as in \cite[Proposition~3.1.2]{vDK}, we can show that
for all $N\in \N$, given a scale  $L_0$,
 there exists $\lambda_N <\infty$, such that for all   $\lambda\ge \lambda_N$ the condition \eq{condpn} is satisfied at scale $L_0$  by  $H_{\bom,\lambda}^{(n)} $  for every  $E \in \R$ and every $n=1,2,\ldots,N$.   Since  $\norm{\rho\up{\lambda}}_\infty= \frac 1{\lambda}\norm{\rho}_\infty\le \norm{\rho}_\infty$ for $\lambda \ge 1$, and $\norm{\rho\up{\lambda}}_\infty$ is the only constant that appears in the proof of the theorem that changes with $\lambda$, it follows that the conclusions of Theorem \ref{maintheorem}  are valid for for all $\lambda\ge \lambda_N$ with the same constants $L_\zeta, \delta_\zeta, m_\zeta$.
\end{remark}

Theorem \ref{maintheorem} is proved in Section~\ref{inductionhyp}. The theorem is proved by induction on the number of  particles.  The one particle case was proven in \cite{GK1,Kl}. (These papers deal with the continuum Anderson model, but the results  apply to the discrete Anderson model.) The proof of the induction step  proceeds as in  \cite{GK1,Kl}, with four multi-scale analyses, using  some  technical arguments of \cite{GKber}.  To deal with the fact that in the multi-particle case events based on disjoint boxes are not independent,     we use  the  partially and fully separated boxes and  partially and fully interactive boxes introduced by Chulaevsky and Suhov \cite{CS1,CS2,CS3}.  The relevant  distance between boxes is the Hausdorff distance, introduced in this context by  Aizenman and Warzel \cite{AWmp}.  We prove a Wegner estimate (Theorem~\ref{Wegner0}) and a Wegner estimate  between partially separated boxes (Theorem~\ref{Wegner2}).  In the multiscale analysis partially interactive boxes are handled by the induction hypothesis, i.e., by the conclusions of Theorem \ref{maintheorem} for a smaller number of particles (see Lemma~\ref{PIsuit}), and fully interactive boxes  are handled similarly to one particle boxes (see Lemma~\ref{part1prop2}).

 Theorem \ref{maintheorem} implies localization: Anderson localization, dynamical localization, and estimates on the behavior of eigenfunctions.

\begin{corollary} \label{PPS}
Assume  the conclusions of Theorem \ref{maintheorem}. Then:

\begin{enumerate}
 \item \emph{(Anderson localization)}   $H_{\bom}^{(N)}$ has pure point spectrum with exponentially decaying eigenfunctions for $\P$-a.e.\ $\bom$.

\item \emph{(Dynamical Localization)} For every $\y \in \Ndspace $ we can find a constant $C(\y)$ such that
\beq 
\E \set{\sup_{t \in \R} \abs{\scal{ \delta_{\x},  e^{-itH_{\bom}^{(N)}} \delta_{\y}  }   } } \leq C(\y) \, e^{-d_{H}(\x, \, \y)}\qtx{for every} \x \in \Ndspace.
\eeq

\item  \emph{(Summable Uniform Decay of Eigenfunction Correlations (SUDEC))} Fix $\nu > \tfrac{Nd}{2} + \tfrac{1}{2}$ and let  $T$ be the  operator on $\H$ given by multiplication  by the function  $\scal{\x}^{2\nu}$. Then, for $\P$-a.e.\ $\bom$
 $H_{\bom}$ has pure point spectrum in the open interval $I$ with finite multiplicity,  and  for  every $\zeta \in (0, \, 1)$  there exists a constant $C_{\bom,\zeta}$ such that for every  eigenvalue $E$ of $H_{\bom}^{(N)}$ and  $\psi, \, \phi \in \Ran \Chi_{\{E\}} (H_{\bom}) $, we have that, for all   $\x, \, \y \in \Ndspace$,
\begin{align}  
&\abs{\phi(\boldx)} \abs{\psi(\boldy)} \leq C_{\bom,\zeta} \norm{T^{-1}\phi}\norm{T^{-1}\psi} \scal{\x}^{2\nu} e^{-d_{H}(\boldx - \boldy)^{\zeta}}, \quad \text{and}\\
& \abs{\phi(\boldx)}\abs{\psi(\boldy)}\leq C_{\bom,\zeta} \norm{T^{-1}\phi}\norm{T^{-1}\psi} \scal{\x}^{\nu}  \scal{\y}^{\nu} e^{-d_{H}(\boldx - \boldy)^{\zeta}}.
\end{align}
\end{enumerate}
\end{corollary}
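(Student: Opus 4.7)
The plan is to follow the standard derivation of localization from bootstrap MSA output used by Germinet--Klein \cite{GK1,Kl,GKjsp} in the one--particle case, with the single modification that Euclidean distance $\norm{\cdot}$ is replaced throughout by the Hausdorff distance $d_H$. The inputs are the single--box and two--box estimates of Theorem~\ref{maintheorem}, both at the common regularity rate $m_\zeta$.

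First I would install the deterministic skeleton. By the Berezanski--Simon generalized eigenfunction expansion, for $\P$--a.e.\ $\bom$ the operator $H_{\bom}^{(N)}$ admits a $\mu_{\bom}$--complete family $\{\psi_E\}_{E\in\R}$ of polynomially bounded generalized eigenfunctions. A standard boundary--term identity applied to the finite--volume resolvent shows that if $\boldlambda_L^{(N)}(\bolda)$ is $(m_\zeta,E)$--regular and $E$ is a generalized eigenvalue with eigenfunction $\psi_E$, then
$$
\abs{\psi_E(\bolda)} \le s_{Nd}\, L^{Nd}\, e^{-m_\zeta L/2} \max_{\boldv \in \partial_+ \boldlambda_L^{(N)}(\bolda)} \abs{\psi_E(\boldv)},
$$
so $\bolda$ cannot be the peak of $\psi_E$ whenever the centered scale--$L$ box is regular.

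Next I would run the probabilistic step. Fix a geometric sequence $L_j = \lceil L_\zeta\, \alpha^j \rceil$ with $\alpha$ slightly larger than $1$, and cover $\R$ by intervals $I_k = [k\delta_\zeta, (k+1)\delta_\zeta]$, $k \in \Z$. Applying \eqref{concmsa} at scale $L_j$ to every pair of boxes $\boldlambda_{L_j}^{(N)}(\bolda),\boldlambda_{L_j}^{(N)}(\boldb)$ with $d_H(\bolda,\boldb)\ge L_j$ and centers in a polynomially growing cube, and to each $I_k$ that meets the almost--sure spectrum of $H_{\bom}^{(N)}$ restricted to that cube, yields a total probability summable in $j$, since each term carries the factor $e^{-L_j^\zeta}$ that dominates the polynomial prefactors. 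Borel--Cantelli produces a full--measure event on which, for every generalized eigenvalue $E$ and all sufficiently large $j$, among any two boxes at Hausdorff distance $\ge L_j$ at most one is $(m_\zeta,E)$--nonregular. Iterating the deterministic reduction along this cascade of scales forces each generalized eigenfunction to decay exponentially in $d_H$ away from its peak; because $\widetilde U$ has finite range, this also gives exponential decay in $\norm{\cdot}$, which together with the $\mu_{\bom}$--completeness proves part~(i). For parts~(ii) and (iii) I would then invoke the Germinet--Klein machinery underlying \cite{GK1,GKjsp}: the two--box MSA output, combined with Stone's formula and a contour estimate, gives quantitative bounds on the spectral projections $\chi_{\{E\}}(H_{\bom}^{(N)})$ and on matrix elements $\scal{\delta_\x, \chi_{\{E\}}(H_{\bom}^{(N)}) \delta_\y}$; these feed into strong HS--dynamical localization, which in turn delivers the expectation bound of~(ii) and the pairwise eigenfunction--correlation bounds of~(iii), with the sub--exponential $e^{-d_H(\x,\y)^\zeta}$ rate inherited from $\zeta<1$.

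The main obstacle is the combinatorial bookkeeping forced by the Hausdorff distance. Because $d_H$ is only a pseudo--distance on $\Z^{Nd}$ -- invariant under particle permutations, not additive along chains, and not translation--invariant at the level of the full operator (due to the interaction $U$) -- the Borel--Cantelli summation must be organized so as not to over--count pairs, for instance by fixing $\bolda$ and letting $\boldb$ range over representatives of permutation classes inside a polynomial cube, and the cascading argument must be phrased for pairs $(\bolda,\boldb)$ with large $d_H(\bolda,\boldb)$ rather than for annular shells around a single point. Once this is set up correctly, the remaining steps mirror \cite{GK1,Kl,GKjsp} verbatim, with $\norm{\cdot}$ replaced by $d_H$.
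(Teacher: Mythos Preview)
Your proposal is essentially correct and follows the same route as the paper: Borel--Cantelli on the two--box MSA output \eqref{concmsa} along a geometric sequence of scales, the boundary--term identity for regular boxes, and for (ii)--(iii) the generalized eigenfunction expansion and $W$--function machinery of \cite{GKber,GKjsp}. Two clarifications are worth making.

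First, your identification of the main obstacle is off. You write that the cascading argument ``must be phrased for pairs $(\bolda,\boldb)$ with large $d_H(\bolda,\boldb)$ rather than for annular shells around a single point,'' and worry about over--counting. In fact annular shells work perfectly well: the key observation (which the paper makes explicit) is that for fixed $\x_0$ the $d_H$--ball $\{\x\in\Z^{Nd}: d_H(\x,\x_0)\le R\}$ has cardinality at most $(2R+\diam\x_0)^{Nd}$, because $d_H(\x,\x_0)\le R$ forces each coordinate $x_i$ to lie within $R$ of the set $\cS_{\x_0}$, hence within $R+\diam\x_0$ of $(\x_0)_1$. This polynomial volume bound is all you need to sum the Borel--Cantelli series; no permutation--class representatives or pair reorganization are required. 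The paper simply defines $A_{k+1}(\x_0)=\{\x: L_k<d_H(\x,\x_0)\le 2L_k\}$, bounds $\abs{A_{k+1}(\x_0)}$ polynomially, and sums $\abs{A_{k+1}(\x_0)}\,e^{-L_k^\zeta}$ over $k$ and then over $\x_0\in\Z^{Nd}$.

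Second, the passage from decay in $d_H$ to decay in $\norm{\cdot}$ does not use the finite range of $\widetilde U$; it follows directly from the elementary inequality $\norm{\x-\x_0}\le d_H(\x,\x_0)+\diam\x_0$, at the cost of an $\x_0$--dependent constant $e^{\frac m8\diam\x_0}$. Also, for parts (ii) and (iii) the paper does not use Stone's formula or contour estimates; it works directly with the generalized eigenprojectors $\Pb_\omega(\lambda)$ and the quantities $\mathbf{W}_{\x,\bom}(\lambda)$, $Z_{\x,\bom}(\lambda)$ from \cite{GKjsp}, exactly as you would expect from invoking that reference.
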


Corollary~\ref{PPS} is proven in Section~\ref{secloc}.

\section{Preliminaries to the multiscale analysis}

\subsection{Partially and fully separated boxes} We call 
     $\naboxLa$ an $n$-particle rectangle centered at $\bolda \in \R^{nd}$.  Given   subsets $\cJ, \cK \subseteq \setn $, with $\cK\not=\emptyset$,  we set
\begin{gather}\notag
\Pi_{i} \boldlambda^{(n)}(\bolda)  = \Lambda_{L}(a_i), \;
\Pi_{\cJ} \boldlambda^{(n)}(\bolda) = \bigcup_{i \in \cJ} \Lambda_{L}(a_i), \;
\Pi \boldlambda^{(n)}(\bolda)  = \Pi_{\setn} \boldlambda^{(n)}(\bolda),\\
\mathbf{a}_{\cK} = (a_i \,\, , \,i \in \cK),\quad\mathbf{a}=(\mathbf{a}_{\cK}, \mathbf{a}_{\cK^c}), \quad
\boldlambda(\bold{a}_{\cK})=\boldlambda^{\cK}(\bold{a}_{\cK}) = \prod_{i \in \cK} \Lambda_{L_i} (a_i). \notag
\end{gather}

\begin{definition}
Let $\nboxx= \prod_{i = 1}^{n}  \,\, \Lambda_{L_i}(x_i)$  and    $\nboxy= \prod_{i = 1}^{n}  \,\, \Lambda_{\ell_i}(y_i)$ be a pair of $n$-particle rectangles.   
\begin{enumerate}
\item  $\nboxx$ and $\nboxy$ are partially separated if, and only if, either  $\Lambda_{L_i}(x_i) \,\cap \, \Pi \nboxy = \emptyset$ {for some}  $i \in \setn$, or $\Lambda_{\ell_j}(y_j) \,\cap \, \Pi \nboxx = \emptyset $ for some $j \in \setn$.

\item  $\nboxx$ and $\nboxy$ are fully separated if, and only if, 
\beq 
\biggl(\Pi \nboxx  \biggr) \,\,  \bigcap \,\, \biggl(\Pi \nboxy \biggr) = \emptyset.
\eeq

\end{enumerate}
\end{definition}

Given a pair of $n$-particle rectangles  $\nboxx$ and $\nboxy$ as above, with $L_i,\ell_i\le L$ for all $\in \setn$,  if there exists $i \in \setn$ such that $\norm{x_i - y_j} \ge L$ for every $j \in \setn$, then $\Lambda_{L_i}(x_i) \,\, \cap \,\, \Pi \nboxy = \emptyset$. In other words, if there exists $i \in \setn$ such that $\dist(x_i, \,\, \cS_{\y}) \ge  L$, then $\Lambda_{L_i}(x_i) \,\, \cap \,\, \Pi \nboxy = \emptyset$. 
We have the following lemma.

\begin{lemma}\label{wegnerboxes}
Let $\nboxx= \prod_{i = 1}^{n}  \,\, \Lambda_{L_i}(x_i)\subseteq  \nboxLx $  and    $\nboxy= \prod_{i = 1}^{n}  \,\, \Lambda_{\ell_i}(y_i)\subseteq  \nboxLy $ be a pair of $n$-particle rectangles.   Then
\begin{enumerate}
\item  $\nboxx$ and $\nboxy$ are partially separated if $d_{H}(\x, \, \y) >L$.

\item  $\nboxLx$ and $\nboxLy$ are fully separated if   $\dist(\cS_{ \x}, \,\,\cS_{\y})> L$. 

\end{enumerate}
\end{lemma}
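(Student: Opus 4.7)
Both statements are straight unpackings of the definitions combined with the observation the authors already recorded in the paragraph immediately preceding the lemma. The plan is to reduce each case to that observation (or its direct analogue) after pulling a single coordinate out of the Hausdorff/infimum distance.

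For (i), assume $d_H(\x,\y)>L$. By the explicit formula for the Hausdorff distance recalled in the introduction, either $\max_{i\in\setn}\dist(x_i,\cS_{\y})>L$ or $\max_{j\in\setn}\dist(y_j,\cS_{\x})>L$. By symmetry I may take the first case, and pick $i_0\in\setn$ with $\dist(x_{i_0},\cS_{\y})>L$, equivalently $\|x_{i_0}-y_j\|>L$ for every $j\in\setn$. The preliminary remark preceding the lemma (which is itself just a triangle inequality argument: if $z\in\Lambda_{L_{i_0}}(x_{i_0})\cap\Lambda_{\ell_j}(y_j)$ then $\|x_{i_0}-y_j\|\le\tfrac12(L_{i_0}+\ell_j)\le L$, contradicting $\|x_{i_0}-y_j\|>L$) then gives
\[
\Lambda_{L_{i_0}}(x_{i_0})\cap \Pi\nboxy
=\bigcup_{j=1}^{n}\bigl(\Lambda_{L_{i_0}}(x_{i_0})\cap\Lambda_{\ell_j}(y_j)\bigr)=\emptyset ,
\]
which is exactly the definition of partial separation. (The inclusions $\nboxx\subseteq\nboxLx$ and $\nboxy\subseteq\nboxLy$ are what guarantee $L_i,\ell_i\le L$, so the triangle-inequality step applies.)

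For (ii), the boxes now have constant side length $L$, so I want to show $\Pi\nboxLx\cap\Pi\nboxLy=\emptyset$, i.e.\ $\Lambda_L(x_i)\cap\Lambda_L(y_j)=\emptyset$ for every pair $i,j\in\setn$. The hypothesis $\dist(\cS_{\x},\cS_{\y})>L$ gives $\|x_i-y_j\|>L$ for all such $i,j$; on the other hand any common point $z$ of $\Lambda_L(x_i)$ and $\Lambda_L(y_j)$ would force $\|x_i-y_j\|\le \tfrac12(L+L)=L$, a contradiction. Taking the union over $i$ and $j$ yields full separation.

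No genuine obstacle is expected here; the only point requiring a little care is the strict-versus-non-strict inequality ($d_H(\x,\y)>L$ and $\dist(\cS_\x,\cS_\y)>L$ are strict, matching the strict inequality needed to defeat $\tfrac12(L_i+\ell_j)\le L$). This is why the hypothesis is stated with a strict inequality and the preceding paragraph with $\ge L$, and is the only step worth writing out carefully.
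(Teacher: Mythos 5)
Your proof is correct and follows exactly the route the paper intends: the paper omits a formal proof, relying on the observation in the paragraph immediately preceding the lemma, and your argument is precisely the unpacking of that observation (triangle inequality plus the definitions of partial and full separation). Nothing further is needed.
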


\subsection{Wegner estimates}

Wegner estimates have been previously  proved for   the $n$-particle Anderson model (e.g.,  \cite{CS1, CS3}).   We derive optimal Wegner estimates, that is, with the expected dependence on the volume and  interval length.

\begin{theorem} \label{Wegner0}
Consider the  $n$-particle rectangle  $\boldlambda = \prod_{i = 1, \dots, n} \Lambda_{L_i} (a_i) \subseteq \nboxLa $ and let  $ \Gamma = \Lambda_{L_k} (a_k) $ for some $k \in\set{1, \dots, n}$.  Then for any interval $I$  we have
\beq
\E_{\Gamma } \, \Bl( tr \,\Chi_{I} \left( H_{\bom, \, \boldlambda}   \right)  \Br) \leq n\, \norm{\rho}_{\infty} \abs{I} L^{nd} .
\eeq
In particular,  for any $ E \in \R$ and  $ {\eps} > 0$ we have
\beq \label{weggamma}
\P_\Gamma \Bl\{ \norm{G_{\boldlambda}(E)} \geq \tfrac{1}{{\eps}} \Br\} = \P_\Gamma  \Bl\{d\,(\sigma(H_{\boldlambda}), E) \leq {\eps} \Br\} \leq 2\, n\, \, \,
\norm{\rho}_\infty\, {\eps} L^{nd}.
\eeq
\end{theorem}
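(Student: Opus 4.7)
The plan is to reduce the trace estimate to single-site spectral averaging by exploiting the geometric fact that every $\x\in\boldlambda$ has its $k$-th coordinate in $\Gamma$, and then to deduce \eqref{weggamma} from Chebyshev's inequality.

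First I would decompose the random potential site-by-site in $\Gamma$. For each $y\in\Z^d$, let $M_y$ be the diagonal multiplication operator on $\ell^2(\boldlambda)$ with symbol $n_y(\x):=\#\{i:x_i=y\}$. Since $V^{(n)}_{\bom}(\x)=\sum_i \omega_{x_i}=\sum_y n_y(\x)\,\omega_y$, we have $M_y\geq 0$, $\partial H_{\bom,\boldlambda}/\partial \omega_y = M_y$, and
\[
H_{\bom,\boldlambda} \;=\; \widetilde{H} \;+\; \sum_{y\in\Gamma} \omega_y\, M_y,
\]
with $\widetilde{H}$ independent of $\{\omega_y:y\in\Gamma\}$. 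The key observation is that $\boldlambda=\prod_i \Lambda_{L_i}(a_i)$ and $\Gamma=\Lambda_{L_k}(a_k)$ force $x_k\in\Gamma$ for every $\x\in\boldlambda$, so that as multiplication operators on $\ell^2(\boldlambda)$,
\[
\sum_{y\in\Gamma} M_y \;=\; \text{(multiplication by }\#\{i:x_i\in\Gamma\}) \;\geq\; I_{\ell^2(\boldlambda)}.
\]
This yields, pointwise in $\bom$, the trace inequality
\[
\tr\chi_I(H_{\bom,\boldlambda}) \;\leq\; \sum_{y\in\Gamma} \tr\bigl(M_y\,\chi_I(H_{\bom,\boldlambda})\bigr).
\]

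Next, for each $y\in\Gamma$ I would freeze all other $\omega_{y'}$ and view $\omega_y\mapsto H_{\bom,\boldlambda}$ as a monotone self-adjoint pencil with nonnegative derivative $M_y$. Writing $\tr(M_y\,\chi_I(H_{\bom,\boldlambda}))=\sum_{j:\,E_j(\omega_y)\in I}\,dE_j/d\omega_y$ and performing the change of variables $E=E_j(\omega_y)$ term by term, a Combes--Hislop type spectral-averaging computation yields
\[
\int \rho(\omega_y)\,\tr\bigl(M_y\,\chi_I(H_{\bom,\boldlambda})\bigr)\,d\omega_y \;\leq\; \|\rho\|_\infty\,|I|\,\tr M_y.
\]
Taking the product expectation $\E_\Gamma$, summing over $y\in\Gamma$, and using the Fubini identity
\[
\sum_{y\in\Gamma} \tr M_y \;=\; \sum_{\x\in\boldlambda} \#\{i:x_i\in\Gamma\} \;\leq\; n|\boldlambda| \;\leq\; n L^{nd}
\]
then completes the proof of the trace estimate. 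The norm bound \eqref{weggamma} follows by Chebyshev applied with $I=[E-\eps,E+\eps]$, since $\dist(\sigma(H_\boldlambda),E)\leq\eps$ forces $\tr\chi_I(H_\boldlambda)\geq 1$, giving $\P_\Gamma\{\dist(\sigma(H_\boldlambda),E)\leq\eps\}\leq \E_\Gamma[\tr\chi_I(H_\boldlambda)]\leq 2n\|\rho\|_\infty\eps L^{nd}$.

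The delicate point is the sharp spectral-averaging bound with the $\tr M_y$ prefactor: a naive per-eigenvalue count would pick up a spurious $\dim\ell^2(\boldlambda)\sim L^{nd}$ factor and render the final bound too weak by $L^{nd}$. Once that averaging is in hand, the elementary fact $\#\{i:x_i\in\Gamma\}\leq n$ is the sole place where the number of particles enters and is what produces the $n$ prefactor in the statement.
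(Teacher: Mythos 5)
Your argument is correct and is essentially the paper's proof: the same decomposition $V^{(n)}_{\bom}=\sum_y \omega_y\Theta_y$ (your $M_y$ is the paper's $\Theta_y$), the same domination $\Chi_{\boldlambda}\le\sum_{y\in\Gamma}\Theta_y$ coming from $x_k\in\Gamma$, finite-rank spectral averaging in each $\omega_y$, and Chebyshev with $|I|=2\eps$ for \eqref{weggamma}. The only cosmetic difference is the final count --- the paper bounds each term by $\norm{\rho}_\infty|I|\dim\Ran\Theta_y\le \norm{\rho}_\infty|I|\,nL^{(n-1)d}$ and multiplies by $\abs{\Gamma}\le L^d$, whereas you sum $\tr M_y$ over $\Gamma$ --- and your trace form of the averaging bound is legitimate here precisely because the nonzero eigenvalues of $M_y$ are positive integers, so $\tr M_y\ge\mathrm{rank}\,M_y$ and the sharp rank bound (Birman--Solomyak/spectral shift) implies yours.
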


\begin{proof}
We begin by rewriting $H_{\bom, \,\boldlambda}^{(n)}$  as
\beq
H_{\bom, \,\boldlambda}^{(n)} = - \Delta _{\boldlambda} + U_{\boldlambda}+\sum_{\boldx \in \boldlambda} \sum_{i = 1, \dots, n} \bom_{x_i} \Pi_{\boldx},
\eeq
where $\Pi_{\x}$ denotes the rank one orthogonal projection onto $\delta_{\x}$. 
Given $y \in \dspace$, we set   $q_{y}(\boldx) = \# \left\{ i = 1, \dots, n \,\,|\,\, x_i = y \right\}$ for $\boldx = (x_1, \dots, x_n)\in \Z^{nd} $. Then (see \cite{AWmp})
\begin{align} \notag
H_{\bom, \,\boldlambda}^{(n)} & = - \Delta _{\boldlambda} + U_{\boldlambda} + \sum_{\boldx \in \boldlambda} \sum_{y \in \dspace} q_{y} (\boldx) \bom_{y} \Pi_{\boldx}  = - \Delta _{\boldlambda} + U_{\boldlambda} + \sum_{y \in \dspace} \bom_{y} \sum_{\boldx \in \boldlambda} q_{y} (\boldx)  \Pi_{\boldx} \\ 
&  = - \Delta _{\boldlambda} + U_{\boldlambda} + \sum_{y \in \dspace} \bom_{y} \Theta_{y}  = - \Delta _{\boldlambda} + U_{\boldlambda} + \sum_{y \notin \Gamma} \bom_{y} \Theta_{y}  + \sum_{y \in \Gamma} \bom_{y} \Theta_{y} ,
\end{align}
where $\Theta_{y}  =  \sum_{ \boldx \in \boldlambda  } q_{y} (\boldx)  \Pi_{\boldx}$. 

Let $I$ be an interval. Given 
 $\tilde{y} \in \Gamma$, we set
\beq
\widetilde{H}_{(\bom_{\tilde{y}})^{\perp},\, \boldlambda} =H_{\bom, \,\boldlambda}^{(n)}  -    \bom_{\tilde{y}} \Theta_{\tilde{y}}  
= - \Delta _{\boldlambda} + U_{\boldlambda} + \sum_{y \notin \Gamma} \bom_{y} \Theta_{y} + \sum_{y \in \Gamma \setminus \set{\tilde{y}}} \bom_{y} \Theta_{y} .
\eeq

As functions on $\ndspace$, we have
$
\Chi_{\boldlambda} \leq \sum_{y \in \Gamma} \Theta_{y} 
$,
so 
\begin{align}
\tr \, \Chi_{I}(H_{\bom, \, \boldlambda}) & \le  \sum_{y \in  \Gamma} \tr \, \Bigl( \Theta_{y}  \Chi_{I}(H_{\bom, \, \boldlambda}) \Bigr)  = \sum_{y \in  \Gamma} \tr \, \bigl( \Theta_{y} \, \Chi_{I}( \widetilde{H}_{(\bom_{y} )^{\perp}, \, \boldlambda} + \bom_{y} \Theta_{y} ) \bigr) .
\end{align}
Hence, 
\begin{align}
&\E_{\Gamma} \bigl( \tr \, \Chi_{I} (H_{\bom, \, \boldlambda})  \bigr)  \leq \sum_{y \in  \Gamma} \E_{\bom_{y}^{\perp}} \E_{\bom_y} \Bigl \{ \tr \, \sqrt{\Theta_y} \,\,
\Chi_{I} \bigl ( \widetilde{H}_{(\bom_{y})^{\perp}, \boldlambda }  + \bom_{y} \Theta_{y} \bigr )  \sqrt{\Theta_y} \Bigr \} \notag \\
& \leq \sum_{y \in \Gamma} \E_{\bom_{y}^{\perp}}\set{ \norm{\rho}_{\infty} \abs{I} n L^{(n-1)d}}
 \leq \abs{\Gamma} \norm{\rho}_{\infty} \abs{I} n L^{(n-1)d},
\end{align}
since $\dim \,\Ran\, \Theta_y \leq n \, L^{(n-1)d}.$
But $\abs{\Gamma} \le L^d$, so we conclude that \beq
\E_{\Gamma} \bigl( \tr \, \Chi_{I} (H_{\bom, \, \boldlambda})  \bigr)  \leq n \, \norm{\rho}_{\infty} \abs{I}  L^{nd}.
\eeq
\end{proof}

\begin{corollary}  \label{Wegner2}  Let 
  $\boldlambda_1 = \prod_{i = 1, \dots, n} \Lambda_{L_i} (a_i) \subseteq \nboxLa $ and $\boldlambda_2 = \prod_{i = 1, \dots, n} \Lambda_{L^\pr_i} (b_i) \subseteq \nboxLa $ be a pair of partially separated $n$-particle rectangles.   Then
\beq \label{wegboxes}
\P \Bl\{ d\,\Bl(\sigma (H_{\boldlambda_{1}}), \sigma (H_{\boldlambda_{2}}) \Br) \leq {\eps} \Br\} \leq 2 \, n \norm{\rho}_\infty \,{\eps} L^{2nd} \qtx{for all} \eps>0.
\eeq
\end{corollary}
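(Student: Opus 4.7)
The plan is to reduce the corollary to Theorem \ref{Wegner0} by exploiting the partial separation to find an independent variable block. By the definition of partial separation, either there exists some $i\in\{1,\dots,n\}$ with $\Lambda_{L_i}(a_i)\cap \Pi\boldlambda_2=\emptyset$, or there exists some $j$ with $\Lambda_{L^\pr_j}(b_j)\cap \Pi\boldlambda_1=\emptyset$. After possibly swapping $\boldlambda_1$ and $\boldlambda_2$, I may assume the first case, and set $\Gamma=\Lambda_{L_i}(a_i)$. Then $\Gamma$ is one of the one-particle factors of $\boldlambda_1$ (so Theorem \ref{Wegner0} applies to $\boldlambda_1$ with this $\Gamma$), while the random potential entering $H_{\boldlambda_2}$ depends only on $\{\omega_y:y\in \Pi\boldlambda_2\}$, which is disjoint from $\Gamma$. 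Hence $H_{\boldlambda_2}$ and $\{\omega_y\}_{y\in\Gamma}$ are independent.

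Next I would condition on $\bom_{\Gamma^c}=\{\omega_y\}_{y\notin \Gamma}$, which fixes the spectrum $\sigma(H_{\boldlambda_2})=\{E_j\}_{j=1}^{M}$ with $M=\dim\ell^2(\boldlambda_2)\le L^{nd}$ (counting with multiplicity). The event $d(\sigma(H_{\boldlambda_1}),\sigma(H_{\boldlambda_2}))\le \eps$ holds if and only if $\sigma(H_{\boldlambda_1})\cap I_j\ne\emptyset$ for some $j$, where $I_j:=[E_j-\eps,E_j+\eps]$; equivalently, $\tr\,\Chi_{I_j}(H_{\boldlambda_1})\ge 1$ for some $j$. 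Applying Chebyshev and then Theorem \ref{Wegner0} to $\boldlambda_1$ with the (now deterministic) interval $I_j$ gives
\beq
\P_\Gamma\{\tr\,\Chi_{I_j}(H_{\boldlambda_1})\ge 1\}\le \E_\Gamma\{\tr\,\Chi_{I_j}(H_{\boldlambda_1})\}\le n\norm{\rho}_\infty (2\eps)L^{nd}.
\eeq

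A union bound over $j=1,\dots,M$ then yields
\beq
\P_\Gamma\{d(\sigma(H_{\boldlambda_1}),\sigma(H_{\boldlambda_2}))\le \eps\}\le M\cdot 2n\norm{\rho}_\infty \eps L^{nd}\le 2n\norm{\rho}_\infty \eps L^{2nd},
\eeq
and taking expectation with respect to $\bom_{\Gamma^c}$ gives \eq{wegboxes}.

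I do not expect a genuine obstacle here; the only point requiring a little care is the bookkeeping for the independence: one must check that the factor $\Lambda_{L_i}(a_i)$ supplied by partial separation is indeed the one whose single-site variables are untouched by $H_{\boldlambda_2}$, so that the conditioning on $\bom_{\Gamma^c}$ freezes $\sigma(H_{\boldlambda_2})$ while leaving the Wegner estimate of Theorem \ref{Wegner0} available for $H_{\boldlambda_1}$. The loss of a factor $L^{nd}$ compared to the single-box Wegner bound is the price of the union bound over the (at most $L^{nd}$) eigenvalues of $H_{\boldlambda_2}$, and that is exactly what appears in \eq{wegboxes}.
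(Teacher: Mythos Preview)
Your proof is correct and follows essentially the same approach as the paper: condition on the randomness outside the separating factor $\Gamma$ to freeze $\sigma(H_{\boldlambda_2})$, then apply the single-box Wegner bound of Theorem~\ref{Wegner0} (the paper uses \eq{weggamma} directly, you use the trace bound plus Markov, which is the same thing) together with a union bound over the at most $L^{nd}$ eigenvalues.
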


\begin{proof}   Let $\boldlambda_1, \, \boldlambda_2$ be as above.  Since they are partially separated  there is,   $\Gamma= \Lambda_{L_k} (a_k)$, such that $\Gamma \cap \Pi  \boldlambda_2 =\emptyset$.  Note that  $H_{\boldlambda_2}$ depends only on   $\bom_{\Gamma^c}$, and thus $\sigma (H_{\boldlambda_2}) = \Bl \{ E_1{(\bom_{\Gamma^c})}, \dots, \E_{\abs{\boldlambda_2}}{(\bom_{\Gamma^c})} \Br\}$,
where  $ E_j{(\bom_{\Gamma^c})}$ is independent of $\bom_{\Gamma}$.
 Thus
\begin{align}\notag
& \P \Bl\{ d\,\Bl(\sigma (H_{\boldlambda_{1}}), \sigma (H_{\boldlambda_{2}}) \Br) \leq {\eps} \Br\} = \E_{\Gamma^c} \P_{\Gamma} \Bl\{ d\,\Bl(\sigma (H_{\boldlambda_{1}}), \sigma (H_{\boldlambda_{2}}) \Br) \leq {\eps} \Br\} \\
& = \E_{\Gamma^c} \P_{\Gamma}  \Bl\{ d\,\Bl(\sigma (H_{\boldlambda_{1}}), E_j {(\bom_{\Gamma^c})}\Br) \leq {\eps} \,\,\text{for some} \,\, j = 1, \dots, \abs{\boldlambda_2}\Br\} \\
& \leq \E_{\Gamma^c} \sum_{j = 1, \dots, \abs{\boldlambda_2} } \P_{\Gamma}  \Bl\{ d\,\Bl(\sigma (H_{\boldlambda_{1}}), E_j {(\bom_{\Gamma^c})} \Br) \leq {\eps}  \Br\} \leq \abs{\boldlambda_2} \pa{2n\norm{\rho}_\infty {\eps}L^{nd} },   \notag
\end{align}
using \eq{weggamma}. The estimate   \eq{wegboxes} follows since  $\abs{\boldlambda_2} \leq L^{nd}$.
\end{proof}

\subsection{Partially  and fully interactive boxes}

Following Chulaevsky and Suhov \cite{CS2,CS3}, we divide  boxes into partially and fully interactive.

\begin{definition}
An $n$-particle box $\nboxLa$ is said to be partially interactive \emph(PI\emph) if and only if there exists a nonempty proper subset $\cJ \subsetneq \setn$ such that
\beq 
\nboxLa \subseteq \cE_{\cJ}, \qtx{where} \cE_{\cJ} = \Bl\{ \boldx \in \ndspace \,\, \vert \,\, \min_{i \in \cJ, \, j \notin \cJ} \norm{x_i - x_j} > r_0   \Br\}.
\eeq
If $\nboxLa$ is not partially interactive, it is said to be  fully interactive \emph(FI\emph).
\end{definition}

\begin{remark}\label{piremark}
If the $n$-particle box $\nboxLu$ is partially interactive, by writing
 $\nboxLu = \boldlambda_{L}^{{\cJ}}(\bold{u}_{\cJ}) \times  \boldlambda_{L}^{{\cJ}^{c}}(\bold{u}_{\cJ^{c}} ) $  we are implicitly stating that   $\nboxLu \subseteq \cE_{\cJ}$ for some nonempty proper subset $\cJ \subsetneq \setn$. Moreover, we set $\sigma_\cJ =\sigma \left( H_{\boldlambda_{L}^{{\cJ}}(\bold{u}_{\cJ})}  \right)  $ and $\sigma_{\cJ^{c}}=\sigma \left( H_{ \boldlambda_{L}^{{\cJ}^{c}}(\bold{u}_{\cJ^{c}} )  }  \right)$.
  \end{remark}
 
 \begin{lemma} Let $\nboxLu = \boldlambda_{L}^{{\cJ}}(\bold{u}_{\cJ}) \times  \boldlambda_{L}^{{\cJ}^{c}}(\bold{u}_{\cJ^{c}} ) $ be a PI $n$-particle box. Then:

\begin{enumerate}
\item $ \Pi_{\cJ} \nboxLu  \,\, \bigcap \,\, \Pi_{\cJ^{c}} \nboxLu = \emptyset$. 
\item $H_{\nboxLa} = H_{\bold{\Lambda}_{L}^{{\cJ}}(\bold{u}_{\cJ})} \otimes I_{\bold{\Lambda}_{L}^{ {\cJ}^{c}}(\bold{u}_{\cJ^{c}})} + I_{\bold{\Lambda}_{L}^{{\cJ}}(\bold{u}_{\cJ})} \otimes H_{\bold{\Lambda}_{L}^{ {\cJ}^{c}}(\bold{u}_{\cJ^{c}})}$.

\item  For all $E \notin \sigma \left( \nboxLu   \right)  $ and $\bolda,\,\boldb \in \nboxLu$ we have
\begin{align}\label{GJ}
& \abs{G_{\nboxLu}  (E;\,\bolda,\,\boldb)}  \leq  \sum_{\lambda \in \sigma_{\cJ} } 
\abs{G_{ \boldlambda_{L}(\bold{u_{\cJ^{c}}})  } (E - \lambda;\,\bolda_{\cJ^{c}},\,\boldb_{\cJ^{c}})}, 	 \\
& \abs{G_{\nboxLu}  (E;\,\bolda,\,\boldb)} \leq \sum_{\mu \in \sigma_{\cJ^{c}} } \abs{G_{ \boldlambda_{L}(\bold{u_{\cJ}})  } (E - \mu;\,\bolda_{\cJ},\,\boldb_{\cJ})}.\label{GJc}
\end{align}
\end{enumerate}
 \end{lemma}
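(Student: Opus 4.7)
\emph{Part (i).}  I will show that if $i\in\cJ$ and $j\in\cJ^c$, then $\Lambda_L(u_i)\cap\Lambda_L(u_j)=\emptyset$.  Suppose for contradiction that some $y$ lies in this intersection.  Construct $\boldx=(x_1,\ldots,x_n)\in\nboxLu$ by setting $x_i=x_j=y$ and choosing $x_k\in\Lambda_L(u_k)$ arbitrarily for $k\neq i,j$.  Then $\norm{x_i-x_j}=0$, contradicting the hypothesis $\boldx\in\cE_\cJ$, which requires $\norm{x_i-x_j}>r_0\ge 0$.  Since $\Pi_\cJ\nboxLu=\bigcup_{i\in\cJ}\Lambda_L(u_i)$ and $\Pi_{\cJ^c}\nboxLu=\bigcup_{j\in\cJ^c}\Lambda_L(u_j)$, this yields the disjointness.

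\emph{Part (ii).}  I will decompose each of the three summands of $H_{\nboxLu}=-\Delta^{(n)}+V_{\bom}^{(n)}+U$ according to the factorization $\ell^2(\nboxLu)=\ell^2(\boldlambda_L^\cJ(\bold{u}_\cJ))\otimes\ell^2(\boldlambda_L^{\cJ^c}(\bold{u}_{\cJ^c}))$.  The Laplacian $\Delta^{(n)}=\sum_{i=1}^n\Delta_i$ splits along $\cJ$ vs.\ $\cJ^c$, and with Dirichlet boundary conditions on the product box this gives $-\Delta_{\nboxLu}=(-\Delta_{\boldlambda_L^\cJ})\otimes I+I\otimes(-\Delta_{\boldlambda_L^{\cJ^c}})$.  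The one-body potential $V_{\bom}^{(n)}(\boldx)=\sum_i\om_{x_i}$ splits likewise.  For the interaction $U(\boldx)=\sum_{i<j}\widetilde U(x_i-x_j)$, split the sum by whether the pair $\{i,j\}$ lies in $\cJ$, in $\cJ^c$, or straddles them; for straddling pairs the PI hypothesis forces $\norm{x_i-x_j}>r_0$ for every $\boldx\in\nboxLu$, and since $\widetilde U$ is supported in $\norm{\cdot}_\infty\le r_0$ these cross-terms vanish.  Combining the three pieces yields the claimed tensor-sum decomposition.

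\emph{Part (iii).}  Using (ii), I diagonalize $H_{\boldlambda_L^\cJ(\bold{u}_\cJ)}=\sum_{\lambda\in\sigma_\cJ}\lambda\,P_\lambda$, so that for $E\notin\sigma(H_{\nboxLu})$,
\beq
G_{\nboxLu}(E)=\sum_{\lambda\in\sigma_\cJ}P_\lambda\otimes G_{\boldlambda_L(\bold{u}_{\cJ^c})}(E-\lambda).
\eeq
Taking matrix elements in the product basis $\delta_{\bolda}=\delta_{\bolda_\cJ}\otimes\delta_{\bolda_{\cJ^c}}$ gives
\beq
G_{\nboxLu}(E;\bolda,\boldb)=\sum_{\lambda\in\sigma_\cJ}\scal{\delta_{\bolda_\cJ},P_\lambda\delta_{\boldb_\cJ}}\,G_{\boldlambda_L(\bold{u}_{\cJ^c})}(E-\lambda;\bolda_{\cJ^c},\boldb_{\cJ^c}).
\eeq
Since $P_\lambda$ is an orthogonal projection, Cauchy--Schwarz yields $\abs{\scal{\delta_{\bolda_\cJ},P_\lambda\delta_{\boldb_\cJ}}}\le\norm{P_\lambda\delta_{\bolda_\cJ}}\norm{P_\lambda\delta_{\boldb_\cJ}}\le 1$, and the triangle inequality produces \eqref{GJ}.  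The second estimate \eqref{GJc} follows by the symmetric argument, diagonalizing the $\cJ^c$ factor instead.

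\emph{Expected obstacle.}  There is no serious technical obstacle; the crux is simply recognizing that the PI condition kills precisely the interaction terms that would couple the two factors, thereby upgrading the geometric product structure to a true tensor-sum structure on the Hamiltonian.  All subsequent steps are standard manipulations of spectral calculus.
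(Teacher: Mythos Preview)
Your proposal is correct and follows essentially the same approach as the paper: parts (i) and (ii) the paper simply declares ``follow from the definition of a PI box,'' whereas you spell out the details accurately; for part (iii) the paper diagonalizes both factors simultaneously to write $G_{\nboxLu}(E)=\sum_{\lambda,\mu}(E-\lambda-\mu)^{-1}\Pi^{(\cJ)}_\lambda\otimes\Pi^{(\cJ^c)}_\mu$, while you diagonalize only one factor at a time to obtain $G_{\nboxLu}(E)=\sum_{\lambda}P_\lambda\otimes G_{\boldlambda_L(\bold{u}_{\cJ^c})}(E-\lambda)$, which is an equivalent (and arguably more direct) route to \eqref{GJ} and \eqref{GJc}.
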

 
 \begin{proof} (i) and (ii) follows from the definition of a PI box.  To prove (iii), given $\lambda \in  
\sigma_\cJ$ we  let $\Pi^{(\cJ)}_\lambda$ denote the orthogonal projection onto the corresponding eigenspace.  
  If $E \notin \sigma \left( \nboxLu   \right)  $, it follows from (ii) that
   \begin{align}
 G_{\nboxLu}  (E)= \sum_{\lambda \in \sigma_{\cJ}, \mu \in \sigma_{\cJ^{c}}  }  \pa{E-\lambda -\mu}^{-1}  \Pi^{(\cJ)}_\lambda  \otimes\Pi^{(\cJ^c)}_\mu,
 \end{align}
which implies \eq{GJ} and \eq{GJc}.
 \end{proof}

As a consequence, we get

\begin{lemma}\label{PIsuit}
 Let $\nboxlu = \boldlambda_{\ell}^{{\cJ}}(\bold{u}_{\cJ}) \times  \boldlambda_{\ell}^{{\cJ}^{c}}(\bold{u}_{\cJ^{c}} ) $ be a PI $n$-particle box and  $E \in \R$. If $\ell$ is sufficiently large, the following holds:
\begin{enumerate}

\item   Given $\theta>0$, suppose  $\boldlambda_{\ell}(\bold{u_{\cJ}})$ is $(\theta, \,E - \mu  )$-suitable  for every $\mu \in \sigma_{\cJ^c}$ and  $\boldlambda_{\ell}(\bold{u_{\cJ^{c}}})$ is $(\theta, \,E - \lambda  )$-suitable  for every $\lambda \in \sigma_{\cJ}$.  Then  $\nboxlu$ is $\left(\tfrac{\theta}{2}, \, E \right)$-suitable.

\item   Given $m>0$, suppose  $\boldlambda_{\ell}(\bold{u_{\cJ}})$ is $(m,\, E - \mu)$-regular  for every $\mu \in \sigma_{\cJ^c}$ and  $\boldlambda_{\ell}(\bold{u_{\cJ^{c}}})$ is $(m,\, E - \lambda)$-regular  for every $\lambda \in \sigma_{\cJ}$.  Then  $\nboxlu$ is $\left(m-\tfrac{100nd \log\ell}{\ell}, \, E \right)$-regular.

\item   Given $0 < \zeta^{\prime} < \zeta < 1$, suppose  $\boldlambda_{\ell}(\bold{u_{\cJ}})$ is $(\zeta,\, E - \mu )$-SES  for every $\mu \in \sigma_{\cJ^c}$ and  $\boldlambda_{\ell}(\bold{u_{\cJ^{c}}})$ is $(\zeta,\, E - \lambda)$-SES  for every $\lambda \in \sigma_{\cJ}$.  Then  $\nboxlu$ is $\left(\zeta^{\prime}, \, E \right)$-SES.

\end{enumerate}
\end{lemma}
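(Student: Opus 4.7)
The plan is to exploit the Green's function decomposition established in the preceding lemma for PI boxes, namely \eq{GJ} and \eq{GJc}. Write each off-diagonal pair as $\bolda=(\bolda_{\cJ},\bolda_{\cJ^{c}})$ and $\boldb=(\boldb_{\cJ},\boldb_{\cJ^{c}})$ with respect to $\nboxlu = \boldlambda_{\ell}^{\cJ}(\bold{u}_{\cJ})\times \boldlambda_{\ell}^{\cJ^{c}}(\bold{u}_{\cJ^{c}})$, and observe $\norm{\bolda-\boldb}=\max\set{\norm{\bolda_{\cJ}-\boldb_{\cJ}},\norm{\bolda_{\cJ^{c}}-\boldb_{\cJ^{c}}}}$. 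Hence if $\norm{\bolda-\boldb}\ge\ell/100$ then at least one of the two sub-separations equals $\norm{\bolda-\boldb}$. Since the hypotheses and the bounds \eq{GJ}/\eq{GJc} are symmetric in the exchange $\cJ \leftrightarrow \cJ^{c}$, it suffices to treat the case $\norm{\bolda_{\cJ^{c}}-\boldb_{\cJ^{c}}}=\norm{\bolda-\boldb}$ and apply \eq{GJ}.

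The first step is to verify $E\notin\sigma(H_{\nboxlu})$, which is necessary for \eq{GJ} to be meaningful. By part (ii) of the preceding lemma $\sigma(H_{\nboxlu})=\set{\lambda+\mu:\lambda\in\sigma_{\cJ},\,\mu\in\sigma_{\cJ^{c}}}$, and each hypothesis in (i)--(iii) already includes $E-\lambda\notin\sigma_{\cJ^{c}}$ for every $\lambda\in\sigma_{\cJ}$, forcing $E\notin\sigma(H_{\nboxlu})$. Then \eq{GJ} yields
\[
\abs{G_{\nboxlu}(E;\bolda,\boldb)}\le\abs{\sigma_{\cJ}}\max_{\lambda\in\sigma_{\cJ}}\abs{G_{\boldlambda_{\ell}(\bold{u}_{\cJ^{c}})}(E-\lambda;\bolda_{\cJ^{c}},\boldb_{\cJ^{c}})},
\]
with the trivial bound $\abs{\sigma_{\cJ}}\le\abs{\boldlambda_{\ell}^{\cJ}(\bold{u}_{\cJ})}\le\ell^{\abs{\cJ}d}\le\ell^{nd}$ once eigenvalues are counted with multiplicity.

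Each of (i)--(iii) then reduces to a deterministic arithmetic check. For (i), the suitability hypothesis bounds the maximum by $\ell^{-\theta}$, producing $\ell^{\abs{\cJ}d-\theta}$, which is $\le \ell^{-\theta/2}$ whenever $\theta\ge 2nd$---the regime implicit in the paper, since Theorem~\ref{maintheorem} takes $\theta>8Nd$. For (ii), regularity bounds the maximum by $e^{-m\norm{\bolda_{\cJ^{c}}-\boldb_{\cJ^{c}}}}=e^{-m\norm{\bolda-\boldb}}$, and the polynomial prefactor $\ell^{nd}=e^{nd\log\ell}$ is absorbed using $\norm{\bolda-\boldb}\ge\ell/100$, since $(100nd\log\ell/\ell)\,\norm{\bolda-\boldb}\ge nd\log\ell$; this yields the claimed $(m-100nd\log\ell/\ell,E)$-regular bound. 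For (iii), the SES hypothesis bounds the maximum by $e^{-\ell^{\zeta}}$, and the prefactor $\ell^{nd}$ is absorbed because $\ell^{\zeta}-\ell^{\zeta'}\ge nd\log\ell$ for $\ell$ large, thanks to $0<\zeta'<\zeta$ making $\ell^{\zeta}-\ell^{\zeta'}$ grow faster than any logarithm.

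There is no real obstacle here: once \eq{GJ}/\eq{GJc} and the tensor decomposition are in hand, the proof is a routine counting of eigenvalues of the partial Hamiltonians combined with the trivial polynomial-versus-exponential gain. The only mildly delicate point is (i), where the polynomial loss $\ell^{\abs{\cJ}d}$ must be absorbed in the gap between $\theta$ and $\theta/2$, which is precisely what justifies the paper's standing assumption that $\theta$ is large; for (ii) the absorption is automatic (the hypothesis $\norm{\bolda-\boldb}\ge\ell/100$ already supplies exactly the right amount of slack), and for (iii) the absorption is exactly what ``$\ell$ sufficiently large'' buys.
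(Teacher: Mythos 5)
Your proof is correct and follows essentially the same route as the paper: decompose via \eq{GJ}/\eq{GJc}, note that one of the two block separations realizes $\norm{\bolda-\boldb}\ge\ell/100$, and absorb the eigenvalue count $\le \ell^{nd}$ into the decay. You are in fact slightly more careful than the paper on two points worth keeping: the absorption in (i) genuinely needs $\theta\ge 2nd$ (not merely $\ell$ large), which is guaranteed by the standing assumption $\theta>8Nd$, and in (ii) one must choose the block on which the separation \emph{equals} $\norm{\bolda-\boldb}$, not just exceeds $\ell/100$, to get the full exponential rate.
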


\begin{proof} We prove (i), the proofs of (ii) and (ii) are similar.
Given $\bolda, \boldb \in \Nboxlu$ with $\norm{\bolda - \boldb} \geq \tfrac{\ell}{100},$ then either we have $\norm{\bolda_{\cJ} - \boldb_{\cJ}} \geq \tfrac{\ell}{100},$ or $\norm{\bolda_{\cJ^{c}} - \boldb_{\cJ^{c}}} \geq \tfrac{\ell}{100}.$ Without loss of generality, we suppose that $\norm{\bolda_{\cJ} - \boldb_{\cJ}} \geq \tfrac{\ell}{100}.$ Then
\begin{align}
\abs{ G_{\Nboxlu} (E;\,\bolda,\,\boldb) } &\leq \sum_{ \mu \in \sigma_{\cJ^c} } 
\abs{ G_{ \boldlambda_{\ell}(\bold{u_{\cJ}})} (E - \mu;\,\bolda_{\cJ},\,\boldb_{\cJ}) }		 \\
&\leq \abs{\boldlambda_{\ell}(\bold{u_{\cJ^{c}}})} \ell^{-\theta} \leq \ell^{nd-\theta} \leq \ell^{-\tfrac{\theta}{2}},
\end{align}
provided $\ell$ is sufficiently large.
\end{proof}

\begin{definition}  Let $\nboxLa$ and $\nboxLb$ be a pair of  $n$-particle boxes.  We say that 
$\nboxLa$ and $\nboxLb$ are $L$-distant if  $\max \bigl\{\dist \, ( \,  \boldb, \, \cS_{\bolda}^{n} ) , \,\, \dist \, ( \, \bolda, \, \cS_{\boldb}^{n})   \bigr\} \ge  2nL$.
\end{definition}

For fully interactive boxes we have the following lemma.
\begin{lemma} \label{part1prop2} Let   
 $\nboxLa$ and $\nboxLb$ be a pair of FI $n$-particle boxes, where $L \ge 2(n-1)r_0$.  Then
 $\nboxLa$ and $\nboxLb$ are fully separated if
\beq\label{FIsep}
\max_{x \in \cS_{ \bolda},y \in  \cS_{\boldb}} \norm{x-y} \ge 2nL .
\eeq
In particular,  $L$-distant FI $n$-particle boxes are fully separated. 
\end{lemma}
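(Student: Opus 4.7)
The plan is to extract from the FI condition a \emph{cluster diameter} bound on $\cS_\bolda$ and $\cS_\boldb$, and then combine these bounds with the distance hypothesis \eq{FIsep} via the triangle inequality to force every pair $\Lambda_L(a_i), \Lambda_L(b_j)$ to be disjoint.

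The crux is the diameter bound. I would first show that, for any FI $n$-particle box $\nboxLa$, the auxiliary graph $G_\bolda$ on $\setn$ defined by $i \sim j$ iff $\norm{a_i - a_j} \le L + r_0$ is connected. Indeed, if some nonempty $\cJ \subsetneq \setn$ were a union of connected components of $G_\bolda$, then $\norm{a_i - a_j} > L + r_0$ for every $i \in \cJ$ and $j \in \cJ^c$; combined with $\norm{x_i - a_i}, \norm{x_j - a_j} \le L/2$ valid for every $\x \in \nboxLa$, this would force $\norm{x_i - x_j} > r_0$ for all such $\x, i, j$, i.e., $\nboxLa \subseteq \cE_\cJ$, contradicting the FI hypothesis. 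Since a connected graph on $n$ vertices has graph-theoretic diameter at most $n-1$, telescoping the triangle inequality along a shortest path yields
\[
\diam(\cS_\bolda) \le (n-1)(L + r_0) \le (n-1)L + \tfrac{L}{2},
\]
where the last inequality uses the hypothesis $L \ge 2(n-1)r_0$. The same bound holds for $\cS_\boldb$.

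By \eq{FIsep}, choose $i_0, j_0$ realizing $\norm{a_{i_0} - b_{j_0}} \ge 2nL$. For arbitrary $i, j \in \setn$, two applications of the triangle inequality together with the diameter bounds yield
\[
\norm{a_i - b_j} \ge \norm{a_{i_0} - b_{j_0}} - \norm{a_i - a_{i_0}} - \norm{b_{j_0} - b_j} \ge 2nL - 2\bigl((n-1)L + \tfrac{L}{2}\bigr) = L,
\]
so $\Lambda_L(a_i) \cap \Lambda_L(b_j) = \emptyset$ for every $i, j$, i.e., $\Pi\nboxLa \cap \Pi\nboxLb = \emptyset$, which is full separation. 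For the \emph{in particular} claim, $L$-distantness says $\max_j \dist(b_j, \cS_\bolda) \ge 2nL$ or $\max_i \dist(a_i, \cS_\boldb) \ge 2nL$; in either case one immediately produces a pair $(i_0, j_0)$ realizing \eq{FIsep}.

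The only nonroutine step is the diameter bound in paragraph two — recognizing that the FI property, stated as a negation ranging over all bipartitions of $\setn$, is equivalent to connectivity of the $(L+r_0)$-clustering graph $G_\bolda$. This is what converts a local interaction constraint into a global geometric clustering bound on $\cS_\bolda$, and the hypothesis $L \ge 2(n-1)r_0$ is tuned precisely so that the $(n-1) r_0$ slack in this bound is absorbed by $L/2$. Once this is in place the remaining steps amount to triangle-inequality bookkeeping.
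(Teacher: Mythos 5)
Your proof is correct and follows essentially the same route as the paper's: both reduce the claim to the diameter bound $\diam(\cS_{\bolda}) \le (n-1)(L+r_0)$ for FI boxes and then conclude by the triangle inequality, with the hypothesis $L \ge 2(n-1)r_0$ absorbing the $(n-1)r_0$ slack. The only difference is that you actually justify the diameter bound (via connectivity of the $(L+r_0)$-clustering graph on $\setn$), a step the paper asserts without proof.
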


\begin{proof}   If  a box $\nboxLa$ is FI, we have  $\max_{x,x^\pr \in \cS_{ \bolda}} \norm{x-x^\pr}\le (n-1)(L+r_0)$.  Thus, if  $\nboxLa$ and $\nboxLb$ are FI, and $L \ge 2(n-1)r_0$, the condition \eq{FIsep} implies
\beq
\dist(\cS_{ \bolda}, \,\,\cS_{\boldb })=\min_{x \in \cS_{ \bolda},y \in  \cS_{\boldb}} \norm{x-y} \ge 2nL - 2(n-1)(L+r_0)> L,
\eeq
so $\nboxLa$ and $\nboxLb$ are fully separated. 

Since
\begin{align}
\max \bigl\{\dist \, ( \,  \boldx, \, \cS_{\boldy}^{n} ) , \,\, \dist \, ( \, \boldy, \, \cS_{\boldx}^{n})   \bigr\} \le \max_{x \in \cS_{ \bolda},y \in  \cS_{\boldb}} \norm{x-y} , 
\end{align} 
we conclude that  $L$-distant FI $n$-particle boxes are fully separated. 
\end{proof}

\subsection{Resonant boxes}
\begin{definition} Let $\boldlambda = \prod_{i = 1, \dots, n} \Lambda_{L_i} (a_i)$ be an $n$-particle box, $L = \min_{i =1,.., n}\set{L_i}$, and $E\in \R$.
\begin{enumerate}
\item Let $s > 0$. Then  $\boldlambda$ is called  $(E,s)$-suitably resonant if and only if $\dist \Bl( \sigma \bigl( H_{\boldlambda}^{(n)} \bigr) , E \Br) < L^{-s}$.  Otherwise, $\boldlambda$ is said to be $(E,s)$-suitably nonresonant.

\item Let $\beta \in (0, \, 1)$.  Then $\boldlambda$ is called $(E,\beta)$-resonant if and only if $\dist \Bl( \sigma \bigl( H_{\boldlambda}^{(n)} \bigr) , E \Br) < \tfrac{1}{2}  e^{-L^{\beta}} $. Otherwise, $\boldlambda$ is said to be $(E,\beta)$-nonresonant.
\end{enumerate}

\end{definition}

\subsection{Suitable Cover}

We now introduce the concept of a  suitable cover as in \cite[Definition~3.12]{GKber}, adapted to the discrete case.

\begin{definition}\label{defcov} Let $\nboxLx$ be an $n$-particle box, and $\ell < L$.
The \emph{suitable $\ell$-covering} of $\nboxLx$ is
the collection of $n$-particle boxes  
\begin{align}\label{standardcover}
{\mathcal C}_{L,\ell}^{(n)} \left(\boldx \right)= \{ \mathbf{\Lambda}_{\ell}^{(n)}(\bolda)\}_{\bolda \in  \Xi_{L,\ell}^{(n)}},
\end{align}
where
\beq  \label{bbG}
 \Xi_{L,\ell}^{(n)}:= \{ \boldx + \alpha\ell  \Z^{nd}\}\cap \widehat{\boldlambda}_L^{(n)}\quad 
\text{with}  \quad \alpha =\max \br{\tfrac {3} {5},\tfrac {4} {5}}   \cap \set{\tfrac {L-\ell}{2 \ell k}; \, k \in \N }.
\eeq
\end{definition}

We recall \cite[Lemma~3.13]{GKber}, which we rewite in our context.

\begin{lemma}\label{lemcovering} Let $\ell \le \frac  L   6$. Then for  every $n$-particle box  $\nboxLx$ the  suitable $\ell$-covering
  ${\mathcal C}_{L,\ell}^{(n)} \left(\boldx \right) $  satisfies
  \begin{align}\label{nestingproperty}
&\nboxLx=\bigcup_{\bolda \in  \Xi_{L,\ell}^{(n)}}  \mathbf{\Lambda}_{\ell}^{(n)}(\bolda),\\ \label{covproperty}
&\text{for $\boldb \in \nboxLx$ there is $\mathbf{\Lambda}_{\ell}^{(n,\boldb)} \in {\mathcal C}^{(n)}_{L,\ell} \left(\boldx \right)$ with} \; 
 \boldlambda_{\tfrac{\ell}{10}} ^{(n)}(\boldb) \, \cap \, \mathbf{\Lambda}_{L}^{(n)}(\boldx)  \subseteq \mathbf{\Lambda}_{\ell}^{(n,\boldb)}
,\\
\label{freeguarantee}
&\boldlambda^{(n)}_{\frac{\ell}{5}}(\bolda)\cap \boldlambda^{(n)}_{\ell}(\boldb)=\emptyset
\quad \text{for all} \quad \bolda, \boldb\in \boldx + \alpha\ell  \Z^{nd}, \  \bolda\ne \boldb,\\ \label{number}
&  \pa{\tfrac{L} {\ell}}^{nd}\le \# \Xi_{L,\ell}^{(n)}= \pa{ \tfrac{L-\ell} {\alpha \ell}+1}^{nd }\le   \pa{\tfrac{2L} {\ell}}^{nd}.
\end{align}
Moreover,  given $\bolda \in  \boldx + \alpha\ell  \Z^{nd}$ and $k \in \N$, it follows that
\beq \label{nesting}
\mathbf{\Lambda}^{(n)}_{(2  k \alpha  +1) \ell}(\bolda)= \bigcup_{\boldb \in  \{  \boldx + \alpha\ell  \Z^{nd}\}\cap \widehat{\mathbf{\Lambda}}^{(n)}_{(2k \alpha  +1) \ell}(\boldb) } \mathbf{\Lambda}^{(n)}_{\ell}(\boldb).
\eeq
\end{lemma}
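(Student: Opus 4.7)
The plan is to exploit the product structure under the $\ell^\infty$ norm: both $\nboxLx=\prod_{i=1}^{n}\Lambda_L(x_i)$ and the grid $\Xi_{L,\ell}^{(n)}=\{\boldx+\alpha\ell\Z^{nd}\}\cap \widehat{\boldlambda}_L^{(n)}(\boldx)$ factor as products of $nd$ one-dimensional factors. Consequently, after translation by $\boldx$, each of the four claims reduces to its $1$-dimensional analogue for the intervals $\{[\alpha\ell j-\ell/2,\,\alpha\ell j+\ell/2]\}_{|j|\le k}$ inside $[-L/2,L/2]$, where $k=(L-\ell)/(2\alpha\ell)\in\N$ and $\alpha\in[3/5,4/5]$. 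I would therefore fix $n=d=1$ and $\boldx=0$ without loss of generality and verify each item in this reduced setting.

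Before that, one must confirm $\alpha$ is well defined. Under the hypothesis $\ell\le L/6$, the interval $[5(L-\ell)/(8\ell),\,5(L-\ell)/(6\ell)]$ has length at least $25/24>1$ and hence contains a positive integer $k$; for any such $k$ the ratio $(L-\ell)/(2\ell k)$ lies in $[3/5,4/5]$, so the intersection on the right of \eq{bbG} is nonempty. By construction $2k\alpha\ell=L-\ell$, which places the extreme grid points $\pm(L-\ell)/2$ (and their $\ell$-intervals) inside $[-L/2,L/2]$.

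The remaining verifications are routine interval arithmetic. For \eq{nestingproperty}, the spacing $\alpha\ell\le 4\ell/5<\ell$ gives a cover. For \eq{covproperty}, take $\bolda$ the grid point closest to $\boldb$: away from the boundary $\|\boldb-\bolda\|\le\alpha\ell/2\le 2\ell/5$, while near the boundary the constraint $\boldsymbol{z}\in \nboxLx$ prevents $\boldsymbol{z}$ from escaping past $\bolda\pm\ell/2$; in either case $\|\boldsymbol{z}-\bolda\|\le \ell/20+2\ell/5=9\ell/20<\ell/2$. For \eq{freeguarantee}, distinct grid points are $\ell^\infty$-separated by at least $\alpha\ell\ge 3\ell/5$, which equals the sum of radii $\ell/10+\ell/2$, giving disjointness. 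For \eq{number} the count per axis is $2k+1=(L-\ell)/(\alpha\ell)+1$, which is $\ge L/\ell$ (using $\alpha<1$) and $\le 5L/(3\ell)\le 2L/\ell$ (using $\alpha\ge 3/5$); raising to the $nd$-th power gives the stated bounds. Finally, \eq{nesting} follows because for arbitrary $k\in\N$ the subgrid in $\widehat{\boldlambda}^{(n)}_{(2k\alpha+1)\ell}(\bolda)$ is, per axis, $\{\bolda+\alpha\ell j:\,|j|\le k\}$, and the corresponding $\ell$-intervals tile $[\bolda-k\alpha\ell-\ell/2,\,\bolda+k\alpha\ell+\ell/2]=\mathbf{\Lambda}^{(n)}_{(2k\alpha+1)\ell}(\bolda)$. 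The only mild obstacles are the existence of $\alpha$ under $\ell\le L/6$ and the boundary case of \eq{covproperty}; once those are handled, the rest is driven entirely by $\alpha\in[3/5,4/5]$.
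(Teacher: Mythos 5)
Your proof is essentially correct and is the natural direct verification; note that the paper itself offers no proof of this lemma, simply importing it from \cite[Lemma~3.13]{GKber}, so a from-scratch check like yours is exactly what is called for. Your reduction to one dimension via the product structure, the existence argument for $\alpha$ (the interval of admissible $k$ has length $\ge \tfrac{25}{24}>1$ when $\ell\le L/6$), the identity $2k\alpha\ell=L-\ell$, and the verifications of \eq{nestingproperty}, \eq{covproperty}, \eq{number} and \eq{nesting} are all sound (in the boundary case of \eq{covproperty} your numerical bound $9\ell/20$ is not quite what comes out, but the containment you state — that $\boldsymbol{z}\in\nboxLx$ cannot pass $\bolda\pm\tfrac{\ell}{2}$ — is the correct and sufficient observation).

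The one step that needs tightening is \eq{freeguarantee}. The boxes here are \emph{closed}: $\boldlambda^{(n)}_{\ell/5}(\bolda)$ and $\boldlambda^{(n)}_{\ell}(\boldb)$ are the lattice points within $\ell^\infty$-distance $\tfrac{\ell}{10}$ and $\tfrac{\ell}{2}$ of the respective centers. If the separation of centers merely \emph{equals} the sum of radii $\tfrac{\ell}{10}+\tfrac{\ell}{2}=\tfrac{3\ell}{5}$, the corresponding real boxes touch along a face, and the lattice boxes may share a point; so ``separation $\ge\tfrac{3\ell}{5}=$ sum of radii, giving disjointness'' is not a valid inference as written. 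What saves the claim is that $\alpha>\tfrac{3}{5}$ \emph{strictly}: your own existence argument shows the half-open interval $\bigl[\tfrac{5(L-\ell)}{8\ell},\tfrac{5(L-\ell)}{6\ell}\bigr)$ has length $>1$ and hence contains an integer $k$, for which $\tfrac{L-\ell}{2\ell k}>\tfrac{3}{5}$; since $\alpha$ is the maximum of the admissible ratios, $\alpha\ell>\tfrac{3\ell}{5}$, and distinct grid points are then separated in some coordinate by strictly more than the sum of radii, which does give disjointness of the closed boxes. With that one-line repair the argument is complete.
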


Note that $\boldlambda_{\ell}^{(n,\boldb)}$  does not denote a box  centered at $\boldb$, just some box in ${\mathcal C}^{(n)}_{L,\ell} \left(\boldx \right)$ satisfying \eq{covproperty}.     By  $\boldlambda_{\ell}^{(n,\boldb)}$, or just $\boldlambda_{\ell}^{(\boldb)}$, we willl always mean such a box.

\begin{remark}  It suffices to require  $\alpha \in\br{\tfrac {3} {5},\tfrac {4} {5}}   \cap \set{\tfrac {L-\ell}{2 \ell k}; \, k \in \N }$ in Definition~\ref{defcov}.  We specified $\alpha=\alpha_{L,\ell}$ for convenience, so there is no ambiguity  in the definition of ${\mathcal C}_{L,\ell}^{(n)} \left(\boldx \right) $.
\end{remark}

\begin{lemma}\label{lembadregions}  
Let $\NboxLx$ be an $N$-particle box and   $\ell < \frac L 6$.   Define $\set{k_j}_{j\in \N} \subset \N$   by    
\begin{gather}\notag
k_1=6\sqtx{and}  k_j =\min \set{k \in \N\; |\;  k  > k_{j-1} + 6 + 2 \pa{N\alpha}^{-1}}\sqtx{for} j=2,3,\ldots,\\
\text{so}\quad k_j \le 6 + (j-1)\pa{7 + 2\pa{N\alpha}^{-1}} \le  j \pa{7 + 2\pa{N\alpha}^{-1}} ,\label{defki}
\end{gather}
and set
\beq\label{Kj}
K_j= 2k_j N \alpha +1 \le 17jN \qtx{for} j=1,2\ldots.
\eeq
Then, given   $\bolda_{s}\in \Xi_{L,\ell}^{(N)}(\boldx)$,  $s=1,2,\ldots, S$,
  we can find $\boldu_{t}\in \Xi_{L,\ell}^{(N)}(\boldx)$ with $t=1,2,\ldots, T \le SN^N$, and $j_t \in \set{1,2,\ldots, SN^N}$  such that
\beq\label{disjointboxes}
\widetilde{\Upsilon}= \widetilde{\Upsilon}_{L,\ell}^{(N)}\pa{\set{\bolda_s}_{s=1}^S}:=\bigcup_{t=1}^T  \boldlambda^{(N)}_{K_{j_t}\ell}(\boldu_{t})\subseteq \NboxLx,
\eeq
\beq\label{distboxes}
\dist\set{\boldlambda^{(N)}_{K_{j_t}\ell}(\boldu_{t}), \boldlambda^{(N)}_{K_{j_{t\pr}}\ell}(\boldu_{t^\pr})}> 1\qtx{for} t\ne {t^\pr},
\eeq
 \beq\label{distboxes2}
 \partial_+ \boldlambda^{(N)}_{K_{j_t}\ell}(\boldu_{t}) \cap  \widetilde\Upsilon =\emptyset \qtx{for} t=1,2,\ldots, T,
 \eeq
 \beq\label{sumKj}
 \sum_{t=1}^{T} K_{j_t}\le  17S N^{N+1}
 \eeq
and  for  $\boldy  \in  \NboxLx \setminus \widetilde\Upsilon $ and   $\boldlambda_{\ell}^{(N,\boldy)} \in {\mathcal C}_{L,\ell}^{(N)} \left(\boldx \right)$ as in   \eq{covproperty}, the boxes  $\boldlambda_{\ell}^{(N,\boldy)}$ and $\boldlambda_{\ell}(\bolda_s)$ are $\ell$-distant  for $s=1,2,\ldots, S$.
\end{lemma}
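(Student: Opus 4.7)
The plan is to build $\widetilde\Upsilon$ by a two-stage cluster-merging construction. First I extract up to $R:=SN^N$ candidate level-one boxes that collectively cover every configuration $\boldy$ whose associated cover box could fail $\ell$-distance against some $\boldlambda_\ell(\bolda_s)$; then I iteratively merge any two current boxes within $\ell^\infty$-distance $1$, assigning the merged box a level equal to the sum of the predecessors' levels.

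For the first stage I unpack the Hausdorff distance. The cover box $\boldlambda_\ell^{(N,\boldy)}=\boldlambda_\ell^{(N)}(\boldb)$ from \eqref{covproperty} has center $\boldb$ within $\tfrac{9\ell}{20}$ of $\boldy$; since $d_H$ satisfies the triangle inequality and is dominated by the $\ell^\infty$ norm on tuples, the failure of $\ell$-distance (i.e.\ $d_H(\bolda_s,\boldb)<2N\ell$) forces $d_H(\bolda_s,\boldy)<2N\ell+\tfrac\ell 2$, which in turn forces each coordinate $y_i$ into $\bigcup_{j=1}^{N}\Lambda_{(4N+1)\ell}(a_{s,j})$. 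Hence the set of bad $\boldy$'s is contained in
\[
\bigcup_{s=1}^{S}\ \bigcup_{\sigma\colon\{1,\dots,N\}\to\{1,\dots,N\}}\ \prod_{i=1}^{N}\Lambda_{(4N+1)\ell}(a_{s,\sigma(i)}),
\]
a union of at most $SN^{N}$ product boxes. For each $(s,\sigma)$ whose product box meets $\NboxLx$ I choose $\boldv_{s,\sigma}\in\Xi_{L,\ell}^{(N)}(\boldx)$ so that $\boldlambda^{(N)}_{K_{1}\ell}(\boldv_{s,\sigma})$ contains that intersection, which is possible because $K_{1}=12N\alpha+1\ge 4N+1+\alpha$ exceeds the product-box side $(4N+1)\ell$ by more than the grid-rounding slack $\alpha\ell$.

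In the second stage I initialize all these boxes at level $1$ and repeat the following: while two current boxes $\boldlambda^{(N)}_{K_{j_{1}}\ell}(\boldu_{1}),\boldlambda^{(N)}_{K_{j_{2}}\ell}(\boldu_{2})$ have $\ell^{\infty}$-distance at most $1$, delete them and insert a single box $\boldlambda^{(N)}_{K_{j_{1}+j_{2}}\ell}(\boldu_{*})$, where $\boldu_{*}\in\Xi_{L,\ell}^{(N)}(\boldx)$ is a grid point within $\tfrac{\alpha\ell}{2}$ of the midpoint of $\boldu_{1},\boldu_{2}$. The key arithmetic fact is
\[
K_{j_{1}+j_{2}}\,\ell\ \ge\ K_{j_{1}}\,\ell+K_{j_{2}}\,\ell+\alpha\ell+1,
\]
which holds because $k_{j}$ has constant increment $\Delta$ strictly greater than $6+2(N\alpha)^{-1}$, so $k_{j_{1}+j_{2}}-k_{j_{1}}-k_{j_{2}}=\Delta-6>2(N\alpha)^{-1}$ and hence $K_{j_{1}+j_{2}}-K_{j_{1}}-K_{j_{2}}=2N\alpha(\Delta-6)-1>3$, which comfortably dominates $\alpha+1/\ell$. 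The procedure terminates because each merge strictly decreases the box count; the properties \eqref{distboxes} and \eqref{distboxes2} are exactly the termination condition, and $\widetilde\Upsilon\subseteq\NboxLx$ is preserved inductively through the merges.

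Since the level is additive under merging, the level $j_{t}$ of any surviving cluster equals the number of initial boxes it absorbed, so $T\le\sum_{t}j_{t}=R\le SN^{N}$, each $j_{t}\le SN^{N}$, and by \eqref{Kj} $\sum_{t}K_{j_{t}}\le 17N\sum_{t}j_{t}\le 17SN^{N+1}$. Any $\boldy\in\NboxLx\setminus\widetilde\Upsilon$ escapes all initial level-one boxes and hence all product boxes of the first stage, so $d_{H}(\bolda_{s},\boldy)\ge 2N\ell+\tfrac{\ell}{2}$, whence $d_{H}(\bolda_{s},\boldb)\ge 2N\ell$, which is the required $\ell$-distance. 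The main obstacle is the bookkeeping at the merge step: verifying that the prescribed growth of $k_{j}$ leaves exactly enough slack to accommodate both predecessor boxes plus the lattice-rounding error, uniformly in the merging levels, is what dictates the particular form of the recurrence defining $k_{j}$.
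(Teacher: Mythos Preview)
Your two-stage strategy is exactly the paper's: cover the set of ``dangerous'' configurations by at most $SN^{N}$ boxes of side $K_{1}\ell$ indexed by $\cS_{\bolda_s}^{N}$, then merge clusters, with the level recording how many initial boxes have been absorbed. Your write-up of the first stage (via the Hausdorff-distance unpacking) and the level-additivity bookkeeping yielding $\sum_t j_t \le SN^{N}$ match the paper, and your arithmetic $K_{j_1+j_2}-K_{j_1}-K_{j_2}=2N\alpha(\Delta-6)-1>3$ is correct.

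There is one concrete slip in the merge step. Centering the merged box at the midpoint $(\boldu_1+\boldu_2)/2$ does \emph{not} in general swallow both predecessors when $j_1\ne j_2$: in one coordinate, covering both intervals from their midpoint requires half-side at least $\tfrac12|u_1-u_2|+\max(K_{j_1},K_{j_2})\ell/2$, and with $|u_1-u_2|$ near $\tfrac12(K_{j_1}+K_{j_2})\ell+1$ this exceeds $\tfrac12 K_{j_1+j_2}\ell$ as soon as $|K_{j_1}-K_{j_2}|>6$, which already happens for $|j_1-j_2|=1$. Your key inequality $K_{j_1+j_2}\ell>K_{j_1}\ell+K_{j_2}\ell+\alpha\ell+1$ is precisely what is needed if instead you center at a grid point nearest the center of the \emph{bounding box} of $\boldlambda_{K_{j_1}\ell}(\boldu_1)\cup\boldlambda_{K_{j_2}\ell}(\boldu_2)$; then the bounding-box side is at most $(K_{j_1}+K_{j_2})\ell+1$ and the $\alpha\ell$ absorbs the grid rounding. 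The paper sidesteps the issue by only ever absorbing one level-$1$ box at a time into a growing cluster with fixed center, which is what its displayed inequality $\tfrac12 K_j\ell>\tfrac12 K_{j-1}\ell+1+K_1\ell$ encodes. Either repair is immediate and the rest of your argument goes through.
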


  \begin{proof}
  Given   $\bolda \in \Xi_{L,\ell}^{(n)}$, we set
\beq
\Upsilon_{L,\ell}^{(N)}(\boldx,\bolda) =  \bigcup_{\boldb \in \cS_{\bolda}^{N}} \boldlambda^{(N)}_{(4N+2)\ell}(\boldb)\cap \NboxLx,
\eeq
and note that  for  $\boldy  \in  \NboxLx \setminus \Upsilon_{L,\ell}^{(N)}(\boldx,\bolda) $ and   $\boldlambda_{\ell}^{(N,\boldy)} \in {\mathcal C}_{L,\ell}^{(N)} \left(\boldx \right)$ as in   \eq{covproperty}, the boxes  $\boldlambda_{\ell}^{(N,\boldy)}$ and $\boldlambda_{\ell}(\bolda)$ are $\ell$-distant.

Given  $\bolda_{s}\in \Xi_{L,\ell}^{(N)}(\boldx)$,  $s=1,2,\ldots, S$, 
 we set  $\Upsilon= \cup_{s=1}^S \Upsilon_{L,\ell}^{(N)}(\boldx,\bolda_s)$. 
 In view of \eq{nestingproperty} and \eq{nesting}, we can find $\boldb_{r}\in \Xi_{L,\ell}^{(N)}(\boldx)$,
$r=1,2,\ldots, R \le SN^N$, such that   (we use  $(12N\alpha+1)\ell > (4N+2)\ell + 2\alpha \ell$)
\beq\label{disjointboxes1}
\Upsilon \subseteq \bigcup_{r=1}^R  \boldlambda^{(N)}_{(12N\alpha+1)\ell }(\boldb_{r})\subseteq \NboxLx. 
\eeq

Let $\set{k_j}_{j\in \N} \subset \N$ be as in \eq{defki}, so in particular
\beq
\tfrac 1 2(2k_j N\alpha +1)\ell > \tfrac 1 2 (2k_{j-1}N \alpha +1)\ell +1 +   (2k_1N \alpha +1)\ell
\qtx{for} j=2,3,\ldots.
\eeq
By a geometrical argument we can find $\boldu_{t}\in \Xi_{L,\ell}^{(N)}(\boldx)$ and and $j_t \in \set{1,2,\ldots, SN^N}$, $t=1,2,\ldots, T \le SN^N$, such that
\beq
\bigcup_{r=1}^R  \boldlambda^{(N)}_{(12N\alpha+1)\ell }(\boldb_{r})\subseteq 
\widetilde{\Upsilon}=\bigcup_{t=1}^T  \boldlambda^{(N)}_{K_{j_t}\ell }(\boldu_{t})\subseteq \NboxLx,
\eeq
and \eq{distboxes} holds,  implying \eq{distboxes2}, and $\boldlambda^{(N)}_{K_{j_t}\ell }(\boldu_{t})$ contains at least $j_t$ of the boxes $\boldlambda^{(N)}_{(12N\alpha+1)\ell }(\boldb_{r})$, so  $\sum_{t=1}^{T} {j_t}\le SN^N$.  Thus, using \eq{Kj},
\beq
 \sum_{t=1}^{T} K_{j_t}\le 17 N \sum_{t=1}^{T} {j_t}\le  17S N^{N+1}.
 \eeq

In view of   \eq{disjointboxes1} and \eq{disjointboxes}, we conclude that
for  $\boldy  \in  \NboxLx \setminus \widetilde\Upsilon $ and   $\boldlambda_{\ell}^{(N,\boldy)} \in {\mathcal C}_{L,\ell}^{(N)} \left(\boldx \right)$ as in   \eq{covproperty}, the boxes  $\boldlambda_{\ell}^{(N,\boldy)}$ and $\boldlambda_{\ell}(\bolda_s)$ are $\ell$-distant  for $s=1,2,\ldots, S$.

\end{proof}

\section{The bootstrap multiscale analysis} \label{inductionhyp}

We will now prove Theorem~\ref{maintheorem}  by  induction on  the number of particles,  The one particle case     was proven by Germinet and Klein \cite{GK1}.  We  fix $N \ge 2$,  assume Theorem~\ref{maintheorem} holds for $n=1,2,\ldots, N-1$ particles, and prove the theorem  for $N$ particles. As in \cite{GK1}, the proof will be done by a bootstrapping argument, making successive use  of four multiscale analyses. 

\emph{In this section we assume that the following induction hypothesis; it follows from
 assuming that Theorem~\ref{maintheorem} holds for $n=1,2,\ldots, N-1$.}

 \begin{induction}
 For every  $\tau \in (0, \, 1)$ there is a length scale $L_{\tau} $,  $\delta_{\tau}>0$, and $m_\tau^{*}  > 0$, such that the following hold for all $E\in \R$ and $n = 1, 2, ..., N-1$:

i)   For all $L \geq L_{\tau}$ and $\bolda \in  \R^{nd}$ we have 
 \begin{align} \label{eq001}
\P \Bigl\{   \nboxLa \,\, \text{is  $ \left (m^{*}_\tau, \,E \right )$-nonregular} \Bigr\}    \leq e^{-L^{\tau}}.
\end{align}

ii) Let $I(E)=[E-\delta_{\tau}, E+\delta_{\tau}]$.  For all $L \geq L_{\tau}$  and all pairs  of partially separated  $n$-particle boxes $\nboxLa$ and $\nboxLb$ we have 
\begin{align}  \label{eq002}
\P \Bigl\{ \exists \, E^\pr \in I(E) \; \text{so both} \;  \nboxLa \text{ and}\:\nboxLb \; \text{are}\; \left ({m_\tau^*}, \,E^\pr \right )\text{-nonregular} \Bigr\} \leq e^{-L^{\tau}} .
\end{align}
 \end{induction}

For partially interactive $N$-particle boxes we can estimate probabilities directly from the induction hypothesis, without a multiscale analysis for $N$-particles.

\begin{lemma} \label{PINSl}
Let $\Nboxlu = \boldlambda_{\ell}(\bold{u}_{\cJ}) \times  \boldlambda_{\ell}(\bold{u}_{\cJ^{c}} )$ be a PI $N$-particle box and $\varsigma \in (0,1)$. Then for $\ell$ large and all $E\in \R$  we have 
\begin{align} \notag
&\P  \Bigl\{ \Nboxlu \;\text{is} \; \left( m^*_{\varsigma}(\ell),E \right)\text{-nonregular} \Bigr\} \leq \ell^{Nd} e^{-\ell^{\varsigma}}\text{with}\;  m^*_{\varsigma}(\ell)= m_{\varsigma}^*-\tfrac{100Nd \log\ell}{\ell},\\
&\P  \Bigl\{ \Nboxlu \,\, \text{is} \,\, \left( \theta,\,E \right)\text{-nonsuitable} \Bigr\} \leq \ell^{Nd} e^{-\ell^{\varsigma}} \text{for} \; \; \theta < \tfrac \ell {\log \ell} \tfrac{ m^*_{\varsigma}(\ell)}{100}, \\
&\P  \Bigl\{ \Nboxlu \,\, \text{is} \,\, \left( \varsigma,\,E \right)\text{-nonSES} \Bigr\} \leq \ell^{Nd} e^{-\ell^{\varsigma}}.  \notag
\end{align} 
\end{lemma}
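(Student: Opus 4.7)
The plan is to reduce to the induction hypothesis via the factorization of a PI box. Write $\Nboxlu = \boldlambda_\ell(\bold{u}_\cJ) \times \boldlambda_\ell(\bold{u}_{\cJ^c})$ with $1 \le \abs{\cJ} \le N-1$. By Lemma~\ref{PIsuit}(ii), the box $\Nboxlu$ is $\left(m^*_\varsigma(\ell), E\right)$-regular as soon as $\boldlambda_\ell(\bold{u}_\cJ)$ is $(m^*_\varsigma, E-\mu)$-regular for every $\mu \in \sigma_{\cJ^c}$ and $\boldlambda_\ell(\bold{u}_{\cJ^c})$ is $(m^*_\varsigma, E-\lambda)$-regular for every $\lambda \in \sigma_{\cJ}$. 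The task is thus to estimate the probability that this pair of conditions fails.

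The crucial structural fact is that, the box being PI, $\Pi_\cJ \Nboxlu \cap \Pi_{\cJ^c}\Nboxlu = \emptyset$, so $H_{\boldlambda_\ell(\bold{u}_\cJ)}$ and $H_{\boldlambda_\ell(\bold{u}_{\cJ^c})}$ depend on disjoint collections of random variables and hence are independent. Conditioning on $\bom$ restricted to $\Pi_{\cJ^c}\Nboxlu$, the spectrum $\sigma_{\cJ^c}$ becomes a deterministic set of cardinality at most $\ell^{(N-\abs{\cJ})d}$, and for each such $\mu$ the induction hypothesis \eq{eq001} applied with $\tau = \varsigma$ to the $\abs{\cJ}$-particle box gives $\P\{\boldlambda_\ell(\bold{u}_\cJ) \text{ is }(m^*_\varsigma, E-\mu)\text{-nonregular}\} \le e^{-\ell^\varsigma}$. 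A union bound over $\mu$ (and then taking expectation over the conditioning variables), together with the symmetric argument for the other factor, yields the claimed regularity estimate; any small combinatorial prefactor is absorbed by slightly degrading $\varsigma$ in the induction hypothesis, which is harmless since the induction hypothesis is available for every $\tau \in (0,1)$.

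The suitability and SES conclusions then follow from the implications recorded in Remark~\ref{goodbox}. Being $\left(m^*_\varsigma(\ell), E\right)$-regular implies $\left(\tfrac{m^*_\varsigma(\ell)\ell}{100\log\ell}, E\right)$-suitable by (i), which gives the suitability bound for any $\theta$ below that threshold. For SES, regularity directly yields $\abs{G_{\Nboxlu}(E;\bolda,\boldb)} \le e^{-m^*_\varsigma(\ell)\ell/100}$ whenever $\norm{\bolda - \boldb} \ge \ell/100$, and since $m^*_\varsigma(\ell) \to m^*_\varsigma > 0$ while $\ell^{\varsigma-1} \to 0$ as $\ell \to \infty$, we have $m^*_\varsigma(\ell)\ell/100 \ge \ell^\varsigma$ for $\ell$ sufficiently large, which is exactly $(\varsigma, E)$-SES.

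The main obstacle is simply bookkeeping: correctly identifying the independence that justifies the conditioning step, applying the union bound over the eigenvalues $\mu \in \sigma_{\cJ^c}$ and $\lambda \in \sigma_{\cJ}$ in the right order, and choosing the induction-hypothesis parameter $\tau$ slightly above $\varsigma$ if needed to accommodate the polynomial prefactor $\ell^{Nd}$. The substantive content lies entirely in Lemma~\ref{PIsuit} and the induction hypothesis, and there is no new multiscale step at the $N$-particle level for PI boxes.
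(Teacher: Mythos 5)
Your proposal is correct and follows essentially the same route as the paper: the contrapositive of Lemma~\ref{PIsuit}(ii), a union bound over the eigenvalues $\mu\in\sigma_{\cJ^c}$ and $\lambda\in\sigma_{\cJ}$ (at most $\ell^{\abs{\cJ}d}+\ell^{\abs{\cJ^c}d}\le \ell^{Nd}$ terms), the induction hypothesis applied to each factor, and Remark~\ref{goodbox} for the suitability and SES statements. Your explicit conditioning on the variables in $\Pi_{\cJ^c}\nboxno(\boldu)$ to make the eigenvalues deterministic is a detail the paper leaves implicit, and your aside about degrading $\varsigma$ is unnecessary since the stated bound already carries the prefactor $\ell^{Nd}$.
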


\begin{proof} Let $E\in \R$, and $\ell$ large.  It follows from Lemma~\ref{PIsuit}(ii) and the induction hypothesis that
\begin{align} 
&\P  \Bigl\{ \Nboxlu \; \text{is} \; \left( m^*_\tau(\ell),\,E \right)\text{-nonregular} \Bigr\}  \\ \notag
&\hskip30pt \leq  \sum_{\mu \in \sigma_{\cJ^c}} \P  \Bigl\{  \boldlambda_{\ell}(\bold{u_{\cJ}})\text{ is }(m^*_\tau, E - \mu,)\text{-nonregular}    \Bigr\} \\ & \hskip60pt  + \sum_{\lambda \in \sigma_{\cJ}} \P  \Bigl\{  \boldlambda_{\ell}(\bold{u_{\cJ^{c}}}) \text{ is }(m^*_\tau, E - \lambda)\text{-nonregular}   \Bigr\} \le  \ell^{Nd}e^{-\ell^{\tau}}, \notag
\end{align}
where we used ${\ell^{\abs{\cJ}d }+ \ell^{\abs{\cJ^c}d }}\le  \ell^{Nd}$ for $\ell$ large.
The other estimates now follow from Remark~\ref{goodbox}.
\end{proof}

In what follows, we  fix $\zeta, \, \tau, \beta, \, \zeta_0,\, \zeta_1,\, \zeta_2,\, \gamma$ such that 
\begin{gather} \label{constant}
0 < \zeta < \tau < 1,\quad  \zeta \, \gamma^{2} < \zeta_2,\\
0 < \zeta < \zeta_{2} < \gamma \zeta_2 <  \zeta_{1} < \gamma \zeta_1 < \beta < \zeta_0 < r < \tau < 1 \quad \text{with } \zeta \, \gamma^{2} < \zeta_2. \notag
\end{gather}
$\tau$ will play the role of $\varsigma$ in Lemma \ref{PINSl}, $\beta$ will control our resonant boxes, and $\gamma$ will control the growth of our length scales. We will let $m^*$ denote the mass $m_{\tau}^{*}$ that we get from the induction hypothesis. 

\subsection{The first multiscale analysis}
\begin{proposition} \label{part1mainthm} 
 Let $\theta > 8Nd$  and $E \in \R $.  Take $0 < p < p + Nd < s < s + 2Nd -2< \theta,$  $Y \ge 4000 N^{N+1} $, and $p_0=p_{0}(N) < \tfrac 12\pa{2Y}^{-Nd}$. Then there exists a length scale $Z_{0}^{*}$ such that if for some $L_0 \geq Z_{0}^{*}$ we have
\beq
\sup_{\boldx \in \R^{Nd}} \P \Bigl\{ \mathbf{\Lambda}_{L_0}^{(N)}(\boldx) \, \,  \text{is}\,\,(\theta, \,E)  \text{-nonsuitable} \Bigr\} \leq p_0, 
\eeq
then, setting $L_{k+1} = Y\,L_k,$ for $k = 0, 1, 2, ...,$ there exists $K_0 \in \N$ such that for every $k \geq K_0$ we have 
\beq
\sup_{\boldx \in \R^{Nd}} \P \Bigl\{ \mathbf{\Lambda}_{L_k}^{(N)} (\boldx)    \,\,\text{is}\,\,(\theta, \,E)  \text{-nonsuitable} \Bigr\} \leq L_{k}^{-p}.
\eeq
\end{proposition}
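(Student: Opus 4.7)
The plan is the Germinet--Klein first single-scale induction $\ell \to L = Y\ell$, adapted to the multi-particle setting by handling partially interactive (PI) and fully interactive (FI) subboxes of the suitable cover separately. Set
\[
q_k = \sup_{\boldx \in \R^{Nd}} \P\bigl\{\mathbf{\Lambda}_{L_k}^{(N)}(\boldx)\text{ is }(\theta,E)\text{-nonsuitable}\bigr\},
\]
so $q_0 \le p_0$. The goal is to establish a one-step recursion roughly of the form
\[
q_{k+1} \le 2N\norm{\rho}_\infty L_{k+1}^{Nd-s} + \tfrac{1}{2}(2Y)^{2Nd}\,q_k^2 + (\text{super-polynomially small terms}),
\]
and then, using $p_0 < \tfrac{1}{2}(2Y)^{-Nd}$ to initiate the bootstrap and $s > p + Nd$ to absorb the Wegner term, conclude that $q_k \le L_k^{-p}$ once $k \ge K_0$.

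For the deterministic step, fix $\ell = L_k$, $L = L_{k+1}$, and $\boldx \in \R^{Nd}$, and work with the suitable $\ell$-cover $\mathcal{C}_{L,\ell}^{(N)}(\boldx)$ of Definition \ref{defcov}. I claim that $\mathbf{\Lambda}_L^{(N)}(\boldx)$ is $(\theta,E)$-suitable whenever (i) it is $(E,s)$-suitably nonresonant, so that $\norm{G_{\mathbf{\Lambda}_L^{(N)}(\boldx)}(E)} \le L^s$, and (ii) no two partially separated subboxes in $\mathcal{C}_{L,\ell}^{(N)}(\boldx)$ are simultaneously $(\theta,E)$-nonsuitable. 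Under these conditions, iterating the geometric resolvent identity along a chain of good subboxes joining $\bolda$ to $\boldb$ (which exists by \eqref{covproperty}) and using the surface bound $s_{Nd}\ell^{Nd}$ at each step, the per-step gain $\ell^{-\theta}$ dominates since $\theta > 8Nd$; the chain has length $O(Y)$, so the total Green's-function decay is $\le L^{-\theta}$ once $L_0$ is sufficiently large.

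The probabilistic side decomposes into three contributions. First, the Wegner estimate of Theorem \ref{Wegner0} bounds the probability that (i) fails by $2N\norm{\rho}_\infty L^{Nd-s}$, which is much smaller than $L^{-p}$ because $s > p + Nd$. Second, for any PI subbox $\boldlambda^{(N,\boldy)}_\ell$ in the cover, Lemma \ref{PINSl}, which repackages the $(N-1)$-particle induction hypothesis, gives nonsuitability probability $\le \ell^{Nd}e^{-\ell^\tau}$, and summation over the at most $(2Y)^{Nd}$ PI members of the cover remains super-polynomially small. Third, the pair of nonsuitable FI subboxes is the only contribution requiring independence: by Lemma \ref{part1prop2}, two $\ell$-distant FI boxes are fully separated, so their respective Hamiltonians depend on disjoint families of $\omega_x$'s and their nonsuitability events are independent. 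A covering/extraction argument in the spirit of Lemma \ref{lembadregions} then shows that the existence of two partially separated nonsuitable FI subboxes can always be replaced by the existence of two $\ell$-distant such boxes, yielding the joint bound $\tfrac{1}{2}(2Y)^{2Nd} q_k^2$.

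The principal obstacle is precisely this last step: in the multi-particle model, overlapping boxes are correlated through shared $\omega_x$'s, so the classical \emph{$p$-squared} MSA step requires upgrading \emph{partially separated} to \emph{fully separated}. Lemma \ref{part1prop2} provides the geometric input (FI plus $\ell$-distant implies fully separated), and the choice $Y \ge 4000 N^{N+1}$ leaves enough slack to absorb the $\sim N^{N+1}$ combinatorial factor appearing when one organizes bad regions as in Lemma \ref{lembadregions}. Once the recursion is in place, the elementary bootstrap of \cite{GK1} shows that, starting from $q_0 \le p_0$, the sequence $q_k$ decreases quickly until the Wegner term dominates, after which $q_k \le L_k^{-p}$ for all $k \ge K_0$.
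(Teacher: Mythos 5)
Your proposal follows the same route as the paper: the suitable $\ell$-cover, a deterministic lemma turning nonresonance plus few bad subboxes into $(\theta,E)$-suitability, Lemma~\ref{PINSl} for PI subboxes, full separation of $\ell$-distant FI subboxes (Lemma~\ref{part1prop2}) to get independence and the $q_\ell^2$ term, and the standard $q_{k+1}\lesssim q_k^2$ bootstrap seeded by $p_0<\tfrac12(2Y)^{-Nd}$. This is exactly the paper's Lemma~\ref{part1prop1a} plus the event decomposition $\cE\subseteq \cW_J^c\cup(\cF_J^c\cap\cA^c)\cup\cA$.

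The one genuine deficiency is in your deterministic step. You impose nonresonance only on the full box $\mathbf{\Lambda}_L^{(N)}(\boldx)$ and then claim to iterate the resolvent identity "along a chain of good subboxes joining $\bolda$ to $\boldb$." Such a chain need not exist: suitability must be verified for \emph{all} pairs with $\norm{\bolda-\boldb}\ge L/100$, and $\bolda$ may sit inside a nonsuitable subbox, in which case there is no good subbox from which to launch the iteration at $\bolda$. The paper's hypothesis (ii) in Lemma~\ref{part1prop1a} — that every enlarged box $\mathbf{\Lambda}_{K_j\ell}^{(N)}(\boldu)$ surrounding a bad cluster (produced by Lemma~\ref{lembadregions}) is $E$-suitably nonresonant — is precisely what lets one exit the bad region at polynomial cost $(K_{j_t}\ell)^s$ in \eqref{suitablebad} before resuming the good-box iteration; correspondingly, $\P(\cW_J^c)$ in \eqref{PcE2} must include Wegner estimates for all these enlarged boxes, not just the full box. (The extra terms are of the same order $L^{Nd-s}$ and are absorbed identically, so the fix costs nothing, but as stated your hypotheses are insufficient and your Wegner accounting incomplete.) The rest — upgrading "partially separated" to "fully separated" via FI + $\ell$-distance, and the role of $Y\ge 4000N^{N+1}$ in absorbing the $N^{N+1}$ factor from Lemma~\ref{lembadregions} while guaranteeing at least two good iterations — matches the paper.
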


To prove the proposition we use the following deterministic lemma.  

\begin{lemma} \label{part1prop1a} Let $\theta > 8Nd$  and $E \in \R $.  Take $ Nd < s < s + 2Nd -2< \theta$. Let  $J \in \N$,   $Y \ge 4000J N^{N+1} $,  $L =  Y\ell$, and $\boldx \in \R^{Nd}$.
 Suppose  we have the following:
\begin{enumerate}
\item $\NboxLx$ is $E$-suitably nonresonant.

\item  Every box  $\mathbf{\Lambda}_{K_j\ell}^{(N)}(\boldu)\subseteq \NboxLx$, with $\boldu \in \Xi_{L,\ell}(\boldx)$, $j=1,2,\ldots, J N^{N}$,  where $K_j$ is
 given  in \eq{Kj},  is $E$-suitably nonresonant.

\item There are at most $J $ pairwise $\ell$-distant,  $(E, \, \theta)$-nonsuitable boxes in the $\ell$-suitable cover.

\end{enumerate}
Then    the $N$-particle  box  $\NboxLx$ is $\left(E, \,\theta \right)$-suitable for $L$ sufficiently large.
\end{lemma}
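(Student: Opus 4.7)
The plan is to bound $|G_{\NboxLx}(E;\bolda,\boldb)|$ for fixed $\bolda,\boldb\in\NboxLx$ with $\norm{\bolda-\boldb}\ge L/100$, by iterating the geometric resolvent identity. First I apply Lemma~\ref{lembadregions} to the $S\le J$ pairwise $\ell$-distant $(E,\theta)$-nonsuitable boxes given by hypothesis~(iii). This produces the region
\[
\widetilde{\Upsilon}=\bigcup_{t=1}^{T}\boldlambda^{(N)}_{K_{j_t}\ell}(\boldu_t)\subseteq \NboxLx,\quad T\le JN^N,\quad \sum_{t=1}^{T} K_{j_t}\le 17JN^{N+1},
\]
a union of mutually well-separated boxes that envelops every bad small box; crucially, for every $\boldy\in\NboxLx\setminus\widetilde{\Upsilon}$ the cover box $\boldlambda_\ell^{(N,\boldy)}\in\mathcal{C}^{(N)}_{L,\ell}(\boldx)$ of \eq{covproperty} is $\ell$-distant from every bad box and hence itself $(E,\theta)$-suitable.

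With this geometry, I run the following iteration starting from $\bolda$. Given a running base point $\boldy\in\NboxLx$: if $\boldy\notin\widetilde{\Upsilon}$, I expand via the resolvent identity on the $(E,\theta)$-suitable cover box $\boldlambda_\ell^{(N,\boldy)}$, contributing a factor $|\partial_+\boldlambda_\ell|\,\ell^{-\theta}\le s_{Nd}\ell^{Nd-\theta}$ and shifting the new base point outside $\boldlambda^{(N)}_{\ell/10}(\boldy)$ by \eq{covproperty}; if instead $\boldy\in\boldlambda^{(N)}_{K_{j_t}\ell}(\boldu_t)$ for some $t$, I expand on this enveloping box, whose resolvent is controlled by $(K_{j_t}\ell)^{s}$ thanks to $E$-suitable nonresonance (hypothesis~(ii)), contributing a factor $|\partial_+\boldlambda^{(N)}_{K_{j_t}\ell}|(K_{j_t}\ell)^s\le s_{Nd}(K_{j_t}\ell)^{Nd+s}\le s_{Nd}L^{Nd+s}$ and, by \eq{distboxes2}, placing the new base point outside $\widetilde{\Upsilon}$ so that the subsequent step is again of the first type. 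The iteration terminates when the base point either reaches within $\ell/100$ of $\boldb$ or lands within $\ell$ of $\partial\NboxLx$, at which point I close with the global bound $\norm{G_{\NboxLx}(E)}\le L^{s}$ from hypothesis~(i).

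To control the count: good steps advance by at least $\ell/10$, so at most of order $10Y$ of them fit in the ambient travel of length $\le L=Y\ell$, while bad steps number at most $T\le JN^N$ and collectively consume distance $\sum_t K_{j_t}\ell\le 17JN^{N+1}\ell\le L/235$ thanks to $Y\ge 4000JN^{N+1}$. Collecting factors yields
\[
|G_{\NboxLx}(E;\bolda,\boldb)|\le \pa{s_{Nd}\ell^{Nd-\theta}}^{10Y}\cdot \pa{s_{Nd}L^{Nd+s}}^{JN^N}\cdot L^{s}.
\]
Since $\theta-Nd>7Nd$ and $10Y(\theta-Nd)\gg (Nd+s)JN^N+s+\theta$ by the choice of $Y$, the leading decay $\ell^{-10Y(\theta-Nd)}$ easily overwhelms the polynomial losses once $\ell$ is large, and the exponent inequality $s+2Nd-2<\theta$ then ensures the right-hand side is $\le L^{-\theta}$.

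The principal obstacle is this final bookkeeping: one must verify that the gain from iterating through the suitable cover boxes dominates the polynomial blow-up from the $T$ enveloping nonresonant boxes, the combinatorial surface factors $s_{Nd}$, and the closing bound $L^{s}$. The parameter hierarchy built into the statement---namely $\theta>8Nd$, $s+2Nd-2<\theta$, and $Y\ge 4000JN^{N+1}$---is precisely what makes this dominance go through, with the geometric input being that the bad region $\widetilde{\Upsilon}$ fills at most a $1/235$ fraction of the ambient travel so that the vast majority of iterations are of the decaying ``good'' type.
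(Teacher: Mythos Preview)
Your overall approach is the same as the paper's: apply Lemma~\ref{lembadregions} to envelop the bad boxes in $\widetilde{\Upsilon}$, then iterate the resolvent identity, using suitability outside $\widetilde{\Upsilon}$ and suitable nonresonance inside, and close with the global bound $L^s$. The gap is in the final bookkeeping, where an inequality runs the wrong way.

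You argue that good steps advance by at least $\ell/10$, so at most of order $10Y$ of them fit in travel of length $\le L$. This gives only an \emph{upper} bound on the number $n_g$ of good steps. But since $s_{Nd}\ell^{Nd-\theta}<1$, the displayed bound
\[
|G_{\NboxLx}(E;\bolda,\boldb)|\le \pa{s_{Nd}\ell^{Nd-\theta}}^{10Y}\cdot \pa{s_{Nd}L^{Nd+s}}^{JN^N}\cdot L^{s}
\]
requires a \emph{lower} bound on the exponent of the decaying factor; with only $n_g\le 10Y$ the right-hand side could be as large as $(s_{Nd}L^{Nd+s})^{JN^N}L^s$, which is useless. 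The correct accounting reverses the roles: each good step travels \emph{at most} $\ell+1$, each bad passage through a component of $\widetilde{\Upsilon}$ travels at most $2K_{j_t}\ell$, and the total travel must be \emph{at least} $\norm{\bolda-\boldb}\ge L/100$ minus the terminal gap. This yields $n_g(\ell+1)+\sum_t 2K_{j_t}\ell+(\ell+1)\ge L/100$, hence by \eq{sumKj} one gets $n_g\gtrsim Y/100-34JN^{N+1}$, which is precisely \eq{NYest11}--\eq{NYest} in the paper. The condition $Y\ge 4000JN^{N+1}$ then only guarantees $n_g\ge 2$, not anything like $10Y$; the constraint $s+2Nd-2<\theta$ is exactly what makes $(s_{Nd}\ell^{Nd-1-\theta})^2 L^s\le L^{-\theta}$ hold for large $\ell$.

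A related issue: you bound the number of bad passages by $T\le JN^N$, but nothing prevents the iteration from revisiting the same component of $\widetilde{\Upsilon}$, so the factor $(s_{Nd}L^{Nd+s})^{JN^N}$ is not justified. The paper avoids this by pairing each bad passage with the immediately following good step (its \eq{asuitableb}) and checking that the combined factor $s_{Nd}^2 L^{Nd-1+s}\ell^{Nd-1-\theta}$ is $<1$, again thanks to $s+2Nd-2<\theta$. Bad passages are then absorbed regardless of multiplicity, and only the count of the remaining unpaired good steps enters the final estimate \eq{finalres}.
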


\begin{proof}
Since there at most $J$ pairwise $\ell$-distant $N$-particle boxes in the suitable cover that are $(E,\,\theta)$-nonsuitable. we can find $\bolda_{s}\in \Xi_{L,\ell}^{(N)}(\boldx)$,  $s=1,2,\ldots, j \leq J$, such that the boxes 
 $\boldlambda_{\ell}(\bolda_{s})$ are pairwise 
 $\ell$-distant, $(E,\,\theta)$-nonsuitable boxes, and any box $\boldlambda_{\ell}(\bolda)$ with
 $\bolda \in \Xi_{L,\ell}^{(N)}(\boldx)$ which is $\ell$-distant from all the $\boldlambda_{\ell}(\bolda_{s})$  must be $(E,\,\theta)$-suitable. Applying Lemma~\ref{lembadregions}, we obtain
 $\widetilde{\Upsilon}= \widetilde{\Upsilon}_{L,\ell}^{(N)}\pa{\set{\bolda_s}_{s=1}^J}$ as in \eq{disjointboxes}, satisfying the conclusions of that lemma.  In particular, for  $\boldy  \in  \NboxLx \setminus \widetilde\Upsilon $, the boxes  $\boldlambda_{\ell}^{(N,\boldy)}$ and $\boldlambda_{\ell}(\bolda_s)$ are $\ell$-distant  for $s=1,2,\ldots, J$, and hence $\boldlambda_{\ell}^{(N,\boldy)}$ is a $(E,\,\theta)$-suitable box.

Let $\bolda \in \NboxLx$.   Then, either $\boldlambda_{\ell}^{(\bolda)}$ is $(\theta,E)$-suitable or $\boldlambda_{\ell}^{(\bolda)}$ is $(\theta,E)$-nonsuitable. We  proceed as follows: 

\begin{enumerate}
\item If   $\boldlambda_{\ell}^{(\bolda)}$ is $(\theta,E)$-suitable, and  $ \boldb \in \NboxLx\setminus \boldlambda_{\ell}^{(\bolda)}$,  we use 
the resolvent identity to get 
\begin{align}\notag
&\abs{G_{\boldlambda_L^{(N)}(\x)}(\bolda, \boldb; E)}  \leq s_{Nd}\,\,\ell^{Nd-1-\theta}\,\abs{G_{\boldlambda_L^{(N)}(\x)}(\boldv^{\prime}, \boldb; E)}\sqtx{for some} \boldv^{\prime}\in \partial_+ \boldlambda_{\ell}^{(\bolda)},\\
&\text{so}\quad  \tfrac \ell{10} \le \norm{\boldv^{\prime}-\bolda} \le \ell + 1.
\label{asuitable}
\end{align}

\item If $\boldlambda_{\ell}^{(\bolda)}$ is  $(\theta,E)-$nonsuitable, we must have  $\bolda \in  \widetilde\Upsilon$, and hence $\bolda \in \boldlambda^{(N)}_{K_{j_t}\ell}(\boldu_{t}) $ for some $t$.  Let $ \boldb \in \NboxLx\setminus \boldlambda^{(N)}_{(K_{j_t}+2)\ell}(\boldu_{t})  $.  Applying  the resolvent identity, and using the fact that $\mathbf{\Lambda}(t):=\boldlambda^{(N)}_{K_{j_t}\ell}(\boldu_{t}) $ is $E$-suitably nonresonant by hypothesis, we get
\begin{align}\label{suitablebad}
\abs{G_{\boldlambda_L^{(N)}(\x)}(\bolda, \boldb; E)} & \leq \left|\partial \boldlambda(t)  \right| \, \left[\max_{(\boldu, \boldv) \in \partial\boldlambda(t)} 
\abs{G_{\boldlambda(t)}(\bolda, \boldu; E)\,G_{\boldlambda_L^{(N)}(\x)}(\boldv, \boldb; E)}\right]\\
& \leq \abs{\partial \boldlambda(t) } \, \pa{K_{j_t}\ell}^s \abs{G_{\boldlambda_L^{(N)}(\x)}(\boldv^{\prime}, \boldb; E)}\sqtx{for some} \boldv^{\prime}\in \partial_+\boldlambda(t),  \notag
\end{align}
where $\boldlambda_{\ell}^{(\boldv^{\prime})}$ is $(\theta,E)$-suitable.   We use \eq{asuitable} with $\bolda=\boldv^{\prime}$, getting
\begin{align}\notag
\abs{G_{\boldlambda_L^{(N)}(\x)}(\bolda, \boldb; E)} & \leq s^2_{Nd}L^{Nd-1+s}\ell^{Nd-1-\theta}
\abs{G_{\boldlambda_L^{(N)}(\x)}(\boldv^{\prime\prime}, \boldb; E)} \\
& <  \label{asuitableb}
\abs{G_{\boldlambda_L^{(N)}(\x)}(\boldv^{\prime\prime}, \boldb; E)} \sqtx{for some} \boldv^{\prime\prime}\in \partial_+\boldlambda_\ell^{(\boldv^{\prime})},\\
&\text{so} \quad   \norm{\boldv^{\prime\prime}-\bolda} \le (\ell + 1) + (K_{j_t}\ell + 1)\le 2K_{j_t}\ell,
\notag
\end{align}
if we can guarantee $s_{Nd}^{2} L^{Nd+s-1}\,\,\ell^{Nd-1-\theta} < 1$.    Since $L = Y\,\ell$, we need $$s_{Nd}^{2} {(Y\,\ell)}^{Nd+s-1}\,\,\ell^{Nd-1-\theta} = s_{Nd}^{2}  Y^{Nd+s-1} \ell^{2Nd+s-2-\theta} \leq 1,$$ which is certainly true by our choice of $s$ and $\theta$ provided that we take $\ell$ large enough.  
\end{enumerate}

Given $\bolda, \,\boldb \in \NboxLx$ with $\norm{\bolda - \boldb} \geq \tfrac{L}{100}$, we estimate $\abs{G_{\boldlambda_L^{(N)}(\x)}(\bolda, \boldb; E)}$ by, when possible, repeatedly using either \eq{asuitable} or \eq{asuitableb}, as appropriate,  and, when we must stop because we got too close to $\boldb$, using the hypothesis that  $\boldlambda_L^{(N)}(\x)$ is $E$-suitably nonresonant, obtaining
\begin{equation}\label{finalres}
\abs{G_{\boldlambda_L^{(N)}(\x)}(\bolda, \boldb; E)} \leq \bigl(s_{Nd}\,\,\ell^{Nd-1-\theta}\bigr)^{N(Y)}\,\,L^s,
\end{equation}
where $N(Y)$ is the  number of times we used \eq{asuitable}. We can always use either \eq{asuitable} or \eq{asuitableb},  unless we got to some    $\boldv$ where $\boldlambda_{\ell}^{(\boldv)}$ is $(E, \theta)$-suitable and
 $\boldb \in \boldlambda_{\ell}^{(\boldv)}$, or  $\boldv \in \boldlambda^{(N)}_{K_{j_t}\ell}(\boldu_{t})  $ for some $t$ and  $\boldb 
\in \boldlambda^{(N)}_{K_{j_t}\ell}(\boldu_{t}) $.   It follows that we will not have to stop before
\beq \label{NYest11}
N(Y) (\ell +1) + \sum_{t=1}^{T} 2K_{j_t}\ell+ (\ell +1) \ge \norm{\boldb-\bolda}\ell\ge \tfrac{L}
{100} .
\eeq
Thus, using  \eq{sumKj},  we can achieve 
\begin{align}\label{NYest}
N(Y) & \ge \pa{\tfrac{Y}{100} -  34JN^{N+1} }\tfrac \ell{\ell +1}-2.
\end{align}

We take $Y \ge 4000J N^{N+1} $, which guarantees
$N(Y)\ge 2$ for large $\ell$ by \eq{NYest}. It then follows from
\eq{finalres} that for $\bolda, \,\boldb \in \NboxLx$ with $\norm{\bolda - \boldb} \geq \tfrac{L}{100}$ we have 
\begin{align}
\abs{G_{\boldlambda_L^{(N)}(\x)}(\bolda, \boldb; E)} \le L^{-\theta},
\end{align}
and we conclude that $\NboxLx$ is $\left(E, \,\theta \right)$-suitable for $L$ sufficiently large.
\end{proof}

\begin{proof}[Proof of Proposition~\ref{part1mainthm}]
Given a scale $L$, we set
\beq
p_L= \sup_{\boldx \in \R^{Nd}} \P \Bigl\{ \NboxLx    \,\,is\,\,(\theta, \,E) \text{-nonsuitable} \Bigr\} .
\eeq
We assume $L=Y \ell$,  with  $ \ell$ is sufficiently large when necessary.

 Let $\NboxLx$ be an $N$-particle box with an $\ell$-suitable cover, 
${\mathcal C}_{L,\ell} (\boldx)$, where $L = Y \ell$.  Let $J\in 2\N$, to be specified later.  We define several events:  $\cE=  \Bl\{\NboxLx \; \text{ is } (\theta, E)\text{-nonsuitable}\Br\}$, $\cA$ is the event that at least one of the PI boxes in  ${\mathcal C}_{L,\ell} (\boldx)$ is  $(\theta, E)$-nonsuitable, 
$\cW_J$ is the event that (i) and (ii) in Lemma~\ref {part1prop1a} hold, and $\cF_J$ is the event that (iii) in Lemma~\ref {part1prop1a} holds.   It follows from Lemma~\ref {part1prop1a}  that,
taking  $Y \ge 4000J^2 N^{2N+1} $,
\beq\label{PcE}
\P\set{\cE} \le \P\set{\cW_J^c}  + \P\set{\cF_{J}^c} \le  \P\set{\cW_J^c}  + \P\set{\cF_{J}^c\cap \cA^c}+\P\set{\cA}.
\eeq
Lemma~\ref{PINSl} yields
\beq \label{PcE1}
\P(\cA) \leq (2Y)^{Nd}\,\ell^{Nd}e^{-\ell^{\tau}} \leq \tfrac{1}{4} L^{-p}.
\eeq
Since $s >Nd+p$, Theorem \ref{Wegner0} implies
\begin{align}\label{PcE2}
\P(\cW_J^c) & \leq {2N\, \norm{\rho}_\infty} \pa{L^{Nd-s} + (2Y)^{Nd}\sum_{j=1}^{JN^N}
(K_j\ell)^{Nd-s}}\\
& \le {2N\, \norm{\rho}_\infty}  \pa{1 +(2Y)^{s} JN^N  } L^{Nd-s}
\notag   \\
&   \le {2N\, \norm{\rho}_\infty}  \pa{1+ (2Y)^{s} JN^N  }L^{-p} \le \tfrac{1}{4} L^{-p}.
\notag
\end{align}
To estimate $\P\set{\cF_{J}^c\cap \cA^c}$, note that if $\bom \in \cF_{J}^c\cap \cA^c$, then there exist  $J+1$ FI pairwise $\ell$-distant boxes in the suitable cover that are $(\theta, E)$-nonsuitable.   By Lemma~\ref{part1prop2} these boxes are fully separated.   Thus
\beq\label{PcE3}
\P \set{\cF_{J}^c\cap \cA^c} \leq   (2Y)^{(J+1)Nd} p_\ell^{J+1}.
\eeq
Since $\boldx \in \R^{Nd}$ is arbitrary, we conclude that 
 \beq \label{part1keyeq}
 p_L \leq \tfrac{1}{2}L^{-p} +  (2Y)^{(J+1)Nd} p_\ell^{J+1}.
  \eeq
  
 Since $J \ge  1$, it follows immediately from \eq{part1keyeq} that $p_\ell \le \ell^{-p}$ implies $p_L \le  L^{-p}$. 

 We now fix  $L_0$, set $L_k = Y L_{k-1}$ for $k=1,2,\ldots$, and let $p_k=p_{L_k}$.  To finish the proof,  we need to  show  that, if $p_0< \tfrac 12\pa{2Y}^{-Nd}$,  for sufficiently large $L_0$,
 we have
 \beq
 K_0=\inf  \set{k=0,1,\ldots\; | \; p_k \le L_{k}^{-p}} < \infty.
 \eeq

It follows from  equation \eqref{part1keyeq} that 
  \begin{align}\label{part1mainthmeq1}
  p_{k+1}\le \tfrac 1 2L_{k+1}^{-p} +  \pa{\pa{2Y}^{Nd} p_k}^{J+1} \qtx{for} k=1,2\ldots..  
    \end{align}
If $k+1<K_0$, we conclude that
  \beq
   p_{k+1}< 2 \pa{\pa{2Y}^{Nd} p_k}^{J+1}
  \eeq

 If $K_0 = 0$ or $K_0=1$  we are done.  If not, we have $p_1 \le 2 \pa{\pa{2Y}^{Nd} p_0}^{J+1}$.  If $K_0 >2$, we have
 \beq\label{proc1}
 p_2 \le 2 \pa{\pa{2Y}^{Nd} p_1}^{J+1}\le  2 \pa{\pa{2Y}^{Nd}2 \pa{\pa{2Y}^{Nd} p_0}^{J+1}}^{J+1}=
 \pa{2\pa{2Y}^{Nd}}^{1+J} p_0^{(J+1)^2}.
 \eeq
 Repeating this procedure, if $k < K_0$ we obtain
 \beq\label{proc12}
\pa{Y^k L_0}^{-p} = L_k^{-p}< p_k \le 
  \pa{2\pa{2Y}^{Nd}}^{\tfrac {(J+1)^k-1}{(J+1)-1}} p_0^{(J+1)^k}.
 \eeq
 We now choose $J=1$, obtaining
 \beq
2\pa{2Y}^{Nd}\pa{Y^k L_0}^{-p} < 
  \pa{2\pa{2Y}^{Nd} p_0}^{2^k}.
 \eeq
 We conclude that $K_0 < \infty$, since by hypothesis \  $2\pa{2Y}^{Nd} p_0 <1$.
  \end{proof}

\subsection{The second multiscale analysis}
\begin{proposition} \label{part2mainthm}
 Let $E \in \R$,  $p>0$, $0< m_0 < m^{*}$, $1< \gamma <1 + \frac p{p+2Nd}$. Then  there exists a length scale $Z_{1}^{*}$ such that if for some $L_0 \geq Z_{1}^{*}$ we can verify
\beq
\sup_{\boldx \in \R^{Nd}} \P \Bigl\{ \mathbf{\boldlambda}_{L_0}^{(N)}(\boldx) \, \,  \text{is}\,\,(m_0, \,E) \text{-nonregular} \Bigr\} \leq L_{0}^{-p}, 
\eeq
then, setting $L_{k+1} = L_k^{\gamma},$ for $k = 1, 2, ...,$ we get
\beq
\sup_{\boldx \in \R^{Nd}} \P \Bigl\{ \mathbf{\boldlambda}_{L_k}^{(N)} (\boldx)    \,\,\text{is}\,\,\left(\tfrac{m_0}{2}, \,E \right) \text{-nonregular} \Bigr\} \leq L_{k}^{-p}\qtx{for all} k = 0, 1, 2, ...
\eeq
\end{proposition}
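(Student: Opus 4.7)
\medskip

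The plan is to carry out a standard Germinet--Klein multiscale step adapted to the multi-particle setting, parallel in structure to Proposition~\ref{part1mainthm} but using the exponential-decay notion of $(m,E)$-regularity as the inductive quantity. Fix a scale $\ell = L_k$ and set $L = L_{k+1} = \ell^{\gamma}$. Let $m_k$ denote the mass at scale $\ell$ with $m_0 > m_k \ge m_0/2$, and aim to prove $p_L \le L^{-p}$ for $p_L := \sup_{\boldx} \P\set{\NboxLx \text{ is } (m_{k+1},E)\text{-nonregular}}$.

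\textbf{Step 1 (Deterministic reduction).} First I would establish a deterministic lemma in the spirit of Lemma~\ref{part1prop1a}: if
(a) $\NboxLx$ is $(E,\beta)$-nonresonant,
(b) every box $\boldlambda^{(N)}_{K_j \ell}(\boldu) \subseteq \NboxLx$ with $\boldu \in \Xi_{L,\ell}$ and $j=1,\ldots,N^N$ is $(E,\beta)$-nonresonant,
(c) there is at most \emph{one} $(m_k,E)$-nonregular box in the $\ell$-suitable cover $\cC^{(N)}_{L,\ell}(\boldx)$ up to $\ell$-distance,
then $\NboxLx$ is $(m_{k+1}, E)$-regular with $m_{k+1} = m_k - C\,\ell^{-1}\log \ell$. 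The proof iterates the resolvent identity: at a good box $\boldlambda_\ell^{(\boldv)}$ one picks up a factor $s_{Nd}\,\ell^{Nd-1} e^{-m_k(\ell/10)}$, while passing through a single resonant inflation $\boldlambda_{K_j \ell}^{(N)}(\boldu_t)$ contributes a factor $|\partial \boldlambda|\,e^{L^{\beta}}$ which is absorbed by the gain from $N(Y) = \Omega(\norm{\bolda-\boldb}/\ell)$ good steps, provided $\beta$ is taken as in \eqref{constant} below the exponent controlling $m_k \ell$.

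\textbf{Step 2 (Probability estimate).} Next I would bound the probability of the complement of (a)--(c). The Wegner estimate (Theorem~\ref{Wegner0}) handles (a) and (b), contributing $(2Y)^{Nd} \cdot N^N \cdot 2N \norm{\rho}_\infty \,e^{-L^{\beta}} L^{Nd}$, which is much smaller than $\tfrac14 L^{-p}$. For (c), I separate PI and FI boxes in the cover. PI boxes are controlled by Lemma~\ref{PINSl} (induction hypothesis) with total contribution $(L/\ell)^{Nd}\,\ell^{Nd} e^{-\ell^\tau}$, also super-polynomially small. For two pairwise $\ell$-distant FI $(m_k,E)$-nonregular boxes, Lemma~\ref{part1prop2} ensures they are \emph{fully separated}, so the events are independent and the probability is bounded by
\beq
\pa{\tfrac{2L}{\ell}}^{2Nd} p_\ell^{\,2}.
\eeq
Summing gives the single-scale recursion
\beq
p_L \le \tfrac{1}{2} L^{-p} + \pa{\tfrac{2L}{\ell}}^{2Nd} p_\ell^{\,2}.
\eeq
Substituting $L = \ell^{\gamma}$ and $p_\ell \le \ell^{-p}$, the second term is $\lesssim \ell^{2Nd(\gamma-1) - 2p}$, and the constraint $\gamma < 1 + \tfrac{p}{p+2Nd}$ gives $2p - 2Nd(\gamma-1) > p\gamma$, so $p_L \le L^{-p}$ for $L_0$ large, closing the induction.

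\textbf{Step 3 (Mass bookkeeping).} Finally, the mass is degraded by $C\,\ell_k^{-1}\log \ell_k$ at each stage (from Step 1) plus an analogous loss from Lemma~\ref{PIsuit}(ii) when handling PI boxes. Since $\ell_k = L_0^{\gamma^k}$ grows super-geometrically, the telescoping sum $\sum_{k\ge 0} C\,\ell_k^{-1}\log \ell_k$ converges, and is less than $m_0/2$ for $L_0 \ge Z_1^{*}$ chosen sufficiently large. Hence $m_k \ge m_0/2$ for all $k$, giving the stated conclusion.

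\textbf{Main obstacle.} The delicate point is Step~1: one must track mass loss carefully through both the chain of good resolvent expansions \emph{and} the resonant inflations, ensuring the aggregate multiplicative error $e^{O(L^\beta)}$ from the $N^N$ possible resonant passes can be compensated by a linear-in-$\norm{\bolda-\boldb}$ amount of decay while simultaneously yielding an effective mass $m_{k+1} \ge m_k - C\ell^{-1}\log\ell$. This dictates both the choice $\beta < 1$ in \eqref{constant} and the lower bound on $L_0$; getting it right requires the geometric control of Lemma~\ref{lembadregions} so that the total radius of the resonant inflation is only $O(N^{N+1}\ell)$ rather than $O(L)$.
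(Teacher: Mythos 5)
Your proposal follows essentially the same route as the paper: the same deterministic lemma (at most $J$ pairwise $\ell$-distant nonregular boxes plus nonresonance of the box and of the $K_j\ell$-inflations implies regularity at scale $L$ with slightly degraded mass), the same event decomposition with PI boxes handled by Lemma~\ref{PINSl}, nonresonance by the Wegner estimate, and two $\ell$-distant FI nonregular boxes made independent by full separation, yielding the recursion $p_L \le \tfrac12 L^{-p} + (2L/\ell)^{2Nd}p_\ell^2$ under the same constraint on $\gamma$. The one quantitative slip is your claimed per-step mass loss $C\ell^{-1}\log\ell$: the dominant losses actually come from the final resonance factor $e^{L^\beta}$ spread over $\norm{\bolda-\boldb}\ge L/100$ (giving $\sim \ell^{-\gamma(1-\beta)}$) and from skipping the resonant inflations of total length $O(N^{N+1}\ell)$ (giving $\sim \ell^{-(\gamma-1)}$), which is why the paper works with $m_\ell \ge \ell^{-\kappa}$, $\kappa<\min\{\gamma-1,\gamma(1-\beta),1\}$; this does not affect your conclusion, since these power-law losses are still summable along $L_k=L_0^{\gamma^k}$ and total less than $m_0/2$ for $L_0$ large.
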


To prove the proposition we use the following deterministic lemma.  

\begin{lemma} \label{part2prop1a}  
Let $E \in \R$, $L = \ell^{\gamma}$,  $J \in \N$, $m_0>0$, and 
\beq\label{mellcond}
m_\ell \in  [\tfrac{1}{\ell^{\kappa}},m_0], 
\qtx{where} 0< \kappa <\min \set { \gamma -1, \gamma (1-\beta), 1} . 
\eeq 
Suppose that we have the following:

\begin{enumerate}
\item $\NboxLx$ is $E$-nonresonant.

\item  Every box  $\mathbf{\Lambda}_{K_j\ell}^{(N)}(\boldu)\subseteq \NboxLx$, with $\boldu \in \Xi_{L,\ell}(\boldx)$, $j=1,2,\ldots, J N^{N}$,  where $K_j$ is
 given  in \eq{Kj},  is $E$-nonresonant.

\item There are at most J pairwise $\ell$-distant, $(E, \, m_{\ell})$-nonregular boxes in the suitable cover.  
\end{enumerate}
Then $\NboxLx$ is $\left(E, \,m_L \right)$-regular for $L$ large, where 
\beq \label{mlmL}
 m_{\ell} \geq m_L \geq m_{\ell} -  \tfrac{1}{2\ell^{\kappa}}\ge  \tfrac{1}{L^{\kappa}}.
\eeq
\end{lemma}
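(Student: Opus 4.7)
The plan is to mimic the proof of Lemma \ref{part1prop1a} essentially verbatim, making two substitutions dictated by the move from suitability to regularity: I replace the suitably-nonresonant bound $\norm{G_\boldlambda(E)}\le L^s$ by the nonresonant bound $\norm{G_\boldlambda(E)}\le 2e^{L^\beta}$ coming from hypotheses (i)--(ii), and replace the polynomial suitability decay $\ell^{-\theta}$ by the genuine exponential $e^{-m_\ell\norm{\cdot}}$ from regularity, tracking the \emph{actual} distance traveled at each regular step rather than only a lower bound.

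First I would apply Lemma \ref{lembadregions} to the at most $J$ pairwise $\ell$-distant $(E,m_\ell)$-nonregular boxes $\boldlambda_\ell(\bolda_s)$ supplied by hypothesis (iii), producing the enclosing region $\widetilde\Upsilon = \bigcup_{t=1}^T \boldlambda^{(N)}_{K_{j_t}\ell}(\boldu_t)$ satisfying \eqref{disjointboxes}--\eqref{sumKj}. By construction, every covering box $\boldlambda_\ell^{(N,\boldy)}$ with $\boldy\in\NboxLx\setminus\widetilde\Upsilon$ is $(E,m_\ell)$-regular, and by \eqref{distboxes2} each exit set $\partial_+ \boldlambda^{(N)}_{K_{j_t}\ell}(\boldu_t)$ lies outside $\widetilde\Upsilon$.

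For $\bolda,\boldb\in\NboxLx$ with $\norm{\bolda-\boldb}\ge L/100$, I iterate the resolvent identity in two cases mirroring those of Lemma \ref{part1prop1a}. In the regular case (analog of \eqref{asuitable}), regularity of $\boldlambda_\ell^{(\bolda)}$ gives
\[
\abs{G_{\NboxLx}(\bolda,\boldb;E)}\le s_{Nd}\,\ell^{Nd-1}\, e^{-m_\ell\norm{\boldu-\bolda}}\abs{G_{\NboxLx}(\boldv',\boldb;E)},
\]
with $\boldv'\in\partial_+\boldlambda_\ell^{(\bolda)}$ and $\ell/10\le\norm{\boldv'-\bolda}\le \ell+1$; in the nonregular case (analog of \eqref{asuitableb}), combining the resonance bound $\norm{G_{\boldlambda(t)}(E)}\le 2 e^{(K_{j_t}\ell)^\beta}$ from hypothesis (ii) with an immediate regular step at $\boldv'\notin\widetilde\Upsilon$ produces a composite step of cost at most $2 s_{Nd}^2 L^{Nd}\ell^{Nd-1}\, e^{(K_{j_t}\ell)^\beta}\, e^{-m_\ell(\ell/10)}$, where $\boldlambda(t):=\boldlambda^{(N)}_{K_{j_t}\ell}(\boldu_t)$. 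I iterate until forced to stop near $\boldb$, and cap the final Green's function by $2e^{L^\beta}$ via hypothesis (i).

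The counting argument of Lemma \ref{part1prop1a} yields $N(Y)\ge(Y/100-34JN^{N+1})\ell/(\ell+1)-2$ for the number of regular steps, and the total regular decay is $e^{-m_\ell D_{\mathrm{reg}}}$ with $D_{\mathrm{reg}}\ge\norm{\bolda-\boldb}-2\sum_t K_{j_t}\ell-2(\ell+1)$. The hardest, and really the only delicate, step is balancing the three loss terms against $m_\ell\norm{\bolda-\boldb}$ to extract the target bound $e^{-m_L\norm{\bolda-\boldb}}$ with $m_L\ge m_\ell-1/(2\ell^\kappa)$: (a) the resonant-distance loss $2m_\ell\sum_t K_{j_t}\ell$, relative to $\norm{\bolda-\boldb}\ge Y\ell/100$, gives a contribution $\lesssim JN^{N+1}m_0\,\ell^{1-\gamma}$, absorbed using $\kappa<\gamma-1$; (b) the terminal and intermediate exponential losses $L^\beta+\sum_t (K_{j_t}\ell)^\beta\lesssim L^\beta$ contribute $L^\beta/\norm{\bolda-\boldb}\lesssim \ell^{-\gamma(1-\beta)}$, absorbed using $\kappa<\gamma(1-\beta)$; (c) the polynomial prefactors $(\ell^{Nd-1})^{N(Y)}$ contribute $O(\log\ell)/\ell$ to the effective mass, absorbed using $\kappa<1$. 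Each requirement matches exactly one of the three explicit constraints on $\kappa$ in \eqref{mellcond}, delivering $m_L\ge m_\ell-1/(2\ell^\kappa)\ge 1/(2\ell^\kappa)\ge 1/L^\kappa$.
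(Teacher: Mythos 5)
Your proposal is correct and follows essentially the same route as the paper's proof: apply Lemma~\ref{lembadregions} to enclose the bad boxes in $\widetilde\Upsilon$, iterate the resolvent identity with an exponential regular step (tracking actual distance) and a composite nonresonant-plus-regular step through each $\boldlambda^{(N)}_{K_{j_t}\ell}(\boldu_t)$, cap with the nonresonance of $\NboxLx$, and absorb the three losses (distance lost in $\widetilde\Upsilon$, the $L^\beta$ resonance factors, and the polynomial prefactors) into the mass using exactly the three constraints $\kappa<\gamma-1$, $\kappa<\gamma(1-\beta)$, $\kappa<1$. The only cosmetic difference is bookkeeping: the paper folds the prefactor $s_{Nd}\ell^{Nd-1}$ into an intermediate mass $m_\ell^{\pr}$ at each step, while you accumulate $(\ell^{Nd-1})^{N(Y)}$ and absorb it at the end — these are equivalent.
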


\begin{proof}
Since there at most $J$ pairwise $\ell$-distant $N$-particle boxes in the suitable cover that are $(E,\,m_\ell)$-nonregular, we can find $\bolda_{s}\in \Xi_{L,\ell}^{(N)}(\boldx)$,  $s=1,2,\ldots, j \leq J$, such that the boxes 
 $\boldlambda_{\ell}(\bolda_{s})$ are pairwise 
 $\ell$-distant, $(E,\,m_\ell)$-nonregular boxes, and any box $\boldlambda_{\ell}(\bolda)$ with
 $\bolda \in \Xi_{L,\ell}^{(N)}(\boldx)$ which is $\ell$-distant from all the $\boldlambda_{\ell}(\bolda_{s})$  must be $(E,\,m_\ell)$-regular.  Proceeding as in the proof of Lemma~\ref{part1prop1a}, we obtain $\widetilde{\Upsilon}= \widetilde{\Upsilon}_{L,\ell}^{(N)}\pa{\set{\bolda_s}_{s=1}^J}$ as in \eq{disjointboxes}, satisfying the conclusions of Lemma~\ref{lembadregions}.  In particular, for  $\boldy  \in  \NboxLx \setminus \widetilde\Upsilon $, the boxes  $\boldlambda_{\ell}^{(N,\boldy)}$ and $\boldlambda_{\ell}(\bolda_s)$ are $\ell$-distant  for $s=1,2,\ldots, J$, and hence $\boldlambda_{\ell}^{(N,\boldy)}$ is a $(E,\,\theta)$-regular box.

Let $\bolda \in \NboxLx$.   Then, either $\boldlambda_{\ell}^{(\bolda)}$ is $(E,\,m_\ell)$-regular or $\boldlambda_{\ell}^{(\bolda)}$ is $(E,\,m_\ell)$-nonregular. We  proceed as follows: 

\begin{enumerate}
\item If   $\boldlambda_{\ell}^{(\bolda)}$ is $(E,\,m_\ell)$-regular, and  $ \boldb \in \NboxLx$ with $\boldb \notin \boldlambda_{\ell}^{(\bolda)}$, then 
we use the resolvent identity as in \eq{asuitable} to get 
\begin{align}\label{aregular}
&\abs{G_{\NboxLx}(\bolda, \boldb; E) }  \leq \abs{\partial \boldlambda_{\ell} ^{(\bolda)} } \, \left[\max_{(\boldu, \boldv) \in \partial \boldlambda_{\ell} ^{(\bolda)}} \abs{G_{\boldlambda_{\ell} ^{(\bolda)}}(\bolda, \boldu; E)\,G_{\NboxLx}(\boldv, \boldb; E)}         \right]  \notag\\
&\qquad \leq s_{Nd}\,\,\ell^{Nd-1}\,\,\,\,e^{-m_{\ell}\,\norm{\bolda - \boldb_1^\pr} }\,\,\abs{G_{\NboxLx}(\boldb_1, \boldb; E)}  \qtx{for some} (\boldb_1^\pr,\boldb_1)\in \partial\boldlambda_{\ell}^{(\bolda)} \notag\\
&\qquad \leq s_{Nd}\,\,\ell^{Nd-1}\,\,\,\,e^{-m_{\ell}\,\left( \norm{\bolda - \boldb_1} -1 \right)}\,\,\abs{G_{\NboxLx}(\boldb_1, \boldb; E)}  \qtx{for some} \boldb_1\in \partial_+ \boldlambda_{\ell}^{(\bolda)} \notag\\
&  \qquad \leq e^{-m_\ell^\pr \,\norm{\bolda - \boldb_1}}\,\abs{G_{\NboxLx}(\boldb_1, \boldb; E)} \qtx{for some} \boldb_1\in \partial_+ \boldlambda_{\ell}^{(\bolda)}.
\end{align}
Since  $ \tfrac \ell{10} \le \norm{\boldb_1-\bolda} \le \ell + 1$, and we assumed \eq{mellcond}, this holds with
\beq
m_\ell^\pr = (1 - \tfrac {10} \ell)m_\ell  - \tfrac {10} \ell \log \pa{s_{Nd}\ell^{Nd-1}}\ge 
  m_\ell - C_1(d,N,m_0)\tfrac {\log \ell}{\ell} >0.
\eeq

\item If $\boldlambda_{\ell}^{(\bolda)}$ is  $(E,\,m_\ell)$-nonregular, we must have  $\bolda \in  \widetilde\Upsilon$, and hence $\bolda \in \boldlambda^{(N)}_{K_{j_t}\ell}(\boldu_{t}) $ for some $t$. Let $ \boldb \in \NboxLx\setminus \boldlambda^{(N)}_{(K_{j_t}+2)\ell}(\boldu_{t})  $. Proceeding as in \eq{suitablebad}-\eq{asuitableb}, and using  \eq{aregular}, we get
\begin{align}\notag
\abs{G_{\boldlambda_L^{(N)}(\x)}(\bolda, \boldb; E)} & \leq s_{Nd}\pa{K_{j_t}\ell}^{Nd-1}e^{\pa{K_{j_t}\ell}^{\beta}}e^{-m_\ell^\pr \frac \ell{10}}
\abs{G_{\boldlambda_L^{(N)}(\x)}(\boldv^{\prime\prime}, \boldb; E)} \\
& < \label{aregularb}
\abs{G_{\boldlambda_L^{(N)}(\x)}(\boldv^{\prime\prime}, \boldb; E)} ,
\end{align}
for some $\boldv^{\prime\prime}\in \NboxLx$ with $\norm{\boldv^{\prime\prime}-\bolda} \le (\ell + 1) + (K_{j_t}\ell + 1)\le 2 K_{j_t}\ell$, since we have 
\beq
s_{Nd}\pa{K_{j_t}\ell}^{Nd-1}e^{\pa{K_{j_t}\ell}^{\beta}}e^{-m_\ell^\pr \frac \ell{10}}\le
s_{Nd}\pa{17JN^N\ell}^{Nd-1}e^{\pa{17JN^N\ell}^{\beta}}e^{-m_\ell^\pr \frac \ell{10}}
<1,
\eeq
by our choice of  $m_\ell$, provided that we take $\ell$ large enough.  

\end{enumerate}

Given $\bolda, \,\boldb \in \NboxLx$ with $\norm{\bolda - \boldb} \geq \tfrac{L}{100}$, we estimate $\abs{G_{\boldlambda_L^{(N)}(\x)}(\bolda, \boldb; E)}$, using repeatedly  \eq{aregular} and  \eq{aregularb}, as appropriate, and, when we must stop because we got too close to $\boldb$, using the hypothesis that  $\boldlambda_L^{(N)}(\x)$ is $E$-NR.   Similarly to \cite[Proof of Lemma~3.11]{GKber}, we can find  $\boldv_1,\boldv_2.\ldots,\boldv_R \in  \NboxLx$, such that
\beq
\sum_{r=1}^{R-1}\norm{\boldv_r - \boldv_{r+1}} + \sum_{t=1}^{T} 2K_{j_t}\ell+ (\ell +1) \ge \norm{\boldb-\bolda} ,
\eeq
so
\beq
\sum_{r=1}^{R-1}\norm{\boldv_r - \boldv_{r+1}}\ge\norm{\boldb-\bolda} -36 JN^{N+1}\ell.
\eeq
and we have 
\begin{align}\label{finalreges}
&\abs{G_{\boldlambda_L(\x)}(\bolda, \boldb; E)} \le \prod_{r=1}^{R-1} e^{-m_\ell^\pr \,\norm{\boldv_r - \boldv_{r+1}}}\abs{G_{\boldlambda_L^{(N)}(\x)}(\boldv_R, \boldb; E)}\\
& \quad \notag \leq   e^{-m_\ell^\pr \sum_{r=1}^{R-1}\norm{\boldv_r - \boldv_{r+1}}}e^{L^\beta}
 \le  e^{-m_\ell^\pr \,\norm{\bolda - \boldb}+m_\ell^\pr \, 36JN^{N+1}\ell+ \ell^{\gamma \beta} } \le e^{-m_\L\,\norm{\bolda - \boldb}},
\end{align}
where, using \eq{mellcond},
\begin{align}
m_L &= m_\ell^{\pr}-\tfrac {100}L \pa{36JN^{N+1}m_\ell^{\pr} \ell + \ell^{\gamma \beta} }\\
&\ge m_\ell - \tfrac {C_1(d,N,m_0) \log \ell}{\ell} -  \tfrac {C_2(d,N,m_0,J)} {\ell^{\gamma-1}}  - \tfrac {100}{\ell^{\gamma(1-\beta)}}
 \ge m_\ell - \tfrac 1 {2\ell^\kappa}\ge  \tfrac 1 {2\ell^\kappa}\ge \tfrac 1 {\L^\kappa}.\notag
\end{align}

We proved that  $\NboxLx$ is $\left(E, \,m_L \right)$-regular for $L$ large, with $m_L$ as in \eq{mlmL}.
\end{proof}

\begin{proof}[Proof of Proposition~\ref{part2mainthm}]
Given a scale $L$ and $m_L>0$, we set
\beq
p_L(m_L)= \sup_{\boldx \in \R^{Nd}} \P \Bigl\{ \NboxLx    \,\,is\,\,(m_L, \,E) \text{-nonregular} \Bigr\} .
\eeq

We start by showing that there exists  $\widetilde{Z_1}$ such that if  $p_\ell(m_\ell) \le \ell^{-p}$, where $m_\ell$ satisfies \eq{mellcond},    and $\ell \geq \widetilde{Z_1}$, then, setting $L = \ell^{\gamma}$, we have
 $p_L(m_L) \le  L^{-p}$ with $m_L$ as in \eq{mlmL}.
 
 Let $\NboxLx$ be an $N$-particle box and
 $J\in \N$.  We define several events:  $\cE=  \Bl\{\NboxLx \; \text{ is } (m_L, E)\text{-nonregular}\Br\}$, $\cA$ is the event that at least one of the PI boxes in  ${\mathcal C}_{L,\ell} (\boldx)$ is  $(m_L, \,E)$\text{-nonregular}, 
$\cW_J$ is the event that (i) and (ii) in Lemma~\ref {part2prop1a} hold, and $\cF_J$ is the event that (iii) in Lemma~\ref {part2prop1a} holds.   It follows from Lemma~\ref {part2prop1a}  that
\beq
\P\set{\cE} \le \P\set{\cW_J^c}  + \P\set{\cF_{J}^c} \le  \P\set{\cW_J^c}  + \P\set{\cF_{J}^c\cap \cA^c}+\P\set{\cA}.
\eeq

Lemma~\ref{PINSl} yields (large $\ell$, so $m_\ell \le m^*_{\tau}(\ell)$ in Lemma~\ref{PINSl})
\beq \label{PcE13}
\P(\cA) \leq \pa{\tfrac {2L}\ell}^{Nd}\,\ell^{Nd}e^{-\ell^{\tau}} \leq \tfrac{1}{4} L^{-p}.
\eeq
Theorem \ref{Wegner0} implies
\begin{align}\label{PcE23}
\P(\cW_J^c) & \leq {N\, \norm{\rho}_\infty} \pa{L^{Nd}e^{-L^\beta} + \pa{\tfrac {2L}\ell}^{Nd}\sum_{j=1}^{JN^N}
(K_j\ell)^{Nd}e^{-(K_j\ell)^{\beta}}}\\
& \le {N\, \norm{\rho}_\infty}  \pa{1 +(2\ell^{\gamma-1})^{Nd} JN^N  } e^{-\frac 1 2\ell^\beta}\le \tfrac{1}{4} L^{-p}.
\end{align}
To estimate $\P\set{\cF_{J}^c\cap \cA^c}$, note that if $\bom \in \cF_{J}^c\cap \cA^c$, then there exist  $J+1$  FI pairwise $\ell$-distant boxes in the suitable cover that are $(\theta, E)$-nonregular.   By Lemma~\ref{part1prop2} these boxes are fully separated.   Thus
\beq\label{PcE33}
\P \set{\cF_{J}^c\cap \cA^c} \leq  \pa{\tfrac {2L}\ell}^{(J+1)Nd} \pa{p_\ell(m_\ell)}^{J+1}\le \tfrac{1}{2}L^{-p} + (2\ell^{\gamma-1})^{(J+1)Nd} \ell^{-(J+1)p}.
\eeq
We now take $J=1$, require $1< \gamma <1 + \frac p{p+2Nd}$ and conclude that, 
since $\boldx \in \R^{Nd}$ is arbitrary, 
 \beq \label{part1keyeq33}
 p_L (m_L)\leq \tfrac{1}{2}L^{-p} + (2\ell^{\gamma-1})^{2Nd} \ell^{-2p}\le \tfrac{1}{2}L^{-p} + \tfrac{1}{2}L^{-p} \le L^{-p} .
  \eeq

   We now fix  $L_0$ and $m_0>0$. We take $L_0$ is sufficiently large, so $m_0 \ge \L_0^{-\kappa}$  We  set $L_k =  L_{k-1}^\gamma$ and $m_k=m_{k-1}- \frac 1 {2L_{k-1}^\kappa}$ for $k=1,2,\ldots$, and let $p_k=p_{L_k}(m_k)$. If    $p_0\le    L_0^{-p}$, we conclude that 
 $p_k\le  L_k^{-p}$ for $k=1,2,\ldots$.  Moreover,
 \beq \label{halfmass}
 m_0- m_k \le \sum_{j=1}^\infty (m_{j-1}-m_{j}) = \tfrac 12\sum_{j=1}^\infty L_{j-1}^{-\kappa}
 =\tfrac 12\sum_{j=1}^\infty L_{0}^{-\kappa\gamma^{j-1}}\le \tfrac {m_0}2,
 \eeq
 so $m_k \ge  \tfrac {m_0}2$ for $k=1,2,\ldots$. 
 \end{proof}

\subsection{The third multiscale analysis}

\begin{proposition}  \label{part3mainthm}
Let $0 < \zeta_1 < \zeta_0 < 1$ as in  \eq{constant}, $E\in\R,$ and  assume   $Y\ge \pa{3800N^{N+1}}^{\frac 1 {1-\zeta_0}}$. Then there exists $Z_2^{*} > L_{\tau}$ such that, if for some scale $L_0 > Z_2^{*}$ we have 
\beq   
\sup_{\x \in \R^{Nd}}\P \Bl\{ \boldlambda_{L_0}^{(N)}(\x) \text{ is } (\zeta_0, E)\text{-nonSES} \Br\}\le  \pa{2\pa{2Y}^{Nd}}^{-\tfrac {1}{Y^{\zeta_0} -1}}	,
\eeq
then, setting $L_{k+1}= Y\,L_k$, $k=0, 1, 2, ...,$ there exists $K_1 \in \mathbb{N}$ such that for every $k \geq K_1$ we have  
\beq 
\sup_{\x \in \R^{Nd}}\P \Bl\{\boldlambda _{L_k}(\boldx) \text{ is } (\zeta_0, E)\text{-nonSES}\Br\} \le  e^{-L_k^{\zeta_1}}.
\eeq
As a consequence, for every $k \geq K_1$, we have 
\beq 
\sup_{\x \in \R^{Nd}} \P \set{\boldlambda_{L_k}(\boldx) \text{ is } \left(L_k^{\zeta_0-1}, E\right)\text{-nonregular}} \le  e^{-L_k^{\zeta_1}}.
\eeq
\end{proposition}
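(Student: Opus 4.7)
The plan is to mimic the proof of Proposition~\ref{part1mainthm}, replacing the polynomial Green's function bound $L^{-\theta}$ by the subexponential bound $e^{-\ell^{\zeta_0}}$ and replacing the suitably nonresonant condition by $(E,\beta)$-nonresonance. First I would establish a deterministic analog of Lemma~\ref{part1prop1a}: if $\NboxLx$ is $(E,\beta)$-nonresonant, every sub-box $\mathbf{\Lambda}^{(N)}_{K_j\ell}(\boldu) \subseteq \NboxLx$ with $\boldu \in \Xi_{L,\ell}(\boldx)$ and $1 \le j \le JN^N$ is $(E,\beta)$-nonresonant, and at most $J$ pairwise $\ell$-distant $(\zeta_0,E)$-nonSES boxes appear in the $\ell$-suitable cover, then $\NboxLx$ is $(\zeta_0,E)$-SES, provided $L = Y\ell$ and $Y \ge 4000 J N^{N+1}$. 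The resolvent iteration parallels \eq{asuitable}--\eq{NYest}: on a good $\ell$-sub-box one picks up $s_{Nd}\ell^{Nd-1}e^{-\ell^{\zeta_0}}$; on a resonance region $\boldlambda^{(N)}_{K_{j_t}\ell}(\boldu_t)$ grouped by Lemma~\ref{lembadregions}, $(E,\beta)$-nonresonance produces a growth $e^{(K_{j_t}\ell)^\beta}$, which is absorbed by the next good-box factor thanks to $\beta<\zeta_0$. Summing as in \eq{NYest} then yields $|G_{\NboxLx}(E;\bolda,\boldb)|\le e^{-L^{\zeta_0}}$ whenever $\|\bolda-\boldb\|\ge L/100$.

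Second, I would turn this into a probability bound just as in the proof of Proposition~\ref{part1mainthm}. Introduce the events $\cE$, $\cA$, $\cW_J$, $\cF_J$ by analogy. Lemma~\ref{PINSl} with $\varsigma=\tau$ bounds $\P(\cA)$; Theorem~\ref{Wegner0} applied to $\NboxLx$ and to each $\mathbf{\Lambda}^{(N)}_{K_j\ell}(\boldu)$ with the subexponentially small interval $[-e^{-L^\beta},e^{-L^\beta}]$ bounds $\P(\cW_J^c)$; and since $\beta>\zeta_1$, both contributions are dominated by $\tfrac14 e^{-L^{\zeta_1}}$. Lemma~\ref{part1prop2} together with full separation and independence of FI boxes bounds $\P(\cF_J^c\cap\cA^c)$ by $(2Y)^{(J+1)Nd} p_\ell^{J+1}$, where $p_\ell := \sup_{\boldx}\P\{\mathbf{\Lambda}^{(N)}_\ell(\boldx)\text{ is }(\zeta_0,E)\text{-nonSES}\}$. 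Choosing $J$ with $J+1=\lceil Y^{\zeta_0}\rceil$, which is compatible with $Y\ge 4000 JN^{N+1}$ under the stated hypothesis $Y\ge(3800 N^{N+1})^{1/(1-\zeta_0)}$, produces the recursion
\beq\label{plansesrec}
p_L \le \tfrac{1}{2} e^{-L^{\zeta_1}} + (2Y)^{(J+1)Nd}\, p_\ell^{J+1}.
\eeq

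Third, I would iterate \eq{plansesrec} along $L_{k+1}=YL_k$ as in \eq{proc1}--\eq{proc12}: while $p_k > e^{-L_k^{\zeta_1}}$,
\beq\label{planit}
p_k \le \pa{2(2Y)^{Nd}}^{-1/J}\pa{\pa{2(2Y)^{Nd}}^{1/J} p_0}^{(J+1)^k}.
\eeq
Since $J\ge Y^{\zeta_0}-1$, the hypothesis $p_0 \le \pa{2(2Y)^{Nd}}^{-1/(Y^{\zeta_0}-1)}$ makes the base in \eq{planit} strictly less than $1$, so $p_k$ decays at double-exponential rate with exponent $(J+1)^k\ge Y^{k\zeta_0}$. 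Because $\zeta_0>\zeta_1$, this rate eventually outgrows $L_k^{\zeta_1}=L_0^{\zeta_1}Y^{k\zeta_1}$, yielding a finite $K_1$ with $p_k \le e^{-L_k^{\zeta_1}}$ for all $k\ge K_1$. The regularity consequence then follows immediately from Remark~\ref{goodbox}(iv).

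The main obstacle is the coupled balancing in the choice of $J$. The deterministic lemma forces $Y\gtrsim JN^{N+1}$, while the iteration in Step~3 needs $J+1>Y^{\zeta_1}$; taking $J+1\sim Y^{\zeta_0}$ (strictly larger since $\zeta_1<\zeta_0$) satisfies both but feeds back to the condition $Y^{1-\zeta_0}\gtrsim N^{N+1}$, which is exactly the stated hypothesis on $Y$. A secondary technical point is verifying that the Wegner estimate with the exponentially small interval $e^{-L^\beta}$ still gives total probability below $\tfrac14 e^{-L^{\zeta_1}}$; this uses $\beta>\zeta_1$ to absorb the polynomial volume factors in Theorem~\ref{Wegner0}.
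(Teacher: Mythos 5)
Your proposal is correct and follows essentially the same route as the paper: the same deterministic lemma (with $(E,\beta)$-nonresonance absorbing the $e^{(K_{j_t}\ell)^\beta}$ growth via $\beta<\zeta_0$, and the count $N(Y)\gtrsim Y^{\zeta_0}$ of good steps), the same event decomposition with Lemma~\ref{PINSl}, Theorem~\ref{Wegner0} and full separation of FI boxes, the same choice $J+1\approx Y^{\zeta_0}$ balancing $Y\gtrsim JN^{N+1}$ against $J+1>Y^{\zeta_1}$, and the same double-exponential iteration using $\zeta_0>\zeta_1$. The only deviations are immaterial constants (your $4000JN^{N+1}$ versus the paper's $200(18JN^{N+1}+Y^{\zeta_0})$, and $\lceil Y^{\zeta_0}\rceil$ versus $\lfloor Y^{\zeta_0}\rfloor$).
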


To prove the proposition, we use the following deterministic lemma.

\begin{lemma} \label{part3prop1a}
Let  $L = Y \ell$, where   $Y\ge \pa{3800N^{N+1}}^{\frac 1 {1-\zeta_0}}$, and set $J = \lfloor  Y^{\zeta_0}\rfloor$, the largest integer $\le  Y^{\zeta_0}$.
 Suppose the following are true:
\begin{enumerate}
\item $\NboxLx$ is $E$-nonresonant.

\item  Every box  $\mathbf{\Lambda}_{K_j\ell}^{(N)}(\boldu)\subseteq \NboxLx$, with $\boldu \in \Xi_{L,\ell}(\boldx)$, $j=1,2,\ldots, J N^{N}$,  where $K_j$ is
 given  in \eq{Kj},  is $E$-nonresonant.

\item There are at most J pairwise $\ell$-distant, $(E,\,\zeta_0)$-nonSES boxes in the suitable cover.
\end{enumerate}

Then $\NboxLx$ is $(E,\,\zeta_0)$-SES, provided $\ell$ is sufficiently large.
\end{lemma}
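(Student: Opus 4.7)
The plan is to follow the template already set by Lemma \ref{part1prop1a} and Lemma \ref{part2prop1a}, adjusted for the subexponential regime. First I would invoke hypothesis (iii) together with Lemma \ref{lembadregions} applied to the (at most $J$) pairwise $\ell$-distant $(E,\zeta_0)$-nonSES boxes in ${\mathcal C}^{(N)}_{L,\ell}(\boldx)$, obtaining $\widetilde{\Upsilon}=\bigcup_{t=1}^T \boldlambda^{(N)}_{K_{j_t}\ell}(\boldu_t)\subseteq\NboxLx$ with $T\le JN^N$ satisfying \eqref{disjointboxes}--\eqref{sumKj}, and such that every covering box $\boldlambda_\ell^{(N,\boldy)}$ with $\boldy\in\NboxLx\setminus\widetilde\Upsilon$ is $(E,\zeta_0)$-SES.

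Next I would fix $\bolda,\boldb\in\NboxLx$ with $\norm{\bolda-\boldb}\ge L/100$ and iterate the resolvent identity starting at $\bolda$, exactly mimicking the dichotomy in Lemma \ref{part1prop1a}. At an intermediate point $\bolda'\notin\widetilde\Upsilon$ the identity on the SES box $\boldlambda_\ell^{(\bolda')}$ yields an advance of roughly $\ell$ with gain $e^{-\ell^{\zeta_0}}$ up to a polynomial factor $s_{Nd}\ell^{Nd-1}$. At $\bolda'\in\boldlambda^{(N)}_{K_{j_t}\ell}(\boldu_t)$ the identity on that box, combined with its $E$-nonresonance from (ii) giving $\norm{G_{\boldlambda^{(N)}_{K_{j_t}\ell}(\boldu_t)}(E)}\le 2e^{(K_{j_t}\ell)^\beta}$, yields an advance of at most $2K_{j_t}\ell$ at the price of a factor $2s_{Nd}(K_{j_t}\ell)^{Nd-1}e^{(K_{j_t}\ell)^\beta}$. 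When the expansion must stop (the endpoint lands inside the covering box containing $\boldb$, or inside a bad region containing $\boldb$), I would apply (i) to bound the final Green's function by $\norm{G_{\NboxLx}(E)}\le 2e^{L^\beta}$.

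Let $N(Y)$ denote the number of SES steps. The geometric counting of \eqref{NYest11}--\eqref{NYest}, with $J=\lfloor Y^{\zeta_0}\rfloor$ replacing the role it plays in Lemma \ref{part1prop1a}, gives $N(Y)\ge Y/100-34JN^{N+1}-O(1)$. The hypothesis $Y\ge(3800N^{N+1})^{1/(1-\zeta_0)}$ is equivalent to $Y/100\ge 38\,Y^{\zeta_0}N^{N+1}$, so $N(Y)\ge 4Y^{\zeta_0}N^{N+1}\ge 2Y^{\zeta_0}$ for $\ell$ large. The product bound reads
\begin{equation*}
\abs{G_{\NboxLx}(\bolda,\boldb;E)}
\le \bigl(s_{Nd}\ell^{Nd-1}e^{-\ell^{\zeta_0}}\bigr)^{N(Y)}
\prod_{t=1}^T \bigl[2s_{Nd}(K_{j_t}\ell)^{Nd-1}e^{(K_{j_t}\ell)^\beta}\bigr]\cdot 2e^{L^\beta}.
\end{equation*}
Since $\beta<\zeta_0$, $K_{j_t}\ell\le L$, and $T\le JN^N\le Y^{\zeta_0}N^N$, the exponent on the right is at most $-N(Y)\ell^{\zeta_0}+CL^\beta+O(L^{Nd}\log L)\le -2L^{\zeta_0}+o(L^{\zeta_0})\le -L^{\zeta_0}$ for $\ell$ large, yielding the SES conclusion.

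The main obstacle, as in the analogous lemmas in Sections 3.1 and 3.2, is the bookkeeping of constants rather than any novel structural idea: the SES-step decay $N(Y)\ell^{\zeta_0}$ must simultaneously dominate the bad-region contribution $\sum_t(K_{j_t}\ell)^\beta$, the global nonresonance factor $e^{L^\beta}$, and the accumulated polynomial pre-factors from $s_{Nd}\ell^{Nd-1}$ and $s_{Nd}(K_{j_t}\ell)^{Nd-1}$. The precise exponent $Y\ge(3800N^{N+1})^{1/(1-\zeta_0)}$ is calibrated so that $Y/100-34JN^{N+1}\gtrsim Y^{\zeta_0}$, yielding a definite positive fraction of the target rate $L^{\zeta_0}=Y^{\zeta_0}\ell^{\zeta_0}$ with enough slack to absorb all subleading losses once $\ell$ is sufficiently large.
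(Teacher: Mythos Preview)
Your proposal is correct and follows essentially the same route as the paper: invoke Lemma~\ref{lembadregions} to isolate $\widetilde\Upsilon$, iterate the resolvent identity with the SES/bad-region dichotomy, and use the geometric count \eqref{NYest11}--\eqref{NYest} together with $Y\ge(3800N^{N+1})^{1/(1-\zeta_0)}$ to secure $N(Y)\ge 2Y^{\zeta_0}$. The only difference is cosmetic---the paper absorbs each bad-region step into the subsequent SES step so the combined factor is $<1$ and the final bound reads simply $(s_{Nd}\ell^{Nd-1}e^{-\ell^{\zeta_0}})^{N(Y)}e^{L^\beta}$, whereas you keep the bad-region factors separate; note that your written subleading term ``$O(L^{Nd}\log L)$'' is a slip (the polynomial prefactors and bad-region exponents together contribute at most $O(N(Y)\log\ell)+T\cdot L^\beta=o(L^{\zeta_0})$, which is what you need).
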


\begin{proof}
Since there at most $J$ pairwise $\ell$-distant $N$-particle boxes in the suitable cover that are $(E,\,\zeta_0)$-nonSES, we can find $\bolda_{s}\in \Xi_{L,\ell}^{(N)}(\boldx)$,  $s=1,2,\ldots, j \leq J$, such that the boxes 
 $\boldlambda_{\ell}(\bolda_{s})$ are pairwise 
 $\ell$-distant, $(E,\,\zeta_0)$-nonSES boxes, and any box $\boldlambda_{\ell}(\bolda)$ with
 $\bolda \in \Xi_{L,\ell}^{(N)}(\boldx)$ which is $\ell$-distant from all the $\boldlambda_{\ell}(\bolda_{s})$  must be $(E,\,\zeta_0)$-SES. Applying Lemma~\ref{lembadregions}, we obtain
 $\widetilde{\Upsilon}= \widetilde{\Upsilon}_{L,\ell}^{(N)}\pa{\set{\bolda_s}_{s=1}^J}$ as in \eq{disjointboxes}, satisfying the conclusions of that lemma.  In particular, for  $\boldy  \in  \NboxLx \setminus \widetilde\Upsilon $, the boxes  $\boldlambda_{\ell}^{(N,\boldy)}$ and $\boldlambda_{\ell}(\bolda_s)$ are $\ell$-distant  for $s=1,2,\ldots, J$, and hence $\boldlambda_{\ell}^{(N,\boldy)}$ is a $(E,\,\zeta_0)$-SES box.

Let $\bolda \in \NboxLx$.   Then, either $\boldlambda_{\ell}^{(\bolda)}$ is $(E,\,\zeta_0)$-SES or $\boldlambda_{\ell}^{(\bolda)}$ is $(E,\,\zeta_0)$-nonSES. We  proceed as follows: 

\begin{enumerate}
\item If   $\boldlambda_{\ell}^{(\bolda)}$ is $(E,\,\zeta_0)$-SES, and  $ \boldb \in \NboxLx\setminus \boldlambda_{\ell}^{(\bolda)}$,  we use 
the resolvent identity to get 
\begin{align}\notag
&\abs{G_{\boldlambda_L^{(N)}(\x)}(\bolda, \boldb; E)}  \leq s_{Nd}\,\,\ell^{Nd-1}\,e^{-\ell^{\zeta_0}}\abs{G_{\boldlambda_L^{(N)}(\x)}(\boldv^{\prime}, \boldb; E)}\sqtx{for some} \boldv^{\prime}\in \partial_+ \boldlambda_{\ell}^{(\bolda)},\\
&\qquad \qquad\text{so}\quad  \tfrac \ell{10} \le \norm{\boldv^{\prime}-\bolda} \le \ell + 1.
\label{ases}
\end{align}

\item If $\boldlambda_{\ell}^{(\bolda)}$ is  $(E,\,\zeta_0)$-nonSES, we must have  $\bolda \in  \widetilde\Upsilon$, and hence $\bolda \in \boldlambda^{(N)}_{K_{j_t}\ell}(\boldu_{t}) $ for some $t$.  Let $ \boldb \in \NboxLx\setminus \boldlambda^{(N)}_{(K_{j_t}+2)\ell}(\boldu_{t})  $.  Applying  the resolvent identity, and using the fact that $\mathbf{\Lambda}(t):=\boldlambda^{(N)}_{K_{j_t}\ell}(\boldu_{t}) $ is $E$-nonresonant by hypothesis, we get
\begin{align}\notag
&\abs{G_{\boldlambda_L^{(N)}(\x)}(\bolda, \boldb; E)} \leq \left|\partial \boldlambda(t)  \right| \, \left[\max_{(\boldu, \boldv) \in \partial\boldlambda(t)} 
\abs{G_{\boldlambda(t)}(\bolda, \boldu; E)\,G_{\boldlambda_L^{(N)}(\x)}(\boldv, \boldb; E)}\right] \\    \label{sesbad}
& \quad \leq s_{Nd}\pa{K_{j_t}\ell}^{Nd-1}e^{\pa{K_{j_t}\ell}^{\beta}}\abs{G_{\boldlambda_L^{(N)}(\x)}(\boldv^{\prime}, \boldb; E)}\sqtx{for some} \boldv^{\prime}\in \partial_+\boldlambda(t),  \end{align}
where $\boldlambda_{\ell}^{(\boldv^{\prime})}$ is $(E,\,\zeta_0)$-SES.   We use \eq{ases} with $\bolda=\boldv^{\prime}$, getting
\begin{align}\notag
\abs{G_{\boldlambda_L^{(N)}(\x)}(\bolda, \boldb; E)} & \leq s^2_{Nd}L^{2(Nd-1)} e^{\pa{17JN^{N+1}\ell}^\beta}  e^{-\ell^{\zeta_0}}          \abs{G_{\boldlambda_L^{(N)}(\x)}(\boldv^{\prime\prime}, \boldb; E)} \\
& \le  \label{asesb}
\abs{G_{\boldlambda_L^{(N)}(\x)}(\boldv^{\prime\prime}, \boldb; E)} \sqtx{for some} \boldv^{\prime\prime}\in \partial_+\boldlambda_\ell^{(\boldv^{\prime})},\\
&\text{so} \quad   \norm{\boldv^{\prime\prime}-\bolda} \le (\ell + 1) + (K_{j_t}\ell + 1)< 18JN^{N+1}\ell,
\notag
\end{align}
if we can guarantee $s^2_{Nd}L^{2(Nd-1)} e^{\pa{17JN^{N+1}\ell}^\beta}  e^{-\ell^{\zeta_0}}  < 1$.    Since $L = Y\,\ell$, we need $$s^2_{Nd}Y^{2\pa{Nd-1}} \ell^{2\pa{Nd-1}}e^{\pa{17JN^{N+1}\ell}^\beta}   e^{-\ell^{\zeta_0}} \leq 1,$$ which is certainly true since $\beta < \zeta_0$, provided that we take $\ell$ large enough.  
\end{enumerate}

Given $\bolda, \,\boldb \in \NboxLx$ with $\norm{\bolda - \boldb} \geq \tfrac{L}{100}$, we estimate $\abs{G_{\boldlambda_L^{(N)}(\x)}(\bolda, \boldb; E)}$ by, when possible, repeatedly using either \eq{ases} or \eq{asesb}, as appropriate,  and, when we must stop because we got too close to $\boldb$, using the hypothesis that  $\boldlambda_L^{(N)}(\x)$ is $E$-NR, obtaining
\begin{equation}\label{finalses}
\abs{G_{\boldlambda_L^{(N)}(\x)}(\bolda, \boldb; E)} \leq \pa{s_{Nd}\,\,\ell^{Nd-1}\,e^{-\ell^{\zeta_0}}}^{N(Y)}\,\,e^{L^{\beta}},
\end{equation}
where $N(Y)$ is the  number of times we used \eq{ases}. We can always use either \eq{ases} or \eq{asesb},  unless we got to some    $\boldv$ where $\boldlambda_{\ell}^{(\boldv)}$ is $(E, \zeta_0)$-SES and
 $\boldb \in \boldlambda_{\ell}^{(\boldv)}$, or  $\boldv \in \boldlambda^{(N)}_{K_{j_t}\ell}(\boldu_{t})  $ for some $t$ and  $\boldb 
\in \boldlambda^{(N)}_{K_{j_t}\ell}(\boldu_{t}) $.  As in the proof of Lemma~\ref{part1prop1a}, we have \eq{NYest11} and \eq{NYest}.

If we have  
\beq\label{Yzeta}
N(Y)\ge 2Y^{\zeta_0},
\eeq  
 it follows from
\eq{finalses} that for $\bolda, \,\boldb \in \NboxLx$ with $\norm{\bolda - \boldb} \geq \tfrac{L}{100}$, and $L$ sufficiently large,  we have 
\begin{align}
\abs{G_{\boldlambda_L^{(N)}(\x)}(\bolda, \boldb; E)} \le e^{-L^{\zeta_0}},
\end{align}
and we conclude that $\NboxLx$ is $\left(E, \,\zeta_0 \right)$-SES.

To finish the proof we need to show that we can guarantee  \eq{Yzeta} for large $\ell$.
 It follows from \eq{NYest} that it suffices to have
$Y\ge 200\pa{18JN^{N+1}  +Y^{\zeta_0} }$.  
We fix $J = \lfloor  Y^{\zeta_0}\rfloor$, the largest integer $\le  Y^{\zeta_0}$, so 
it suffices to require $Y\ge \pa{3800N^{N+1}}^{\frac 1 {1-\zeta_0}}$ to get \eq{Yzeta} .
\end{proof}

\begin{proof}[Proof of Proposition~\ref{part3mainthm}]
Given a scale $L$, we set
\beq
p_L= \sup_{\boldx \in \R^{Nd}} \P \Bigl\{ \NboxLx    \,\,is\,\,(\zeta_0, \,E) \text{-nonSES} \Bigr\} .
\eeq
We assume $L=Y \ell$,  with  $ \ell$ is sufficiently large when necessary.

We proceed as in the proof of Proposition~\ref{part1mainthm}. Let $\NboxLx$ be an $N$-particle box with an $\ell$-suitable cover, 
${\mathcal C}_{L,\ell} (\boldx)$, where $L = Y \ell$.  Assume   $Y\ge \pa{3800N^{N+1}}^{\frac 1 {1-\zeta_0}}$, and set $J = \lfloor  Y^{\zeta_0}\rfloor$ We define  events
$\cE$, $\cA$, $\cW_J$, $\cF_J$ as in the proof of Proposition~\ref{part1mainthm}, with 
$(\zeta_0, \,E)$-nonSES boxes instead of  $(\theta, E)$-nonsuitable boxes, etc.  It follows from Lemma~\ref{part3prop1a} that \eq{PcE} holds. Using   Lemma~\ref{PINSl} we get 
\beq \label{PcE1ses}
\P(\cA) \leq (2Y)^{Nd}\,\ell^{Nd}e^{-\ell^{\tau}} \leq \tfrac{1}{4} e^{-L^{\zeta_1}}.
\eeq
Proceeding as in \eq{PcE23}, with our choice of $\beta$  Theorem \ref{Wegner0} implies
\begin{align}\label{PcE23ses}
\P(\cW_J^c) 
& \le {N\, \norm{\rho}_\infty}  \pa{1 +(2\ell^{\gamma-1})^{Nd} JN^N  } e^{-\frac 1 2\ell^\beta}\le \tfrac{1}{4} e^{-L^{\zeta_1}}.
\end{align}
We also have \eq{PcE3}, so, similarly to \eq{part1keyeq}, we get
 \beq \label{part1keyeqses}
 p_L \leq \tfrac{1}{2} e^{-L^{\zeta_1}} + (2Y)^{(J+1)Nd} p_\ell^{J+1}.
  \eeq

Since  $J+1=    \lfloor  Y^{\zeta_0}\rfloor + 1> Y^{\zeta_0}>Y^{\zeta_1}$, it follows immediately from \eq{part1keyeqses} that  $p_\ell \le e^{-\ell^{\zeta_1}}$ implies $p_L \le  e^{-L^{\zeta_1}}$.

We now fix  $L_0$, set $L_k = Y L_{k-1}$ for $k=1,2,\ldots$, and let $p_k=p_{L_k}$.  To finish the proof,  we need to  show  that, if $p_0< \tfrac 12\pa{2Y}^{-Nd}$,  for sufficiently large $L_0$,
 we have
 \beq
 K_0=\inf  \set{k=0,1,\ldots\; | \; p_k \le e^{-L^{\zeta_1}}} < \infty.
 \eeq

It follows from  equation \eqref{part1keyeqses} that 
  \begin{align}\label{part1mainthmeq10}
  p_{k+1}\le \tfrac{1}{2} e^{-L_{k+1}     ^{\zeta_1}} +  \pa{\pa{2Y}^{Nd} p_k}^{J+1} \qtx{for} k=1,2\ldots..  
    \end{align}
If $k+1<K_0$, we conclude that
  \beq \label{proc122}
   p_{k+1}< 2 \pa{\pa{2Y}^{Nd} p_k}^{J+1}.
  \eeq
   If $K_0 = 0$ or $K_0=1$  we are done.  If not,  we have $p_1 \le 2 \pa{\pa{2Y}^{Nd} p_0}^{J+1}$.   If $K_0 >2$ and $k < K_0$, proceeding as in  \eq{proc1}-\eq{proc122} we get
 \beq\label{proc1222}
 e^{-\pa{Y^k  L_0}^{\zeta_1}}=e^{-L_{k}^{\zeta_1}}< p_k \le 
  \pa{2\pa{2Y}^{Nd}}^{\tfrac {(J+1)^k-1}{(J+1)-1}} p_0^{(J+1)^k}.
 \eeq
 Since $Y^{\zeta_0} -1< J=    \lfloor  Y^{\zeta_0}\rfloor\le  Y^{\zeta_0}  $,  
and
 we assume $\pa{2\pa{2Y}^{Nd}}^{\tfrac {1}{Y^{\zeta_0} -1}} p_0<1$,
  we get
 \beq
 e^{-Y^{k\zeta_1}  L_0^{\zeta_1}}\le \pa{ \pa{2\pa{2Y}^{Nd}}^{\tfrac {1}{Y^{\zeta_0} -1}} p_0}^{ Y^{k\zeta_0}}.
 \eeq
Since $\zeta_0 > \zeta_1$ and $\pa{2\pa{2Y}^{Nd}}^{\tfrac {1}{Y^{\zeta_0} -1}} p_0<1$, we conclude that $K_0 < \infty$.
\end{proof}

\subsection{The fourth multiscale analysis}

We fix $\zeta, \, \tau, \beta, \, \zeta_1,\, \zeta_2,\, \gamma$ as in \eq{constant}.

\subsubsection{The single energy multiscale analysis}
\begin{proposition} \label{part4mainthm1}
Let   $0<m_0 <m^* = m_{\tau}$. 
 Then  there exists a length scale $Z_{3}^{*}$ such that, given an energy  $E \in \R$,  if for some $L_0 \geq Z_{3}^{*}$ we can verify
\begin{align} 
\sup_{\bolda \in \R^{Nd}}\P \Bigl\{   \mathbf{\Lambda}_{L_0}^{(N)}(\bolda)  \sqtx{is}  \left ({m_0}, \,E \right )\text{-nonregular} \Bigr\}    \leq e^{-L_0^{\zeta_2}},
\end{align}
then for sufficiently large $L$ we have
\begin{align} 
\sup_{\bolda \in \R^{Nd}}\P \Bigl\{  \mathbf{\Lambda}_{L}^{(N)}(\bolda)  \sqtx{is}  \left (\tfrac {m_0}2, \,E \right )\text{-nonregular} \Bigr\}    \leq e^{-L^{\zeta_2}}.
\end{align}
\end{proposition}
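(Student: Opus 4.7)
The plan is to adapt the structure of Proposition~\ref{part2mainthm} to the subexponential regime, iterating along scales $L_{k+1} = L_k^\gamma$ while preserving the bound $e^{-L^{\zeta_2}}$ up to a small loss in the mass at each step. First I would prove a deterministic lemma parallel to Lemma~\ref{part2prop1a}: if $\NboxLx$ and every auxiliary subbox $\mathbf{\Lambda}^{(N)}_{K_j \ell}(\boldu)$ produced by Lemma~\ref{lembadregions} are $(E,\beta)$-nonresonant, and if the suitable $\ell$-cover ${\mathcal C}^{(N)}_{L,\ell}(\boldx)$ contains at most $J$ pairwise $\ell$-distant $(m_\ell, E)$-nonregular boxes, then $\NboxLx$ is $(m_L, E)$-regular with $m_L \ge m_\ell - \tfrac{1}{2\ell^\kappa}$ for a small $\kappa>0$. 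The proof is via the same two-step resolvent iteration as in Lemma~\ref{part2prop1a}: the regular-box step \eqref{aregular} contributes $e^{-m_\ell \norm{\cdot}}$, while each crossing of a bad cluster $\boldlambda_{K_{j_t}\ell}^{(N)}(\boldu_t)$ picks up a factor bounded by $L^{O(1)} e^{L^\beta} e^{-m_\ell \ell/10}$, which is $<1$ for $L$ large because $\beta < 1$. Summing the geometric factors, and absorbing the total excursion through clusters of size $\sum_t K_{j_t}\ell \le 17 J N^{N+1}\ell$, yields the stated mass loss.

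Next I would bound $p_L := \sup_{\bolda} \P\{\NboxLa \text{ is } (m_L,E)\text{-nonregular}\}$ by decomposing into three events as in the previous MSAs: $\cA$ is the event that some PI box in the cover is $(m_\ell,E)$-nonregular, $\cW_J$ is the event that $\NboxLx$ and all auxiliary subboxes are $(E,\beta)$-nonresonant, and $\cF_J$ is the event that there are at most $J$ pairwise $\ell$-distant $(m_\ell,E)$-nonregular boxes in the cover. Then
\begin{equation*}
p_L \le \P(\cA) + \P(\cW_J^c) + \P(\cF_J^c \cap \cA^c).
\end{equation*}
By Lemma~\ref{PINSl} (applied with $\varsigma = \tau$, which is permitted because $m_0 < m^*$), $\P(\cA) \le L^{O(1)} e^{-\ell^\tau}$; by Theorem~\ref{Wegner0} applied to each of the $O(L^{O(1)})$ subboxes, $\P(\cW_J^c) \le L^{O(1)} e^{-\ell^\beta}$. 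Both are $o(e^{-L^{\zeta_2}}) = o(e^{-\ell^{\gamma\zeta_2}})$ since the constraints \eqref{constant} guarantee $\tau > \zeta_0 > \beta > \gamma\zeta_1 > \gamma\zeta_2$. For the third term, on $\cA^c$ all the offending boxes are FI, so by Lemma~\ref{part1prop2} any $J+1$ pairwise $\ell$-distant ones are fully separated; hence the associated regularity events depend on disjoint families of random variables and are independent, giving
\begin{equation*}
\P(\cF_J^c \cap \cA^c) \le \binom{|\Xi_{L,\ell}^{(N)}|}{J+1}\, p_\ell^{J+1} \le \left(\tfrac{2L}{\ell}\right)^{(J+1)Nd} e^{-(J+1)\ell^{\zeta_2}}.
\end{equation*}

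To close the induction I would choose $J = J(\ell)$ growing as a small power of $\ell$, specifically any $J$ with $(J+1)\ell^{\zeta_2} \ge 2\ell^{\gamma\zeta_2}$ (for instance $J = \lceil 2\ell^{(\gamma-1)\zeta_2}\rceil$), which makes the independence factor dominate the binomial prefactor for large $\ell$ and yields $p_L \le e^{-L^{\zeta_2}}$. Iterating $L_{k+1} = L_k^\gamma$ with $m_{k+1} = m_k - \tfrac{1}{2L_k^\kappa}$ then preserves the probability bound at every scale, while the mass losses telescope into a geometrically convergent series as in \eqref{halfmass}, so that taking $L_0 \ge Z_3^*$ large enough keeps $m_k \ge m_0/2$ for all $k$. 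The main calibration obstacle is exactly the choice of $J(\ell)$: it must grow fast enough to convert the independent-events factor $p_\ell^{J+1}$ into $e^{-L^{\zeta_2}}$, yet slowly enough that the Wegner cost of $\beta$-nonresonance on the $J N^N$ auxiliary subboxes and the binomial prefactor $(2L/\ell)^{(J+1)Nd}$ both remain dwarfed by the target bound; this is precisely what the ordering $\zeta < \zeta_2 < \gamma\zeta_2 < \beta < \tau$ imposed by \eqref{constant} is designed to accommodate.
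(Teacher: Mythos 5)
Your iteration along the sequence $L_{k+1}=L_k^\gamma$ is essentially the paper's argument: the deterministic input is Lemma~\ref{part2prop1a}, the probabilistic decomposition into $\cA$, $\cW_J^c$, $\cF_J^c\cap\cA^c$ is the same as in Proposition~\ref{part2mainthm}, and the key new ingredient — letting $J$ grow with the scale so that the independence factor $p_\ell^{J+1}$ for fully separated FI boxes beats $e^{-L^{\zeta_2}}$ — is exactly what the paper does (it takes $J\sim L^{\beta-\zeta_2/\gamma}$ where you take $J\sim\ell^{(\gamma-1)\zeta_2}$; both choices are compatible with the ordering in \eqref{constant}). One small item you should verify when $J=J(\ell)$ grows: the mass loss in Lemma~\ref{part2prop1a} contains a term of order $J\ell^{1-\gamma}$, so you need $J(\ell)\ell^{1-\gamma}\to 0$ summably; with $J\sim\ell^{(\gamma-1)\zeta_2}$ this is $\ell^{-(\gamma-1)(1-\zeta_2)}$, which is fine, but it is not automatic for every growing $J$.

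The genuine gap is at the very end. The proposition asserts the bound for \emph{all} sufficiently large $L$, whereas your induction only produces it along the sparse sequence $L_k=L_0^{\gamma^k}$; nothing in your argument addresses an intermediate scale $L$ with $L_K<L<L_{K+1}$, and the gaps between consecutive $L_k$ grow without bound. The paper closes this by an interpolation step: given $L$, choose $K$ with $L_K\le L<L_{K+1}$, set $\ell=L_{K-1}$ so that $L=\ell^{\gamma'}$ with $\gamma\le\gamma'\le\gamma^2$, and use Lemma~\ref{GKlemma} (if every box in the suitable $\ell$-cover of $\NboxLx$ is $(E,m)$-good then $\NboxLx$ is $(E,\tfrac m2)$-good) together with a union bound over the $\pa{\tfrac{2L}{\ell}}^{Nd}$ boxes of the cover, as in Lemma~\ref{part1thm}. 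Note that this step costs something: since $\ell^{\zeta_2}=L^{\zeta_2/\gamma'}$ and $\gamma'$ can be as large as $\gamma^2$, the exponent degrades, which is precisely why the hierarchy $\zeta\gamma^2<\zeta_2$ is imposed in \eqref{constant} and why the final single-energy estimate is stated with exponent $\zeta$ in Theorem~\ref{maintheorem}. You need to add this interpolation (and acknowledge the resulting exponent bookkeeping) to obtain the proposition as stated rather than only its restriction to the scales $L_k$.
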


Proposition~\ref{part4mainthm1} is proved first  for a sequence of length scale $L_k$ similarly  to   Proposition~\ref{part2mainthm};  to obtain the sub-exponential decay of probabilities we choose $J$, the number of bad boxes, dependent on the scale $L$ as in the proof of Proposition~\ref{part4mainthm} below. To obtain Proposition~\ref{part4mainthm1} as stated, that is, for all sufficiently large scales, we prove a slightly more general result.

\begin{definition}
Let $E \in \R$.   An N-particle box, $\NboxLx$, is said to be $(E, m_{L})$-good if and only if 
it is $(E, m_{L})$-regular and $E$-nonresonant.
\end{definition}

\begin{lemma} \label{GKlemma} Let $\NboxLx$ be an N-particle box, $\gamma > 1$,  $\ell=L^{\frac 1 {\gamma^\pr}} $ with $\gamma \le \gamma^\pr \le \gamma^2$, and $m>0$.
Suppose every box in ${\mathcal C}_{L,\ell}^{(N)}(\x)$  is $(E, m)$-good. Then $\NboxLx$ is $(E, \frac m 2)$-good.          
\end{lemma}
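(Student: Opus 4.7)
The plan is to establish the two components of $(E,\tfrac m 2)$-good in order by deterministic resolvent iteration, in the spirit of the classical one-step MSA lemma (cf.\ \cite[Lemma 3.11]{GKber}). I interpret ``$E$-nonresonant'' as $(E,\beta)$-nonresonant for the fixed $\beta$ of \eqref{constant}. I will first show that $\NboxLx$ is $(E,\beta)$-nonresonant; this yields $\|G_{\NboxLx}(E)\|\le 2e^{L^\beta}$, which will then serve as the terminating bound in a second iteration establishing regularity with rate $m/2$.

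\textbf{Step 1 (Nonresonance, by contradiction).} Suppose $\NboxLx$ is $(E,\beta)$-resonant, so there exist $\lambda\in\sigma(H_{\NboxLx})$ with $|\lambda-E|<\tfrac{1}{2} e^{-L^\beta}$ and a normalized eigenfunction $\psi$. For each $\bolda\in\NboxLx$, Lemma~\ref{lemcovering} supplies a covering box $\boldlambda_\ell^{(\bolda)}\in\mathcal{C}_{L,\ell}^{(N)}(\x)$ containing $\boldlambda_{\ell/10}^{(N)}(\bolda)\cap\NboxLx$, so that $\bolda$ lies at distance $\ge \ell/20$ from the portion of $\partial\boldlambda_\ell^{(\bolda)}$ interior to $\NboxLx$. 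Applying $G_{\boldlambda_\ell^{(\bolda)}}(E)$ to the eigenvalue equation restricted to $\boldlambda_\ell^{(\bolda)}$ gives
\[
\psi(\bolda) = (\lambda-E)\bigl(G_{\boldlambda_\ell^{(\bolda)}}(E)\psi\bigr)(\bolda) - \sum_{(\boldu,\boldv)\in\partial\boldlambda_\ell^{(\bolda)}}G_{\boldlambda_\ell^{(\bolda)}}(E;\bolda,\boldu)\,\psi(\boldv).
\]
The first term is bounded by $\tfrac12 e^{-L^\beta}\cdot 2e^{\ell^\beta}$ via the $(E,\beta)$-nonresonance of $\boldlambda_\ell^{(\bolda)}$, which is negligible since $\ell<L$. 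Each summand in the second is $\le e^{-m\ell/20}$ by $(E,m)$-regularity, with combinatorial cost $s_{Nd}\ell^{Nd-1}$. Iterating this expansion of order $L/\ell$ times (with boundary terms at $\partial\NboxLx$ killed by Dirichlet conditions) yields $|\psi(\bolda)|\le e^{-c\,mL}$ for all $\bolda$ and some absolute $c>0$. Summing then gives $\|\psi\|_2^2\le L^{Nd}e^{-2cmL}<1$ for $L$ large, contradicting $\|\psi\|=1$.

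\textbf{Step 2 (Regularity with rate $m/2$).} Fix $\bolda,\boldb\in\NboxLx$ with $\|\bolda-\boldb\|\ge L/100$, and expand through the covering box of $\bolda$:
\[
G_{\NboxLx}(E;\bolda,\boldb)=\sum_{(\boldu,\boldv)\in\partial\boldlambda_\ell^{(\bolda)}}G_{\boldlambda_\ell^{(\bolda)}}(E;\bolda,\boldu)\,G_{\NboxLx}(E;\boldv,\boldb).
\]
The hypothesis gives $|G_{\boldlambda_\ell^{(\bolda)}}(E;\bolda,\boldu)|\le e^{-m\|\bolda-\boldu\|}$ with $\|\bolda-\boldu\|\ge \ell/20$. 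I will iterate this step, selecting the covering box of the current expansion point at each stage, so every iteration advances the distance to $\boldb$ by $\ge \ell/20$ while paying combinatorial factor $s_{Nd}\ell^{Nd-1}$ and contributing a regularity factor. After $K\sim \|\bolda-\boldb\|/\ell$ iterations (terminating when the next step would violate the $\ell/100$-separation condition or reach $\partial\NboxLx$), using Step~1 to bound the trailing Green's function matrix element by $2e^{L^\beta}$, I will obtain
\[
|G_{\NboxLx}(E;\bolda,\boldb)|\le \bigl(s_{Nd}\ell^{Nd-1}\bigr)^{K}\,e^{-m\|\bolda-\boldb\|(1+o(1))}\cdot 2e^{L^\beta}.
\]

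The main obstacle is budget control: the boundary combinatorics $(s_{Nd}\ell^{Nd-1})^K$ with $K\sim L/\ell$ and the stopping cost $2e^{L^\beta}$ must together lose at most $m\|\bolda-\boldb\|/2$ in the exponent. This reduces to the asymptotic inequalities $\log(s_{Nd}\ell^{Nd-1})\ll m\ell$ and $L^\beta\ll mL$, both valid for $L$ sufficiently large. The scale relation $\ell=L^{1/\gamma^\pr}$ with $\gamma\le\gamma^\pr\le\gamma^2$ enters precisely here: the upper bound $\gamma^\pr\le\gamma^2$ keeps $\ell\ge L^{1/\gamma^2}$, so that $\log\ell/\ell$ is small enough, while $\gamma\le\gamma^\pr$ ensures $\ell<L$ so that the covering of Lemma~\ref{lemcovering} is applicable.
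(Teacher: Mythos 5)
Your argument is sound and is precisely the standard one: the paper gives no proof of this lemma, deferring to \cite[Lemma~3.16]{GKber}, and your two steps (nonresonance of the large box, obtained by showing that an eigenfunction with eigenvalue within $\tfrac12 e^{-L^\beta}$ of $E$ would be uniformly of size $O(e^{-\frac12 L^\beta})$ and hence could not be normalized, followed by regularity via resolvent iteration through good covering boxes, terminated by the nonresonance bound $\|G_{\NboxLx}(E)\|\le 2e^{L^\beta}$) reproduce that argument in the discrete setting; note that in Step 1 a single application of the expansion already gives $M\le e^{-\frac12 L^\beta}+s_{Nd}\ell^{Nd-1}e^{-m(\ell/20-1)}M$ for $M=\max|\psi|$, so no genuine iteration is needed there. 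The only point worth making explicit is that both steps require $L$ (hence $\ell=L^{1/\gamma'}\ge L^{1/\gamma^2}$) large enough that $s_{Nd}\ell^{Nd-1}e^{-m\ell/20}<\tfrac12$ and $L^{\beta-1}\ll m$; this threshold depends on $m$ and is left implicit in the statement, consistently with the lemma's uses in Lemma~\ref{part1thm} and Theorem~\ref{part2thm}, where $m$ is a fixed positive mass and $L$ is taken sufficiently large.
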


This lemma is  a straightforward adaptation of \cite[Lemma~3.16]{GKber} to the discrete case.

\begin{lemma} \label{part1thm}
Let $E_1 \in \R$, $\zeta_2 \in (\zeta, \, \tau)$, and $\gamma \in (1, \, \tfrac{1}{{\zeta_2}})$  with $\zeta \, \gamma^{2} < \zeta_2  $.   Assume there exists a mass $m_{\zeta_2}>0$ and  a length scale $L_0 = L_{0}(\zeta_2)$, such that,  taking $L_{k+1} = L_{k}^{\gamma}$ for $k=0,1,\ldots$,  we have 
\begin{align} \label{appeneq1}
\sup_{\bolda\in \R^{Nd}} P \Bigl\{ \mathbf{\Lambda}_{L_{k}}^{(N)}(\bolda)\sqtx{is not} \left ({m_{\zeta_2}}, \,E_1 \right)\text{-good}  \Bigr\}  \leq e^{-L_{k}^{\zeta_2}} \qtx{for} k=0,1,\ldots.\end{align}
Then there exists $L_{\zeta}$ such that for every $L \geq L_{\zeta}$ we have 
\begin{align}
\sup_{\bolda\in \R^{Nd}} P \Bigl\{ \mathbf{\Lambda}_{L}^{(N)}(\bolda)\sqtx{is not} \left ({m_{\zeta_2}}, \,E_1 \right)\text{-good}  \Bigr\}  \leq e^{-L^{\zeta}}.
\end{align}
\end{lemma}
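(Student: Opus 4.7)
The plan is to use the deterministic Lemma~\ref{GKlemma} to pass from the estimate along the geometric scale sequence $\set{L_k}$ in the hypothesis \eqref{appeneq1} to all sufficiently large scales. The argument is the natural adaptation of the corresponding step in the one-particle bootstrap \cite{GKber}; no multiscale induction is needed here, just a one-step covering argument.

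First, given $L$ large, I would locate the unique $k\ge 1$ with $L_k \le L < L_{k+1}=L_k^{\gamma}$, write $L=L_k^{\gamma^\pr}$ with $\gamma^\pr \in [1,\gamma]$, and then set $\ell := L_{k-1}$, so that $L=\ell^{\gamma\gamma^\pr}$ with exponent $\gamma\gamma^\pr \in [\gamma,\gamma^2]$. This is precisely the regime covered by Lemma~\ref{GKlemma}, so if every box in ${\mathcal C}_{L,\ell}^{(N)}(\x)$ is $({m_{\zeta_2}},E_1)$-good then $\NboxLx$ is $({m_{\zeta_2}/2},E_1)$-good. The small trick is the choice $\ell=L_{k-1}$ (not $L_k$): taking $\ell=L_k$ would give $L=\ell^{\gamma^\pr}$ with $\gamma^\pr \in [1,\gamma]$, which just misses the hypothesis $\gamma^\pr \ge \gamma$ of Lemma~\ref{GKlemma}.

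Next, I would take the contrapositive and combine it with a union bound over the at most $(2L/\ell)^{Nd}$ boxes in ${\mathcal C}_{L,\ell}^{(N)}(\x)$ (Lemma~\ref{lemcovering}), each of which fails to be $({m_{\zeta_2}},E_1)$-good with probability at most $e^{-\ell^{\zeta_2}}$ by the hypothesis \eqref{appeneq1} applied at scale $\ell=L_{k-1}$. This gives
\beq
\P\set{\NboxLx \text{ is not } ({m_{\zeta_2}/2},E_1)\text{-good}} \le \pa{2L/\ell}^{Nd}\, e^{-\ell^{\zeta_2}}.
\eeq
The exponent calculation is immediate: since $\gamma\gamma^\pr \le \gamma^2$, we have $\ell^{\zeta_2}=L^{\zeta_2/(\gamma\gamma^\pr)}\ge L^{\zeta_2/\gamma^2}$, and the standing assumption $\zeta\gamma^2 < \zeta_2$ forces $L^{\zeta_2/\gamma^2} > L^{\zeta}$ by a fixed polynomial margin, so the factor $(2L/\ell)^{Nd}$ is absorbed for $L\ge L_\zeta$ chosen large enough.

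The only (and quite minor) obstacle is a bookkeeping point on the mass: Lemma~\ref{GKlemma} literally produces mass $m/2$ rather than $m$, so my argument outputs a $({m_{\zeta_2}/2},E_1)$-good box rather than a $({m_{\zeta_2}},E_1)$-good one. This is handled harmlessly, either by reading $m_{\zeta_2}$ in the conclusion of Lemma~\ref{part1thm} as one half of the $m_{\zeta_2}$ in the hypothesis, or by relabeling the mass constant in \eqref{appeneq1} at the outset. No other difficulty appears; the probabilistic input is entirely encoded in \eqref{appeneq1}, and the covering argument does not require any Wegner-type estimate at the new scale.
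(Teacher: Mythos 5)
Your proposal is correct and follows essentially the same route as the paper: choose $\ell=L_{K-1}$ so that $L=\ell^{\gamma'}$ with $\gamma\le\gamma'<\gamma^2$, apply Lemma~\ref{GKlemma} to the suitable cover, and close with a union bound using \eqref{appeneq1} at scale $\ell$ together with $\zeta\gamma^2<\zeta_2$. The mass bookkeeping you flag is also present in the paper's own proof (which likewise outputs $\tfrac{m_{\zeta_2}}{2}$), and your proposed relabeling is the intended reading.
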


\begin{proof}Given a scale $L$ we take $K$ such that $ L_K \leq L < L_{K+1}$, and set $\ell = L_{K - 1}$.  Note that $L_K=\ell^\gamma$ and  $L_{K+1} = L_{K}^{\gamma} = L_{K-1}^{\gamma^{2}} = \ell^{\gamma^2}$, so $L=\ell^{\gamma^\pr}$ with $\gamma \le \gamma^\pr <\gamma^2$.

 Given an N-particle box  $\NboxLx$, let
 \beq
\cF_1 = \bigcup_{\boldu \in \suitcx} \cR_{\boldu},\qtx{where} \cR_{\boldu} = \Bigl\{ \Nboxlu \,\,is\,\,not\,\, \left ({m_{\zeta_2}}, \,E_1 \right )\text{-good} \Bigr\} .
\eeq
If  $\bom \notin \cF_1$,  every box in ${\mathcal C}_{L,\ell}^{(N)}(\x)$ is  $(m_{\zeta_2}, \, E_1)$-good, and hence $\NboxLx$ is $\left( \tfrac{m_{\zeta_2}}{2}, \, E_1 \right)$-good by Lemma \ref{GKlemma}. The lemma follows since
\beq
\P \left( \cF_1 \right) \leq \pa{\tfrac{2L}{\ell}}^{Nd} e^{-\ell^{\zeta_2}} \leq e^{- L^{\zeta}}. 
\eeq
\end{proof}

\subsubsection{The energy interval multiscale analysis}

\begin{lemma} \label{part4lem0}
Let $\NboxLx$ be an N-particle box and  $m > 0$. Let  $E_0 \in \R$, and suppose that

\begin{enumerate}
\item $\NboxLx$ is $(m,E_0)$-regular,
\item $ \dist \Bl( \sigma\left (H_{\NboxLx}  \right) , E_0 \Br) \geq e^{-L^{\beta}}$, i.e.,  $\norm{G_{\NboxLx}(E_0)} \leq e^{L^{\beta}}$.
\end{enumerate}
Then $\NboxLx$ is $\left(m - \frac {100\log 2} L,E  \right)$-good for every 
$E \in I = \left( E_0 - \eta, E_0 + \eta \right)$, where 
\beq\label{defeta3}
\eta = \tfrac{1}{2} e^{-mL- 2L^\beta}.
\eeq
\end{lemma}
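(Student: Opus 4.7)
The plan is to use the second resolvent identity
\[ G_{\NboxLx}(E) = G_{\NboxLx}(E_0) + (E-E_0)\,G_{\NboxLx}(E)\,G_{\NboxLx}(E_0), \]
and exploit that $\eta$ has been chosen just small enough so that the Neumann correction in $\|E-E_0\|$ costs at most a factor of $2$ everywhere, which translates into a mass loss of exactly $\frac{100\log 2}{L}$ once we recall that $\|\bolda-\boldb\|\ge L/100$ for pairs relevant to the definition of regularity.

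First I would verify nonresonance. Since $|E-E_0|<\eta=\tfrac{1}{2}e^{-mL-2L^\beta}\le \tfrac{1}{2}e^{-L^\beta}$, hypothesis (ii) gives
\[ \dist\bigl(\sigma(H_{\NboxLx}),E\bigr)\ge \dist\bigl(\sigma(H_{\NboxLx}),E_0\bigr)-|E-E_0|\ge e^{-L^\beta}-\tfrac{1}{2}e^{-L^\beta}=\tfrac{1}{2}e^{-L^\beta}, \]
so $\NboxLx$ is $E$-nonresonant.

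Next I would derive a norm bound on $G_{\NboxLx}(E)$. From the resolvent identity, $\|G_{\NboxLx}(E)\|\bigl(1-|E-E_0|\,\|G_{\NboxLx}(E_0)\|\bigr)\le \|G_{\NboxLx}(E_0)\|$. Using hypothesis (ii) and the definition of $\eta$,
\[ |E-E_0|\,\|G_{\NboxLx}(E_0)\|\le \eta\, e^{L^\beta}=\tfrac{1}{2}e^{-mL-L^\beta}\le \tfrac{1}{2} \]
for $L$ large, so $\|G_{\NboxLx}(E)\|\le 2\,\|G_{\NboxLx}(E_0)\|\le 2e^{L^\beta}$.

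Third, I would take matrix elements. For $\bolda,\boldb\in\NboxLx$ with $\|\bolda-\boldb\|\ge L/100$, the resolvent identity together with Cauchy--Schwarz yields
\[ |G_{\NboxLx}(E;\bolda,\boldb)|\le |G_{\NboxLx}(E_0;\bolda,\boldb)|+|E-E_0|\,\|G_{\NboxLx}(E)\|\,\|G_{\NboxLx}(E_0)\|. \]
By $(m,E_0)$-regularity and the bounds above the right-hand side is at most
\[ e^{-m\|\bolda-\boldb\|}+\eta\cdot 2e^{2L^\beta}=e^{-m\|\bolda-\boldb\|}+e^{-mL}\le 2\,e^{-m\|\bolda-\boldb\|}, \]
since $\|\bolda-\boldb\|\le L$ (the $\ell^\infty$-diameter of $\NboxLx$). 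Finally, $2=e^{\log 2}\le e^{\frac{100\log 2}{L}\|\bolda-\boldb\|}$ because $\|\bolda-\boldb\|\ge L/100$, giving
\[ |G_{\NboxLx}(E;\bolda,\boldb)|\le e^{-\bigl(m-\frac{100\log 2}{L}\bigr)\|\bolda-\boldb\|}, \]
which is the desired $(m-\tfrac{100\log 2}{L},E)$-regularity.

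There is no real obstacle beyond bookkeeping: the one thing to watch is that the three exponential scales in $\eta$ — namely $mL$, $2L^\beta$, and the factor $\tfrac{1}{2}$ — are chosen so that (a) the Neumann series converges with a factor $2$, (b) the off-diagonal error $\eta\cdot 2e^{2L^\beta}$ reduces to $e^{-mL}$ and can be absorbed into $e^{-m\|\bolda-\boldb\|}$ via the diameter bound, and (c) $\eta\le \tfrac{1}{2}e^{-L^\beta}$ so nonresonance is preserved. Everything else is routine.
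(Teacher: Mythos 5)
Your proposal is correct and follows essentially the same route as the paper: the same resolvent identity, the same bound $\abs{G(E;\bolda,\boldb)}\le e^{-m\norm{\bolda-\boldb}}+2e^{2L^\beta}\abs{E-E_0}$, and the same absorption of the error term $e^{-mL}$ using $\norm{\bolda-\boldb}\ge L/100$. The only cosmetic difference is that you derive $\norm{G_{\NboxLx}(E)}\le 2e^{L^\beta}$ via a Neumann-series argument, whereas the paper reads it off directly from $\dist(\sigma(H_{\NboxLx}),E)\ge\tfrac12 e^{-L^\beta}$; both are immediate.
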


\begin{proof} Let $\abs{E - E_0} \leq \tfrac{1}{2}\,e^{-L^{\beta}}$, so assumption (ii) implies\beq
\dist \Bl( \sigma\left (H_{\NboxLx}  \right) , E \Br) \geq \tfrac{1}{2}e^{-L^{\beta}}, \qtx{i.e.,}\norm{G_{\NboxLx}(E)} \leq 2\,e^{L^{\beta}}.
\eeq 
The resolvent equation gives 
\beq
G_{\NboxLx}(E) = G_{\NboxLx}(E_0) + (E - E_0)\, G_{\NboxLx} (E)\, G_{\NboxLx}(E_0),
\eeq  
so 
for all  $\bolda, \boldb \in \NboxLx$  we have
\begin{align}\notag
\abs{G_{\NboxLx} \left( E; \bolda, \boldb   \right)} 
& \leq e^{-m \norm{\bolda - \boldb}} + \abs{E - E_0} \norm{G_{\NboxLx} (E)} \norm{G_{\NboxLx}(E_0) } \\
& \leq e^{-m \norm{\bolda - \boldb}} + 2 \,e^{2\,L^{\beta}} \abs{E - E_0} .
\end{align} 

Now let $E \in I = \left( E_0 - \eta, E_0 + \eta \right)$, where $\eta$ is as in \eq{defeta3}. Since
\beq
\eta< \tfrac{1}{2}\,e^{-L^{\beta}} \qtx{and} \eta 2 \,e^{2\,L^{\beta}}= e^{-mL}\le  e^{-m \norm{\bolda - \boldb}}
\eeq
 we conclude that  if  $\norm{\bolda - \boldb} \geq \tfrac{L}{100}$ we have 
 \beq
 \abs{G_{\NboxLx} \left( E; \bolda, \boldb   \right)}  \le 2e^{-m \norm{\bolda - \boldb}}\le 
 e^{-\pa{m - \frac {100\log 2} L}\norm{\bolda - \boldb}}.
 \eeq
\end{proof}

Proposition~\ref{part3mainthm}, combined with Theorem~\ref{Wegner0}  and Lemma \ref{part4lem0},    yields the following proposition.

\begin{proposition} \label{bridgethm}
Let $0 < \zeta_2 < \zeta_1 < \zeta_0 < 1$, and assume the conclusions of Proposition~\ref{part3mainthm}.   There exists scales $L_k$, $k=1,2,\ldots$, such that $\lim_{k\to \infty} L_k=\infty$, with the following property: Let 
\beq
m_k=\left(L_k^{\zeta_0-1} - \tfrac {100\log 2} {L_k}  \right) \qtx{and}
\eta_k= \tfrac{1}{2} e^{- L_k^{\zeta_0}- 2L_k^\beta}.
\eeq
 Then for all  $E_0 \in \R$ we have
 \begin{align}
\sup_{\x \in \R^{Nd}} \P \set{\exists E\in \left( E_0 - \eta_k, E_0 + \eta_k \right) \sqtx{such that} \boldlambda_{L_k}(\boldx) \sqtx{is} \left(m_k, E\right)\text{-nonregular}} \le  e^{-L_k^{\zeta_1}},\end{align}
and
\begin{align}\notag
&\sup_{\x \in \R^{Nd}} \P \set{\exists E\in \left( E_0 - \eta_k, E_0 + \eta_k \right) \sqtx{such that} \boldlambda_{L_k}(\boldx) \text{ is not } \left(m_k, E\right)\text{-good}} \\
& \hskip50pt \le  e^{-L_k^{\zeta_1}} + 2N \norm{\rho}_\infty L_k^{Nd}e^{-L_k^\beta} \le e^{-L_k^{\zeta_2}}.
\end{align}
\end{proposition}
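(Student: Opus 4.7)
The plan is to upgrade the single-energy SES estimate of Proposition~\ref{part3mainthm} to an estimate valid on a short energy interval, by combining it with a Wegner bound and the deterministic single-box interval-extension lemma, Lemma~\ref{part4lem0}. Fix $E_0 \in \R$ and $\boldx \in \R^{Nd}$, and let $\set{L_k}$ be the scales produced by Proposition~\ref{part3mainthm} (so $L_k = Y^k L_0 \to \infty$). For $k \ge K_1$ we have
\[
\P \set{\boldlambda_{L_k}(\boldx) \text{ is } (\zeta_0, E_0)\text{-nonSES}} \le e^{-L_k^{\zeta_1}}.
\]
By Remark~\ref{goodbox}(iv), being $(\zeta_0, E_0)$-SES implies being $(L_k^{\zeta_0-1}, E_0)$-regular, so this same probability bound controls $(L_k^{\zeta_0-1}, E_0)$-nonregularity.

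Next, apply Theorem~\ref{Wegner0} with $\eps = e^{-L_k^\beta}$ to obtain
\[
\P \set{\dist \bigl(\sigma(H_{\boldlambda_{L_k}(\boldx)}), E_0 \bigr) \le e^{-L_k^\beta}} \le 2N \norm{\rho}_\infty L_k^{Nd} e^{-L_k^\beta}.
\]
On the intersection of the complements of the two bad events, which has probability at least $1 - e^{-L_k^{\zeta_1}} - 2N\norm{\rho}_\infty L_k^{Nd} e^{-L_k^\beta}$, the box $\boldlambda_{L_k}(\boldx)$ satisfies both hypotheses of Lemma~\ref{part4lem0} with $m = L_k^{\zeta_0-1}$ and $L = L_k$; note that $mL = L_k^{\zeta_0}$, so the parameter $\eta$ produced by the lemma is exactly $\eta_k = \tfrac{1}{2} e^{-L_k^{\zeta_0} - 2L_k^\beta}$, and the output mass is $m_k = L_k^{\zeta_0-1} - \tfrac{100 \log 2}{L_k}$. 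Lemma~\ref{part4lem0} then shows that $\boldlambda_{L_k}(\boldx)$ is $(m_k, E)$-regular for every $E \in (E_0 - \eta_k, E_0 + \eta_k)$, giving the first assertion.

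For the second assertion we additionally need nonresonance throughout the interval. Since on the same event $\dist(\sigma(H_{\boldlambda_{L_k}(\boldx)}), E_0) > e^{-L_k^\beta}$ and $\eta_k \le \tfrac{1}{2} e^{-L_k^\beta}$, the triangle inequality gives $\dist(\sigma(H_{\boldlambda_{L_k}(\boldx)}), E) > \tfrac{1}{2} e^{-L_k^\beta}$ for every $E$ in the interval, so every such $E$ is (nonresonant in the sense required by the definition of a good box). Combined with the regularity already obtained, this yields goodness throughout the interval. Because $\zeta_2 < \zeta_1 < \beta$ by \eq{constant}, the polynomial factor $L_k^{Nd}$ is absorbed and, for $L_k$ large, $e^{-L_k^{\zeta_1}} + 2N\norm{\rho}_\infty L_k^{Nd} e^{-L_k^\beta} \le e^{-L_k^{\zeta_2}}$, giving the second assertion.

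The whole argument is essentially bookkeeping: the real content is in Lemma~\ref{part4lem0} (deterministic), Theorem~\ref{Wegner0} (already proved), and Proposition~\ref{part3mainthm} (already proved). The only point requiring care is matching the parameters $m_k$ and $\eta_k$ appearing in the statement with those produced by Lemma~\ref{part4lem0}; no new multiscale induction is performed here. This proposition serves as the bridge from the single-energy third MSA to the energy-interval fourth MSA that follows.
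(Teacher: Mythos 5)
Your argument is correct and is exactly the proof the paper intends: the paper gives no details beyond citing Proposition~\ref{part3mainthm}, Theorem~\ref{Wegner0}, and Lemma~\ref{part4lem0}, and you assemble precisely these three ingredients with the right parameter matching ($mL_k=L_k^{\zeta_0}$ yielding $\eta_k$, and the mass loss $\tfrac{100\log 2}{L_k}$ yielding $m_k$). The only cosmetic point is that the first assertion, as you derive it, carries probability $e^{-L_k^{\zeta_1}}+2N\norm{\rho}_\infty L_k^{Nd}e^{-L_k^{\beta}}$ rather than $e^{-L_k^{\zeta_1}}$ alone (the Wegner event is needed even to extend regularity off $E_0$); since $\beta>\zeta_1$ this is absorbed for large $L_k$ by a negligible adjustment, and the imprecision is the paper's, not yours.
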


 We now  take $L=\ell^\gamma$.

\begin{definition}
Let $\NboxLx = \boldlambda_{L}(\bold{x}_{\cJ}) \times  \boldlambda_{L}(\bold{x}_{\cJ^{c}} )$ be a PI N-particle box with the usual $\ell$ suitable cover, and consider an energy $E \in \R$. Then:

\begin{enumerate}
\item $\NboxLx$ is not $E$-Lregular \emph(for left regular\emph) if and only if there are two partially separable boxes in $\cC_{L, \ell}^{\cJ}(\boldx_{\cJ})$
that are $(m^{*}, \, E-\mu)$-nonregular  
for some $\mu \in \sigma \left( H_{ \boldlambda_{L}^{{\cJ}^{c}}(\bold{x}_{\cJ^{c}} )  }  \right)$. 

\item $\NboxLx$ is not $E$-Rregular \emph(for right regular\emph) if and only if there are two partially separable boxes in $\cC_{L, \ell}^{\cJ^c}(\boldx_{\cJ^c})$   that are $(m^{*}, \, E-\lambda)$-nonregular for some $\lambda \in \sigma \left( H_{ \boldlambda_{L}(\bold{x}_{\cJ} )  }  \right)$.

\item $\NboxLx$ is  $E$-preregular if and only if  $\NboxLx$ is  $E$-Lregular and $E$-Rregular. 
\end{enumerate}
\end{definition}

\begin{lemma} \label{prereg}
Let $E_0 \in \R$, $I = [E_0-\delta_\tau, \, E_0+\delta_\tau]$,  and consider a PI N-particle box $\NboxLx = \boldlambda_{L}(\bold{x}_{\cJ}) \times  \boldlambda_{L}(\bold{x}_{\cJ^{c}} )$.
Then 
\begin{enumerate}
\item $\P\set{\NboxLx \text{ is not }E\text{-Lregular for some } E \in I} \leq L^{3Nd}  e^{-\ell^{\tau}}$, 
\item $\set{\NboxLx \text{ is not }E\text{-Lregular for some } E \in I}) \leq L^{3Nd} e^{-\ell^{\tau}}$.

\end{enumerate}
We conclude that 
\item  \beq
\P \bigl\{ \NboxLx \text{ is not }E\text{-preregular for some } E \in I  \bigr\} \leq 2L^{3Nd}  e^{-\ell^{\tau}} .
\eeq

\end{lemma}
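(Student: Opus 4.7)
The plan is to prove assertion (i); the proof of (ii) is symmetric after interchanging the roles of $\cJ$ and $\cJ^c$, and the conclusion on $E$-preregularity then follows at once by union bound. The structural input is the independence built into the PI property: since $\NboxLx \subseteq \cE_\cJ$, we have $\Pi_\cJ \boldlambda_L^\cJ(\boldx_\cJ) \cap \Pi_{\cJ^c} \boldlambda_L^{\cJ^c}(\boldx_{\cJ^c}) = \emptyset$, so the random variables $\omega_y$ for $y$ in the first set, on which each $H_{\boldlambda}$ with $\boldlambda \in \cC_{L,\ell}^\cJ(\boldx_\cJ)$ depends, are independent of those for $y$ in the second set, which determine $\sigma_{\cJ^c}=\sigma(H_{\boldlambda_L^{\cJ^c}(\boldx_{\cJ^c})})$.

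First I would condition on the random variables determining $H_{\boldlambda_L^{\cJ^c}(\boldx_{\cJ^c})}$, freezing $\sigma_{\cJ^c}$ as a deterministic set of cardinality at most $L^{\abs{\cJ^c}d}$. For a fixed ordered pair of partially separated boxes $\boldlambda_1, \boldlambda_2 \in \cC_{L,\ell}^\cJ(\boldx_\cJ)$ and a fixed $\mu \in \sigma_{\cJ^c}$, the event
\[
\set{\exists\, E \in I: \boldlambda_1 \text{ and } \boldlambda_2 \text{ are }(m^*, E - \mu)\text{-nonregular}}
\]
coincides, via the substitution $E' = E - \mu$, with the event $\set{\exists\, E' \in I(E_0 - \mu): \boldlambda_1, \boldlambda_2 \text{ are }(m^*, E')\text{-nonregular}}$ controlled by item (ii) of the Induction hypothesis. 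Since $\cJ$ is a nonempty proper subset of $\setN$, these are partially separated $\abs{\cJ}$-particle boxes with $\abs{\cJ} \leq N-1$, and for $\ell \geq L_\tau$ the induction hypothesis yields a conditional bound of $e^{-\ell^\tau}$, uniformly in $\mu$.

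Finally, I would union-bound over the at most $L^{\abs{\cJ^c}d}$ eigenvalues $\mu \in \sigma_{\cJ^c}$ and the at most $(2L/\ell)^{2\abs{\cJ}d} \leq L^{2\abs{\cJ}d}$ ordered pairs of partially separated boxes in $\cC_{L,\ell}^\cJ(\boldx_\cJ)$, using the cardinality bound \eq{number}, and then integrate out the conditioning. This gives
\[
\P\set{\NboxLx \text{ is not } E\text{-Lregular for some } E \in I} \leq L^{(2\abs{\cJ} + \abs{\cJ^c})d}\, e^{-\ell^\tau} \leq L^{3Nd}\, e^{-\ell^\tau},
\]
which is (i). The only mild subtlety is applying the induction hypothesis uniformly in the random eigenvalues $\mu$; this is precisely what the conditioning-on-$\bom_{\Pi_{\cJ^c}}$ step handles via the PI independence, after which the remainder is routine counting.
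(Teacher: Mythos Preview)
Your proposal is correct and follows essentially the same approach as the paper: condition on $\bom_{\Pi_{\cJ^c}}$ to freeze $\sigma_{\cJ^c}$, apply the induction hypothesis (ii) for $\abs{\cJ}\le N-1$ particles to the shifted interval $I(E_0-\mu)$ for each fixed $\mu$ and each pair of partially separated boxes in $\cC_{L,\ell}^{\cJ}(\boldx_{\cJ})$, then union-bound over the at most $L^{\abs{\cJ^c}d}$ eigenvalues and $(2L/\ell)^{2\abs{\cJ}d}$ pairs. The counting and the final estimate match the paper's proof.
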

\begin{proof}
We prove (i), the proof of (ii) is similar. Let us set $\cS = \Pi_{\cJ} \NboxLx$ and
$\cB = \Bl\{ \exists \, E \in I \sqtx{such that}\NboxLx \,\,\text{is not } E\text{-Lregular} \Br\}$.
Since $\NboxLx$ is PI, we have $\P (\cB) = \E_{\cS^{c}} \P_{\cS}(\cB)$. 

Let us fix  $\bom_{\cS ^{c}}$, and pick $\mu \in \sigma \left( H_{ \bom_{\cS ^{c}},\boldlambda_{L}^{{\cJ}^{c}}(\bold{x}_{\cJ^{c}} )  }  \right)$. Let  $\cD$ denote  the event that there exists $E \in I$ such that $\boldlambda_{L}(\x_{\cJ})$ contains two partially separable boxes in the $\ell$-suitable cover that are $(E - \mu, m^*)\text{-nonregular}$. We can rewrite $\cD$ as the event that there exists $E^{\prime} \in I - \mu$ such that $\boldlambda_{L}(\x_{\cJ})$ contains a pair of  partially separable boxes in the $\ell$-suitable cover that is $(E^{\prime}, m^*)\text{-nonregular}$, where $I - \mu = \Bl \{ \cE - \mu \,\,| \,\, \cE \in I   \Br\}.$ Applying the bootstrap MSA result to the interval $I - \mu $ for $\abs{\cJ}$ particles (induction hypothesis), we get
$\P(\cD) \leq \left( \tfrac{2L}{\ell} \right) ^{2\abs{\cJ}d} e^{-\ell^{\tau}}$.  We conclude that
\beq  
\P(\cB) \leq \abs{\boldlambda_{L}(\bolda_{\cJ^{c}})} \left( \tfrac{2L}{\ell} \right) ^{2\abs{\cJ}d} e^{-\ell^{\tau}}   \leq  L^{3Nd} e^{-\ell^{\tau}}.
\eeq
\end{proof}

\begin{definition} \label{CNR}
Let $\NboxLu = \boldlambda_{L}(\bold{u}_{\cJ}) \times  \boldlambda_{L}(\bold{u}_{\cJ^{c}} ) $ be a PI N-particle box, and consider an energy $E \in \R$. Then:

\begin{enumerate}
\item $\NboxLu$ is $E$-left nonresonant \emph(or LNR\emph) if and only if for every box $\boldlambda_{K_j \ell}(\bolda) \subseteq \boldlambda_{L}(\boldu_{\cJ})$, with $\bolda \in \suitc (\boldu_{\cJ})$ and $j \in \set{1, \, 2, \ldots \abs{\cJ}^{\abs{\cJ}}}$, is $(E-\mu)$-nonresonant for every 
$\mu \in \sigma \pa{H_{\boldlambda_{L}(\boldu_{\cJ^c})}}$. Otherwise we say $\NboxLu$ is $E$-left resonant \emph(or LR\emph).

\item $\NboxLu$ is $E$-right nonresonant \emph(or RNR\emph) if and only if for every box $ \boldlambda_{K_j \ell}(\bolda) \subseteq \boldlambda_{L}(\boldu_{\cJ^{c}})$ with $\bolda \in \suitc (\boldu_{\cJ^c})$ and $j \in \set{1, \, 2, \ldots \abs{\cJ^c}^{\abs{\cJ^c}}}$ is $(E-\lambda)$-nonresonant for every 
$\lambda \in \sigma \pa{H_{\boldlambda_{L}(\boldu_{\cJ})}}$. Otherwise we say $\NboxLu$ is $E$-right resonant \emph(or RR\emph).

\item We say $\NboxLu$ is
$E$-highly nonresonant \emph(or HNR\emph) if and only if $\NboxLu$ is $E$-nonresonant, 
$E$-LNR, and $E$-RNR.
\end{enumerate} 
\end{definition}

\begin{lemma}\label{part2firstthm}
Let $E \in \R,$ and 
$\NboxLu = \boldlambda_{L}(\bold{u}_{\cJ}) \times  \boldlambda_{L}(\bold{u}_{\cJ^{c}} )$ 
be a PI N-particle box. Assume that the following are true:
\begin{enumerate}
\item $\NboxLu$ is $E$-HNR.
\item $\NboxLu$ is $E$-preregular.
\end{enumerate}
Then $\NboxLu$ is $\left(m(L), \, E \right)$-regular, where 
 \beq   \label{mL}
m(L) \geq m^{*} - c_1(\ell^{1 - \gamma}) - c_2(\ell^ {1 - \beta})  - c_3 \left( \tfrac{\log L}{L} \right),
 \eeq
with the constants $c_1, \, c_2, \, c_3$ do not depend on the scale $\ell$ and $m^* = m_{\tau}$.
\end{lemma}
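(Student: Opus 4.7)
The plan is to exploit the PI factorization of the Green's function in \eqref{GJ}--\eqref{GJc} to reduce $(m(L), E)$-regularity of $\NboxLu$ to $(E - \mu, m_L)$-regularity of the factor $\boldlambda_L(\boldu_\cJ)$ for every $\mu \in \sigma_{\cJ^c}$ (and symmetrically for the complementary factor), and then to apply the deterministic scaling Lemma~\ref{part2prop1a} to each factor. I will fix $\bolda, \boldb \in \NboxLu$ with $\norm{\bolda - \boldb} \geq L/100$, and assume without loss of generality that $\norm{\bolda_\cJ - \boldb_\cJ} \geq \norm{\bolda_{\cJ^c} - \boldb_{\cJ^c}}$, so that $\norm{\bolda - \boldb} = \norm{\bolda_\cJ - \boldb_\cJ}$. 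Bound \eqref{GJ} then gives
\begin{align}
\abs{G_{\NboxLu}(E; \bolda, \boldb)} \leq \sum_{\mu \in \sigma_{\cJ^c}} \abs{G_{\boldlambda_L(\boldu_\cJ)}(E - \mu; \bolda_\cJ, \boldb_\cJ)}.
\end{align}
The complementary case is handled symmetrically via \eqref{GJc}, $E$-Rregular, and $E$-RNR, so I only treat the $\cJ$-case explicitly.

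For each $\mu \in \sigma_{\cJ^c}$ I will show that $\boldlambda_L(\boldu_\cJ)$ is $(E - \mu, m_L)$-regular by applying the analogue of Lemma~\ref{part2prop1a} for $\abs{\cJ}$-particle boxes with $J = 1$, noting that the proof of that lemma uses only the resolvent identity and is agnostic to the particle count. Its three hypotheses will be verified as follows: (i) $\boldlambda_L(\boldu_\cJ)$ is $(E - \mu)$-nonresonant, since $\sigma(H_{\NboxLu}) = \sigma_\cJ + \sigma_{\cJ^c}$, so the $E$-nonresonance of $\NboxLu$ (part of $E$-HNR) forces $\dist(\sigma_\cJ, E - \mu) = \dist(\sigma_\cJ + \mu, E) \geq \dist(\sigma(H_{\NboxLu}), E) \geq \tfrac{1}{2}e^{-L^\beta}$; (ii) each subsidiary box $\boldlambda_{K_j \ell}(\bolda) \subseteq \boldlambda_L(\boldu_\cJ)$ in the suitable cover is $(E - \mu)$-nonresonant, which is precisely the $E$-LNR hypothesis; (iii) the suitable $\ell$-cover $\cC_{L,\ell}^{\cJ}(\boldu_\cJ)$ contains at most $J = 1$ pairwise $\ell$-distant $(E - \mu, m^*)$-nonregular box, since $E$-Lregular forbids the existence of two partially separated nonregular boxes for any $\mu$, and $\ell$-distance of scale-$\ell$ boxes forces $d_H > \ell$, hence partial separation by Lemma~\ref{wegnerboxes}(i). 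Lemma~\ref{part2prop1a} then yields $m_L \geq m^* - \tfrac{1}{2\ell^\kappa}$ for admissible $\kappa < \min\{\gamma - 1, \gamma(1-\beta), 1\}$.

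Finally I will sum the bound $\abs{G_{\boldlambda_L(\boldu_\cJ)}(E - \mu; \bolda_\cJ, \boldb_\cJ)} \leq e^{-m_L \norm{\bolda - \boldb}}$ over $\mu \in \sigma_{\cJ^c}$, which contributes at most $\abs{\boldlambda_L(\boldu_{\cJ^c})} \leq L^{Nd}$ terms and is absorbed into the exponential using $\norm{\bolda - \boldb} \geq L/100$. This yields $\abs{G_{\NboxLu}(E; \bolda, \boldb)} \leq e^{-m(L) \norm{\bolda - \boldb}}$ with $m(L) = m_L - O(\log L / L)$; tracking the admissible range of $\kappa$ against the scaling $L = \ell^\gamma$ recovers the three corrections $c_1 \ell^{1-\gamma}$, $c_2 \ell^{1-\beta}$, $c_3 \log L/L$ in \eqref{mL}. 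I expect the main obstacle to lie in Step (iii), where one must carefully reconcile the uniform-in-$\mu$ formulation of $E$-Lregular (no two partially separated nonregular boxes for any $\mu$) with the ``at most $J = 1$ pairwise $\ell$-distant bad box'' hypothesis of the scaling lemma, leveraging the implication $\ell$-distant $\Rightarrow$ partially separated.
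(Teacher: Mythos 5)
Your argument is correct, and it is evidently the intended one: the paper states Lemma~\ref{part2firstthm} without proof, and the definitions of $E$-HNR and $E$-preregular are engineered precisely for the two-step reduction you carry out (factorize via the PI tensor structure, then feed the nonresonance and single-bad-box hypotheses into the $\abs{\cJ}$-particle version of the deterministic Lemma~\ref{part2prop1a} with $J=1$, using that $\ell$-distant boxes are partially separated). The only slip is cosmetic: the bound you quote, summing over $\mu\in\sigma_{\cJ^c}$ with the Green's function of the $\cJ$-factor, is \eqref{GJc} rather than \eqref{GJ}; your verification of the nonresonance of $\boldlambda_L(\boldu_{\cJ})$ at $E-\mu$ from $\sigma_{\cJ}+\mu\subseteq\sigma(H_{\NboxLu})$, and the final absorption of the $\abs{\sigma_{\cJ^c}}\le L^{Nd}$ terms into a $\log L/L$ loss of mass, are exactly right.
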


\begin{lemma}
Let $E \in \R,$ and $\NboxLu = \boldlambda_{L}(\bold{u}_{\cJ}) \times  \boldlambda_{L}(\bold{u}_{\cJ^{c}} ) $ be a PI N-particle box.
\begin{enumerate}
\item If $\NboxLu$ is  E-right resonant, then there exists an N-particle box
\beq   
\mathbf{\boldlambda} = \boldlambda_{L}(\boldu_{\cJ}) \times  \boldlambda_{K_{j}\ell} (\boldx) ,
\eeq
where $j \in \set{1, 2, \ldots, \abs{\cJ^c}^{\abs{\cJ^c}} } $,  $\boldx \in \suitc (\boldu_{\cJ^c})$, and $  \boldlambda_{K_{j}\ell} (\boldx)   \subseteq \boldlambda_{L}(\boldu_{\cJ^{c}}) $,  such that 
\beq 
\dist \bigl( \sigma \left(  \mathbf{\boldlambda} \bigr) , E    \right) < \tfrac{1}{2} e^{-\pa{K_{j}\ell}^{\beta}} \leq \tfrac{1}{2} e^{-\ell^{\beta}}.
\eeq

\item If $\NboxLu$ is E-left resonant, then there exists an N-particle box
\beq   
\mathbf{\boldlambda} = \boldlambda_{K_{j}\ell}   \times \boldlambda_{L}(\boldu_{\cJ^{c}}) ,
\eeq
where  $j \in \set{1, 2, \ldots, \abs{\cJ}^{\abs{\cJ}} }$,  $\boldx \in \suitc (\boldu_{\cJ})$, and $  \boldlambda_{K_{j}\ell}(\boldx)    \subseteq \boldlambda_{L}(\boldu_{\cJ}) $,   such that 
\beq 
\dist \bigl( \sigma \left(  \mathbf{\boldlambda} \bigr) , E    \right) < \tfrac{1}{2} e^{-\pa{K_{j}\ell}^{\beta}} \leq \tfrac{1}{2} e^{-\ell^{\beta}}.
\eeq
\end{enumerate}
\end{lemma}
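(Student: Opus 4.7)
The plan is to exploit the tensor-product structure of the Hamiltonian on a PI box in tandem with the definition of right (resp.\ left) resonance: that definition already provides a small subbox whose spectrum comes within $\tfrac12 e^{-(K_j\ell)^\beta}$ of $E-\lambda$ for some eigenvalue $\lambda$ of the complementary factor, and concatenating that small subbox with the full factor produces a PI box $\mathbf{\boldlambda}$ whose spectrum contains the sum $\lambda+\mu^\prime$ approximating $E$. The two parts of the lemma are symmetric, so I treat (i).

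First I would unfold Definition~\ref{CNR}(ii): that $\NboxLu$ is $E$-right resonant supplies some $\bolda \in \suitc(\boldu_{\cJ^c})$, an index $j \in \set{1,\dots,\abs{\cJ^c}^{\abs{\cJ^c}}}$, and an eigenvalue $\lambda \in \sigma\pa{H_{\boldlambda_L(\boldu_\cJ)}}$ for which the subbox $\boldlambda_{K_j\ell}(\bolda)\subseteq \boldlambda_L(\boldu_{\cJ^c})$ fails to be $(E-\lambda)$-nonresonant, i.e.,
\beq
\dist\bigl(\sigma(H_{\boldlambda_{K_j\ell}(\bolda)}),\, E-\lambda\bigr) < \tfrac12 e^{-(K_j\ell)^\beta}.
\eeq

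Next I would form the candidate $\mathbf{\boldlambda} = \boldlambda_L(\boldu_\cJ) \times \boldlambda_{K_j\ell}(\bolda)$. Since $\boldlambda_{K_j\ell}(\bolda) \subseteq \boldlambda_L(\boldu_{\cJ^c})$, the $\cJ$- and $\cJ^c$-projections of $\mathbf{\boldlambda}$ are contained in $\Pi_\cJ \NboxLu$ and $\Pi_{\cJ^c}\NboxLu$ respectively, which are disjoint by the PI hypothesis on $\NboxLu$. Hence $\mathbf{\boldlambda}$ is itself PI with respect to the same partition, and by the standard PI decomposition (Remark~\ref{piremark} and the lemma following it)
\beq
H_{\mathbf{\boldlambda}} = H_{\boldlambda_L(\boldu_\cJ)} \otimes I + I \otimes H_{\boldlambda_{K_j\ell}(\bolda)},
\eeq
so $\sigma(H_{\mathbf{\boldlambda}}) = \sigma(H_{\boldlambda_L(\boldu_\cJ)}) + \sigma(H_{\boldlambda_{K_j\ell}(\bolda)})$. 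Choosing $\mu^\prime \in \sigma(H_{\boldlambda_{K_j\ell}(\bolda)})$ within $\tfrac12 e^{-(K_j\ell)^\beta}$ of $E-\lambda$, we get $\lambda+\mu^\prime \in \sigma(H_{\mathbf{\boldlambda}})$ with
\beq
\abs{(\lambda+\mu^\prime)-E} = \abs{\mu^\prime-(E-\lambda)} < \tfrac12 e^{-(K_j\ell)^\beta} \le \tfrac12 e^{-\ell^\beta},
\eeq
using $K_j \ge 1$. Part (ii) follows from the same argument with the roles of $\cJ$ and $\cJ^c$ swapped, setting $\mathbf{\boldlambda} = \boldlambda_{L}(\boldu_\cJ) \times \boldlambda_{K_j\ell}(\bolda)$ replaced by $\boldlambda_{K_j\ell}(\bolda) \times \boldlambda_{L}(\boldu_{\cJ^c})$.

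There is no substantive obstacle here; the only step requiring any care is verifying that the newly constructed product box inherits the PI factorization from $\NboxLu$ so that the spectral-sum formula applies, and this is immediate from the inclusion $\boldlambda_{K_j\ell}(\bolda)\subseteq \boldlambda_L(\boldu_{\cJ^c})$ together with the disjointness of the $\cJ/\cJ^c$ projections coming from the PI property of $\NboxLu$.
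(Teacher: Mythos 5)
Your proposal is correct and follows essentially the same route as the paper: unfold the definition of right (left) resonance to extract the resonant subbox $\boldlambda_{K_j\ell}(\bolda)$ and the eigenvalue $\lambda$ of the complementary factor, form the product box, and use the tensor decomposition $\sigma(H_{\mathbf{\boldlambda}})=\sigma(H_{\boldlambda_L(\boldu_\cJ)})+\sigma(H_{\boldlambda_{K_j\ell}(\bolda)})$ to place $\lambda+\mu^\prime$ within $\tfrac12 e^{-(K_j\ell)^\beta}$ of $E$. Your explicit check that the product box inherits the PI factorization is a small point the paper leaves implicit, but the argument is the same.
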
 

\begin{proof}
Let $E \in \R$ and $\NboxLu = \boldlambda_{L}(\bold{u}_{\cJ}) \times  \boldlambda_{L}(\bold{u}_{\cJ^{c}} ) $ be a PI N-particle box.  Assume that $\NboxLu$ is E-right resonant. Then by definition, we can find $\lambda \in \sigma \left( H_{\boldlambda_{L}(\boldu_{\cJ})}  \right)$ and an $ \bigl ( N- \abs{\cJ} \bigr)$-particle box, $\boldlambda_{K_{j}\ell} (\boldx) \subseteq \boldlambda_{L}(\boldu_{\cJ^{c}})$, with $\x \in \Xi_{L,\ell} \left( \boldu_{\cJ^{c}} \right)$ and  $j \in \set{1, 2, \ldots, \abs{\cJ}^{\abs{\cJ}} }$, such that $\boldlambda_{K_{j}\ell} (\boldx)$ is $(E-\lambda)$-resonant, i.e., 
$\dist \pa{ \sigma \pa{ H_{ \boldlambda_{K_{j}\ell} (\boldx) }}  , E - \lambda    }< \tfrac{1}{2} e^{-\pa{K_{j}\ell}^{\beta}}$.
Thus there exists $\eta \in \sigma \bigl( H_{ \boldlambda_{K_{j}\ell} (\boldx) } \bigr)$ such that 
\beq  
\abs{ E - \lambda - \eta }< \tfrac{1}{2} e^{-\pa{K_{j}\ell}^{\beta}}.
\eeq

Moreover, 
$ \boldlambda_{L}(\bold{u}_{\cJ}) \times  \boldlambda_{L}(\bold{u}_{\cJ^{c}} )$ is PI and  $  \boldlambda_{K_{j}\ell} \subseteq \boldlambda_{L}(\boldu_{\cJ^{c}})$, so if we take $\mathbf{\boldlambda} = \boldlambda_{L}(\boldu_{\cJ}) \times  \boldlambda_{K_{j}\ell} (\boldx)$ we get
$H_{\mathbf{\boldlambda}}^{(N)} = H_{\boldlambda_{L}(\boldu_{\cJ})} \otimes I + I \otimes H_{\boldlambda_{K_{j}\ell}}$, which means that
\beq 
\sigma \left( H_{ \mathbf{\boldlambda} } \right)  = \sigma \Bl( H_{\boldlambda_{L}(\boldu_{\cJ}) }  \Br) + \sigma \left( H_{  \boldlambda_{K_{j}\ell} } \right).
\eeq
Hence, if a PI N-particle box $\NboxLu = \boldlambda_{L}(\bold{u}_{\cJ}) \times  \boldlambda_{L}(\bold{u}_{\cJ^{c}} ) $ is $E$-right resonant, then there exists an N-particle box
$\mathbf{\boldlambda} = \boldlambda_{L}(\boldu_{\cJ}) \times  \boldlambda_{K_{j}\ell},$ 
where $  \boldlambda_{K_{j}\ell} \subseteq \boldlambda_{L}(\boldu_{\cJ^{c}})$,       such that 
\begin{align*}
\dist \bigl( \sigma \left( H_{ \mathbf{\boldlambda} } \bigr) , E    \right) < \tfrac{1}{2}e^{-\pa{K_{j}\ell}^{\beta}}.
\end{align*}

The same argument applies to a PI N-particle box being $E$-left resonant.
\end{proof}

We now state the energy interval multiscale analysis.
Given $m > 0$,  $L \in \N$, $\x, \, \y \in \Ndspace$, and an interval $I$, we define the event
\begin{align}
&R \left( m, \, I,\, \x, \, \y, \, L, \, N \right) =   \notag\\
& \qquad \left\{ \exists \, E \in I \sqtx{such that}\boldlambda _{L}^{(N)}(\boldx) \text{ and } \boldlambda _{L}^{(N)}(\boldy)  \text{ are not } \left(m, E\right)\text{-regular}  \right\} . 
\end{align}

\begin{proposition} \label{part4mainthm}.
 Let $\zeta, \, \tau, \beta, \, \zeta_1,\, \zeta_2,\, \gamma$ as in \eq{constant} and   $0<m_0 <m^*$. 
 Then  there exists a length scale $Z_{3}^{*}$ such that, given an interval    $I \subseteq \R$,  if for some $L_0 \geq Z_{3}^{*}$ we can verify
\beq
\P \Bigl\{R \left( m_0, \, I,\, \x, \, \y, \, L_0, \, N \right)\Bigr\}  \leq e^{-L_0^{\zeta_2}}
\eeq
for every pair of partially separable $N$-particle boxes $\boldlambda_{L_0}^{N}(\x)$ and $\boldlambda_{L_0}^{(N)}(\y)$, 
then, for all $k = 0, 1, 2, ...$ we have,  setting $L_{k+1} = L_k^{\gamma} = L_0^{\gamma^k}$, that
\beq
\P \Bigl\{R \left( \tfrac{m_0}{2}, \, I,\, \x, \, \y, \, L_k, \, N \right) \Bigr\} \leq e^{-L_k^{\zeta_2}}
\eeq
for   every pair of partially separable $N$-particle  boxes $\boldlambda_{L_k}^{(N)}(\boldx)$ and $\boldlambda_{L_k}^{(N)}(\boldy)$.
\end{proposition}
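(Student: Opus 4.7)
The plan is to iterate the single-scale engine of Propositions~\ref{part1mainthm}--\ref{part3mainthm} on $L_k = L_0^{\gamma^k}$, adapted to pairs of partially separated $N$-particle boxes and a full energy interval $I$. Inductively set $\ell = L_{k-1}$, $L = \ell^\gamma = L_k$, take a pair of partially separated boxes $\NboxLx$ and $\NboxLy$, and cover each with its $\ell$-suitable cover. Mirroring the proof of Proposition~\ref{part4mainthm1}, I would choose a threshold $J = J(\ell)$ growing with $\ell$ (say $J = \lceil \ell^{(\gamma-1)\zeta_2}\rceil$) so that $(J+1)\ell^{\zeta_2} > L^{\zeta_2}$.

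The deterministic engine is Lemma~\ref{part2firstthm}: an $E$-preregular, $E$-highly-nonresonant PI sub-box is $(m(\ell),E)$-regular, while FI sub-boxes are handled by the $\ell$-distant resolvent chain as in Lemma~\ref{part2prop1a}. I define three bad events for $(\boldx,\boldy)$: the event $\cA$ that some PI sub-box in either cover fails to be $E$-preregular or $E$-HNR for some $E \in I$; the event $\cW$ that either big box or one of its interior $K_j\ell$-subboxes is $E$-resonant for some $E \in I$, or that $\sigma(H_{\NboxLx})$ and $\sigma(H_{\NboxLy})$ come within $e^{-L^\beta}$ of each other; and the event $\cF$ that each cover contains more than $J$ pairwise $\ell$-distant $(m_0,E)$-nonregular FI sub-boxes for some common $E \in I$. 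On $\cA^c \cap \cW^c \cap \cF^c$, running the resolvent chain inside each big box with the at most $J$ bad regions grouped via Lemma~\ref{lembadregions}, and Lemma~\ref{part2firstthm} applied to its PI sub-boxes, yields that both $\NboxLx$ and $\NboxLy$ are $(m_0/2,E)$-regular for every $E \in I$.

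For the probability bounds: $\P(\cA) \le L^{O(1)} e^{-\ell^\tau}$ follows from Lemma~\ref{prereg} and Theorem~\ref{Wegner0}, after covering $I$ by $L^{O(1)}e^{L^\beta}$ subintervals of length $e^{-L^\beta}$ so that Lemma~\ref{part4lem0} lifts the pointwise estimates to the full interval; $\P(\cW) \le L^{O(1)}e^{-L^\beta}$ by Theorem~\ref{Wegner0} and Corollary~\ref{Wegner2}; and, crucially, $\P(\cF)$ uses partial separation of the outer pair: by Lemma~\ref{part1prop2}, any $\ell$-distant FI sub-boxes one inside each big box are fully, hence partially, separated, so the induction hypothesis at scale $\ell$ applied to each such cross-pair yields $e^{-\ell^{\zeta_2}}$ per pair; multiplying over $J+1$ independent partially separated cross-pairs gives $\P(\cF) \le L^{O(J)}e^{-(J+1)\ell^{\zeta_2}} \le \tfrac13 e^{-L^{\zeta_2}}$. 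The constraints $\zeta\gamma^2 < \zeta_2 < \zeta_1 < \beta < \tau$ from \eq{constant} then combine the three contributions to $e^{-L^{\zeta_2}}$, closing the induction.

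The main obstacle is organizing the $J+1$ bad cross-pairs inside $\cF$ into partially separated sub-box pairs so that the induction hypothesis applies, while absorbing non-$\ell$-distant bad sub-boxes within the same cover into common exceptional regions via Lemma~\ref{lembadregions}. This is delicate because sub-boxes in a single cover need not be partially separated from each other, so the counting must be done by matching FI bad regions across the two covers, using the partial separation of the outer pair to convert the cross-matching into independent $e^{-\ell^{\zeta_2}}$ contributions. A secondary point is propagating the interval $I$ through Lemma~\ref{part4lem0}: the $L^{O(1)}e^{L^\beta}$ subinterval factor from the union bound is harmless because $\beta > \zeta_2$.
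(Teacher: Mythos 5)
Your overall architecture (suitable covers, preregular PI sub-boxes handled by Lemma~\ref{part2firstthm}, FI sub-boxes handled by the resolvent chain of Lemma~\ref{part2prop1a}, a scale-dependent $J$, and a two-box Wegner input) matches the paper's, but your treatment of resonance over the interval $I$ has a fatal gap. Your event $\cW$ contains the clause ``the big box or one of its interior $K_j\ell$-subboxes is $E$-resonant for some $E\in I$,'' and your event $\cA$ contains ``fails to be $E$-HNR for some $E\in I$.'' For an interval $I$ of fixed length these events have probability close to $1$, not $L^{O(1)}e^{-L^{\beta}}$: resonance at some $E\in I$ is essentially the event that $\sigma(H_{\boldsymbol{\Lambda}})$ meets an $e^{-L^{\beta}}$-neighborhood of $I$, and Theorem~\ref{Wegner0} only bounds this by $N\norm{\rho}_\infty\abs{I}L^{Nd}$, which is large. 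Your proposed union bound over $e^{L^{\beta}}$ subintervals of length $e^{-L^{\beta}}$ makes things worse, not better (each subinterval contributes $\sim L^{Nd}e^{-L^{\beta}}$ and there are $e^{L^{\beta}}$ of them). As a consequence, your claimed deterministic conclusion that \emph{both} $\NboxLx$ and $\NboxLy$ are $(m_0/2,E)$-regular for all $E\in I$ cannot be reached on an event of probability $1-e^{-L^{\zeta_2}}$ — and it is not what is needed, since $R^c$ only requires that for each $E\in I$ at least one of the two boxes is regular. The paper's mechanism is to place \emph{every} resonance requirement (nonresonance of the big boxes, of the $K_j\ell$-subboxes, and the LNR/RNR conditions behind HNR) into a single event $\cU_J$ built from Corollary~\ref{Wegner2}, comparing the spectrum of every box in a family $\cM_{\boldx}$ with every box in $\cM_{\boldy}$; these cross-pairs are partially separated, so Corollary~\ref{Wegner2} applies, and on $\cU_J^c$, for each $E\in I$ at least one side is entirely nonresonant, and only that side is then shown to be regular.

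A second problem is your estimate of $\cF$ by cross-matching FI bad sub-boxes across the two covers: multiplying $J+1$ factors of $e^{-\ell^{\zeta_2}}$ requires independence of the pair-events, and partial separation of the outer pair does not provide it (a sub-box near $\boldx$ occurring in one cross-pair can overlap the projections of a sub-box near $\boldy$ occurring in another). The paper instead bounds the probability that a \emph{single} cover contains $J$ pairwise $\ell$-distant FI nonregular boxes for a common $E$: by Lemma~\ref{part1prop2} such boxes are pairwise fully separated, so they split into $J/2$ pairs of partially separated boxes depending on disjoint collections of random variables, the scale-$\ell$ interval hypothesis applies to each pair, and independence gives $e^{-(J/2)\ell^{\zeta_2}}$; choosing $J\sim 2L^{\beta-\zeta_2/\gamma}$ beats $e^{-L^{\zeta_2}}$ while keeping $\P(\cU_J)$, which grows only polynomially in $J$, of order $e^{-\ell^{\beta}}$ times polynomial factors. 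Note also that your stated conclusion (both boxes regular) is inconsistent with your own definition of $\cF$, whose complement only guarantees that one of the two covers has few bad FI sub-boxes.
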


\begin{proof}
Given $\ell$ (sufficiently large) and $0<m_{\ell} < m^*$, we set $L = \ell^{\gamma}$ and take  $m_L$ as in \eq{mlmL}.  If $\ell$ is large, we have $m(\ell) > m_\ell$, where  $m(\ell)$ is given in  \eqref{mL}, and conclude that $m(L) \ge m(\ell) > m_\ell >m_L$.

We start by showing that if
\beq
\P \Bigl\{R \left( m_{\ell}, \, I,\, \x, \, \y, \, \ell, \, N \right) \Bigr\}\leq e^{-\ell^{\zeta_2}}
\eeq
for every pair of partially separable N-particle boxes $\boldlambda_{\ell}^{(N)}(\x)$ and $\boldlambda_{\ell}^{(N)}(\y)$,
 then
\beq
\P \Bigl\{R \left( m_{L}, \, I,\, \x, \, \y, \, L, \, N \right) \Bigr\}\leq e^{-L^{\zeta_2}}
\eeq
for every pair of partially separable N-particle boxes $\boldlambda_{L}^{(N)}(\x)$ and $\boldlambda_{L}^{(N)}(\y)$.

Let $\boldlambda_{L}^{(N)}(\x)$ and  $\boldlambda_{L}^{(N)}(\y)$ be a pair of partially separable N-particle boxes.  Let $J \in 2N$. Let
$\cB_{J}$ be the  the event that there exists $E \in I$ such that  either ${\mathcal C}_{L,\ell} (\boldx)$ or ${\mathcal C}_{L,\ell} (\boldy)$ contains $J$ pairwise $\ell$-distant FI boxes that are $(m_{\ell}, \, E)$-nonregular, and let
$\cA$ be  the event that there exists $E \in I$ such that
either ${\mathcal C}_{L,\ell} (\boldx)$ or ${\mathcal C}_{L,\ell} (\boldy)$ contains one PI box   that is not $E$-preregular.
If $\bom \in \cB_{J}^{c} \cap \cA^{c}$, then for all $E \in I$ the following holds:
\begin{enumerate}
\item ${\mathcal C}_{L,\ell} (\boldx)$ and ${\mathcal C}_{L,\ell} (\boldy)$ contain at most $J-1$ pairwise $\ell$-distant FI $(m_{\ell}, \, E)$-nonregular boxes.   

\item Every PI box in ${\mathcal C}_{L,\ell} (\boldx)$ and ${\mathcal C}_{L,\ell} (\boldy)$ is $E$-preregular.
\end{enumerate}

We also define the event
\beq
\cU_{J} = \bigcup_{\boldlambda^{\pr} \in \cM_{\boldx},\boldlambda^{\pr\pr} \in  \cM_{\boldy}} \set{ \dist \left( \sigma(H_{\boldlambda^{\pr}}), \sigma(H_{\boldlambda^{\pr\pr}}) \right) < e^{-\ell^{\beta}} },
\eeq
where,
given an $N$-particle box $\NboxLa$,   by $\cM_{\bolda}$ we denote  the collection of all boxes of the following three types: 
\begin{enumerate}
\item $\NboxLa$,
\item $\boldlambda_{K_{j_t}\ell}(\boldu) \subseteq \NboxLa$, where $\boldu \in \suitc(\bolda)$, $t \in \set {1, \, 2, \,  \ldots, T_{\bolda} \,\, | \,\, T_{\bolda} \leq JN^{N}}$, and $j_t \in \set{1,\,2,\ldots, JN^{N}}$  ,
\item $\boldlambda = \boldlambda_{t_1}^{\cJ}(\boldv) \times \boldlambda_{L}(\bolda_{\cJ^c})$, where $\boldv \in \suitc(\bolda_{\cJ})$, $\cJ$ is any non empty subset of $\setN$, $t_1 \in \set{K_{j_t}\ell \,\, | \,\, t = 1, \ldots, T_{\boldv} \le \abs{\cJ}^{\abs{\cJ}}, j_t \in \set{1, \, 2, \ldots \abs{\cJ}^{\abs{\cJ}}} }$,  
\end{enumerate}
It is clear that $\abs{\cM_{\bolda}} < 2^{N+2} J^{2} N^{2N} \pa{\tfrac{2L}{\ell}}^{Nd}$, and hence it follows from Corollary~\ref{Wegner2} that
\beq\label{probUJ}
\P \pa{\cU_{\cJ}} \leq  2 N^{4N+1} J^4 \norm{\rho}_\infty\pa{\tfrac{2L}{\ell}}^{2Nd} L^{Nd} e^{-\ell^{\beta}}.
\eeq
Note that  for $\bom \in \ \cU_J^{c}$ and  $E\in I$, either every box in $\cM_{\boldx}$ is $E$-nonresonant or every box in $\cM_{\boldy}$ is $E$-nonresonant.

Let $\bom \in \cB_{J}^{c} \cap \cA^{c}  \cap \cU_J^{c}$ and $E \in I$. 
If every box in $\cM_{\boldx} $  is $E$-nonresonant, then, in particular, every PI box in ${\mathcal C}_{L,\ell} (\boldx)$ is $E$-HNR and $E$-preregular, and hence $\pa{m(\ell), \, E}$-regular by Lemma \ref{part2firstthm}.   As $m(\ell)>m_\ell$, 
we conclude that  every PI box in ${\mathcal C}_{L,\ell} (\boldx)$ is $\pa{m_\ell ,\, E}$-regular. Since  $\bom \in \cB_{J}^{c} \cap \cA^{c}$, ${\mathcal C}_{L,\ell} (\boldx)$ contains at most $J-1$ pairwise $\ell$-distant FI  $(m_\ell, \, E)$-nonregular boxes in ${\mathcal C}_{L,\ell} (\boldx)$, and all other boxes in ${\mathcal C}_{L,\ell} (\boldx)$ are  $(m_\ell, \, E)$-regular, it follows from Lemma~\ref{part2prop1a} that   $\NboxLx$ is $(m_L, \, E)$-regular. If there exists a box in $\cM_x$ that is $E$-nonresonant, then every box in $\cM_y$ must be $E$-nonresonant, and thus $\NboxLy$ is $(m_L, \, E)$-regular
by the previous argument.   Thus  for every $E \in I$ either $\NboxLx$ is $(m_L, \, E)$-regular or $\NboxLy$ is $(m_L, \, E)$-regular.    It follows that 
\beq
R \pa{m_{L}, \, I,\, \x, \, \y, \, L, \, N} \subseteq \pa{\cB_{J}^{c} \cap \cA^{c}  \cap \cU_J^{c}}^c,
\eeq
so
\beq \label{probR}
\P \pa{R \pa{ m_{L}, \, I,\, \x, \, \y, \, L, \, N} } \leq \P \pa{\cB_{J}} + \P \pa{\cA} + \P \pa{\cU_{J}}.
\eeq

Using Lemma~\ref{part1prop2} and Lemma \ref{prereg}, 
  we get 
  \beq\label{probAB}
  \P(\cB_{J})  \leq 2\pa{ \tfrac {2L}\ell}^{2Nd} e^{- \tfrac{J}{2} L^{\tfrac{\zeta_2}{\gamma}}} \qtx{and} \P(\cA)  \leq 2\pa{ \tfrac {2L}\ell}^{2Nd}  e^{-L^{\tfrac{\tau}{\gamma}}}.
  \eeq 
We now fix  
$$
J \in \Bigl( 2 L^{\beta - \tfrac{\zeta_2}{\gamma}},2 L^{\beta - \tfrac{\zeta_2}{\gamma}}+2\Bigr] \cap 2\N, $$
so  ($L$ large)  $\P(\cB_{J})  \leq  \tfrac{1}{3} e^{-L^{\zeta_2}}$ , $\P(\cA)  \leq \tfrac{1}{3} e^{-L^{\zeta_2}}$, and $\P(\cU_{J})  \leq  \tfrac{1}{3} e^{-L^{\zeta_2}}$, and we conclude from 
\eq{probR} that
\beq
\P(R \left( m_{L}, \, I,\, \x, \, \y, \, L, \, N \right)) \leq e^{-L^{\zeta_2}} .
\eeq

We now take $L_0$ large enough so that  $m(L_0)>m_{L_0}=m_0$, and the above procedure can be carried out with $\ell=L_0$,  let  $L_{k+1} = L_k^{\gamma}$ for $k=0,1,\ldots$, and set $m_k=m_{L_k}$.
To finish the proof, we just need to make sure $m_{L_k} > \tfrac{m_0}{2}$ for all $k=0,1,\ldots$, but this clearly  can be achieved by taking $L_0$ sufficiently large, similarly to the argument in \eq{halfmass}.
\end{proof}

\begin{remark}
The proof of Proposition \ref{part4mainthm} gives us more than just our desired conclusion. It shows that for $\bom \in \cB_{J}^{c} \cap \cA^{c}  \cap \cU_J^{c}$ and $E \in I$, either $\NboxLx$ is $(m_L, \,E)$-good or $\NboxLy$ is $(m_L, \,E)$-good. Hence,
\begin{align} 
& \P \Bigl\{ \exists \, E \in I \,\,so \,\, \mathbf{\Lambda}_{L_k}^{(N)}(\boldx) \,\, and \,\, \mathbf{\Lambda}_{L_k}^{(N)}(\boldy)  \,\,are\,\, not \,\, \left ({m_{\zeta_2}}, \,E \right )\text{-good} \Bigr\} \leq e^{-L_{k}^{\zeta_2}}.
\end{align}
As a consequence, we  also get a stronger form of Proposition \ref{part4mainthm}.
\end{remark}

\begin{theorem} \label{part2thm}
Let $\zeta_2 \in (\zeta, \, \tau)$ and $\gamma \in (1, \, \tfrac{1}{{\zeta_2}})$  with $\zeta \, \gamma^{2} < \zeta_2  $ be given. Assume there exists a mass $m_{\zeta_2}>0$, a length scale $L_0 = L_{0}(\zeta_2)$, and $\delta_{\zeta_2} > 0$ such that if we take $L_{k+1} = L_{k}^{\gamma}$, then for every $k \in \N$ and for every $E_1 \in \R$, setting $I(E_1)=[E_1-\delta_{\zeta_2}, E_1+\delta_{\zeta_2}]$,  we have
\begin{align} \label{appeneq2}
& \P \Bigl\{ \exists \, E \in I(E_1) \,\,so \,\, \mathbf{\Lambda}_{L_k}^{(N)}(\bolda) \,\, and \,\, \mathbf{\Lambda}_{L_k}^{(N)}(\boldb)  \,\,are\,\, not \,\, \left ({m_{\zeta_2}}, \,E \right )\text{-good} \Bigr\} \leq e^{-L_{k}^{\zeta_2}} , 
\end{align}
for every pair of $L_k-$distant N-particle boxes $\NboxLa$ and $\NboxLb$. Then there exists $L_{\zeta}$ such that for every for every $E_1 \in \R$ and every $L \geq L_{\zeta}$ 
\begin{align}
& \P \Bigl\{ \exists \, E \in I(E_1) \,\,so \,\, \NboxLa \,\, and \,\, \NboxLb  \,\,are\,\, not \,\, \left ( \tfrac{m_{\zeta_2}}{2}, \,E \right )\text{-good}\Bigr\} \leq e^{-L^{\zeta}} \notag, 
\end{align}
for every pair of L-distant N-particle boxes $\NboxLa$ and $\NboxLb$.
\end{theorem}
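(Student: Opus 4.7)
The plan is to mimic the single-energy reduction of Lemma~\ref{part1thm}, now handling two $L$-distant boxes simultaneously while sweeping over $E \in I(E_1)$. Given $L$ large, I would pick $K \in \N$ with $L_K \le L < L_{K+1}$ and set $\ell = L_{K-1}$; then $L_K = \ell^\gamma$ and $L_{K+1} = \ell^{\gamma^2}$, so $L = \ell^{\gamma'}$ for some $\gamma \le \gamma' < \gamma^2$, which is exactly the range in which Lemma~\ref{GKlemma} applies. For $L_0$ sufficiently large we also have $\ell \le L/6$, so the suitable $\ell$-cover of Definition~\ref{defcov} is well-defined.

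Given an $L$-distant pair $\NboxLa, \NboxLb$, I would introduce the event
\beq
\cF_2 \,=\, \bigcup_{\boldu \in \Xi_{L,\ell}^{(N)}(\bolda),\; \boldv \in \Xi_{L,\ell}^{(N)}(\boldb)} \cR_{\boldu,\boldv},
\eeq
where $\cR_{\boldu,\boldv}$ is the event that some $E \in I(E_1)$ makes both $\boldlambda_\ell^{(N)}(\boldu)$ and $\boldlambda_\ell^{(N)}(\boldv)$ not $(m_{\zeta_2}, E)$-good. If the event I wish to bound occurs with witnessing energy $E$, then applying the contrapositive of Lemma~\ref{GKlemma} separately to each of $\NboxLa, \NboxLb$ at this $E$ yields $\boldu, \boldv$ certifying $\bom \in \cR_{\boldu,\boldv}$, so the target event is contained in $\cF_2$.

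The key geometric observation is that $L$-distance of the outer pair forces $\ell$-distance of every corresponding pair of subboxes: since $\norm{\boldu - \bolda}, \norm{\boldv - \boldb} \le L/2$, the condition $\max\set{\dist(\boldb,\cS_\bolda^N),\dist(\bolda,\cS_\boldb^N)} \ge 2NL$ degrades by at most $L$, leaving at least $(2N-1)L \ge 2N\ell$, which is comfortable since $\ell \le L/6$. Hypothesis \eq{appeneq2} therefore applies at scale $\ell$, giving $\P(\cR_{\boldu,\boldv}) \le e^{-\ell^{\zeta_2}}$, and a union bound over the at most $(2L/\ell)^{2Nd}$ pairs yields $\P(\cF_2) \le (2L/\ell)^{2Nd} e^{-\ell^{\zeta_2}}$. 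Since $L^\zeta \le \ell^{\gamma^2 \zeta}$ with $\gamma^2 \zeta < \zeta_2$ by \eq{constant}, the exponential dominates the polynomial prefactor once $L_0$ is large, giving $\P(\cF_2) \le e^{-L^\zeta}$.

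I do not foresee a serious obstacle: Lemma~\ref{GKlemma} does the geometric reduction, the $\ell$-distance preservation under passing to subboxes is essentially automatic from $L$-distance, and the only bookkeeping is choosing $L_0$ large enough that $\ell \le L/6$ for every $L \ge L_\zeta$ and that the polynomial factor $(2L/\ell)^{2Nd}$ is absorbed into the subexponential bound, both of which are routine.
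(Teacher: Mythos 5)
Your proposal is correct and follows essentially the same route as the paper: the same choice of $\ell = L_{K-1}$ with $L = \ell^{\gamma'}$, $\gamma \le \gamma' < \gamma^2$, the same event $\cF_2$ as a union over pairs of cover boxes, the same reduction via Lemma~\ref{GKlemma}, and the same union bound absorbed using $\zeta\gamma^2 < \zeta_2$. Your explicit check that $L$-distance of the outer pair implies $\ell$-distance of the subbox pairs is a detail the paper leaves implicit, but otherwise the arguments coincide.
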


\begin{proof}
 Given a scale $L$ we take $K$ such that $ L_K \leq L < L_{K+1}$, and set $\ell = L_{K - 1}$.  Note that $L_K=\ell^\gamma$ and  $L_{K+1} = L_{K}^{\gamma} = L_{K-1}^{\gamma^{2}} = \ell^{\gamma^2}$, so $L=\ell^{\gamma^\pr}$ with $\gamma \le \gamma^\pr <\gamma^2$.

 Let $\NboxLa$ and $\NboxLb$ be a pair of L-distant N-particle boxes.  Given $\boldu \in \suitc(\bolda)$ and $\boldv \in \suitc(\boldb)$, we set
 \beq \notag
 \cR_{\boldu, \, \boldv} = \Bigl\{ \exists \, E \in I(E_1) \,\,so \,\, \mathbf{\Lambda}_{\ell}^{(N)}(\boldu) \,\, and \,\, \mathbf{\Lambda}_{\ell}^{(N)}(\boldv)  \,\,are\,\, not \,\, \left ({m_{\zeta_2}}, \,E \right )\text{-good} \Bigr\},
 \eeq
 and let
\beq
\cF_2 = \bigcup_{\boldu \in \suitc(\bolda), \;\boldv \in \suitc(\boldb)} \cR_{\boldu, \,\, \boldv}.
\eeq
Let  $\bom \in \cF_2^{c}$. Then, either every box in ${\mathcal C}_{L,\ell}^{(N)}(\bolda)$ is  $(m_{\zeta_2}, \, E)$-good, or every box in ${\mathcal C}_{L,\ell}^{(N)} (\boldb)$ is  $(m_{\zeta_2}, \, E)$-good for every $E \in I(E_1)$. Hence, by Lemma \ref{GKlemma}, either $\NboxLa$ is $\left( \tfrac{m_{\zeta_2}}{2}, \, E \right) $-good or $\NboxLb$ is $\left( \tfrac{m_{\zeta_2}}{2}, \, E \right) $0good for every $E \in I(E_1)$ and for every $\bom \in \cF_1^{c}$ . The conclusion follows since
\beq
\P \left( \cF_2 \right) \leq e^{- L^{\zeta}}. 
\eeq
\end{proof} 

\subsection{Completing the proof of the bootstrap multiscale analysis}
Proceeding as in \cite[Section~6]{GK1}, Theorem~\ref{maintheorem} follows from Propositions~\ref{part1mainthm}, \ref{part2mainthm}, \ref{part3mainthm},  plus Proposition~\ref{part4mainthm1} for Part (i) (the single energy bootstrap multiscale analysis), and Propositions~\ref{bridgethm},  \ref{part4mainthm} and \ref{part2thm} for Part (ii) (the energy interval bootstrap multiscale analysis).

\section{Localization for the multi-particle Anderson model}\label{secloc}
In this section we prove Corollary \ref{PPS}.  We assume that  \eq{concmsa} holds on an interval $I$ and prove the conclusions of Corollary \ref{PPS} in $I$. Since $H_{\bom}^{(N)}$ has compact spectrum, it can be covered by a finite number of intervals where  \eq{concmsa} holds,
and hence the conclusions of Corollary \ref{PPS} hold on $\R$.

We take  $L_0$ large enough  so  \eq{concmsa} holds for all  $L\ge L_0$, and set $L_{k+1}=2L_k$ for $k \in \N$.

\subsection{Anderson Localization}
\begin{proof}[Proof of Corollary \ref{PPS}(i)] 
For $\boldx_0 \in \Ndspace$ and $k \in \N$, we set\beq
A_{k+1} (\boldx_0) = \Bigl\{ \boldx \in \Ndspace \,\, | \,\, L_k < d_{H}( \boldx, \, \boldx_0 ) \leq  L_{k+1} = 2L_k\Bigr\}.
\eeq
Then for $\x \in A_{k+1} (\boldx_0)$, we have $d_{H}(\x_0, \, \x) >L_k$. 
Moreover, it follows from the definition of the Hausdorff distance  that
\beq
d_{H}( \boldx, \, \boldy )  \leq \norm{\boldx - \boldy} \leq d_{H}( \boldx, \, \boldy )  + \diam\x \qtx{for} \boldx, \, \boldy \in \Ndspace ,
\eeq
where 
\beq
\diam \x := \max_{i, \, j = 1, \dots, N} \norm{x_i - x_j}.
\eeq
Hence $\abs{ A_{k+1} (\boldx_0) } \leq \bigl( 2L_{k} + \diam\x_0 \bigr) ^{Nd}$. 
Let us define the event
\begin{align*}
& E_{k}(\x_0) := \\
& \bigcup_{ \x \in  A_{k+1} (\boldx_0) }\Bigl\{ \exists \, E \in I\sqtx{such that}\mathbf{\Lambda}_{L_k} (\x_0) \sqtx{and}\mathbf{\Lambda}_{L_k} (\x)\sqtx{are}(m, \,E)\text{-nonregular} \Bigr\}.
\end{align*}
Applying \eq{concmsa}, we get
\beq
\P  \Bigl( E_{k}(\x_0)  \Bigr) \leq \bigl( 2L_{k} + \diam\x_0 \bigr) ^{Nd} e^{-L_{k}^{\zeta}} ,
\eeq
so we have
\beq
\sum_{k = 0} ^{\infty} \P  \Bigl( E_{k}(\x_0)  \Bigr) < \infty.
\eeq
It  follows from the  Borel Cantelli Lemma that
\beq
\P \Bigl\{ E_{k}(\x_0) \,\,\text{occurs infinitely often}  \Bigr\} = 0 \qtx{for all} \x_0 \in \Ndspace,
\eeq
so
\beq
\P \Bl\{  \bigcup_{\x_0 \in \Ndspace} \bigl\{ E_{k}(\x_0) \,\,\text{occurs infinitely often} \bigr\} \Br\} = 0,
\eeq
i.e., 
\beq
\P  \Bigl\{ E_{k}(\x_0) \,\,\text{occurs infinitely often for some } \x_0 \in \Ndspace \Bigr\}  = 0.
\eeq

Let $\Omega_0 := \Bigl\{ E_{k}(\x_0) \,\,\text{occurs infinitely often for some } \x_0 \in \Ndspace \Bigr\}.$ Take $\bom \in \Omega_0$ and let $H = H_{\bom}$. We will be done if we can prove that every generalized eigenvalue of $H$ in $I$ is actually an eigenvalue by showing that any corresponding generalized eigenfunction has exponential decay.

Let $E \in I$ be a generalized eigenvalue of $H$ with the corresponding nonzero polynomially bounded generalized eigenfunction $\psi$, that is $H \psi = E \psi$ and for some $ C< \infty, \, t \in \N$, we have
\beq
\abs{\psi(\x)} \leq C \bigl( 1 + \norm{\x} \bigr)^t \quad \text{for every } \x \in \Ndspace.
\eeq
Since $\psi$ is non zero, there exists $\x_0 \in \Ndspace$ such that $\psi(\x_0) \neq 0$ .  We know that $E_{k}(\x_0)$ can only occur finitely many times. Thus there exists $k_1$ such that for every $k > k_1$, and for any $\x \in A_{k+1} (\x_0)$, either $ \mathbf{\Lambda}_{L_k}(\x_0) $ is $(m , \, E)$-regular or $\mathbf{\Lambda}_{L_k}(\x)$ is $(m , \, E)$-regular.

If $E \notin \sigma(H_{ \mathbf{\Lambda}_{L_k}(\x_0)    })$,  we have
\beq
\psi(\x_0) = \sum_{ (\bolda, \,\boldb) \in \partial \mathbf{\Lambda}_{L_k}(\x_0)  } G_{\mathbf{\Lambda}_{L_k}(\x_0)  } (E; \, \x_0 , \, \bolda) \psi(\boldb).
\eeq
Moreover, if  $ \mathbf{\Lambda}_{L_k}(\x_0)$ is $(m,\, E)$-regular, then 
\begin{enumerate}
\item $E \notin \sigma(H_{ \mathbf{\Lambda}_{L_k}(\x_0)    })$,  and
\item $\abs{ \psi(\x_0)  } \leq s_{Nd} L_{k}^{Nd-1} e^{-m \left ( \frac{L_k}{2} -1 \right )} \abs{ \psi(\x_1)  }$ for some $x_1 \in \partial_+  \mathbf{\Lambda}_{L_k}(\x_0)$, and hence
$\abs{ \psi(\x_0)  } \leq C  s_{Nd} L_{k}^{Nd-1} e^{-m \left ( \frac{L_k}{2} -1 \right )}
 \bigl( 1 + \norm{\x_0} + L_k \bigr)^t$ .

\end{enumerate}
Since we know $\psi(\x_0) \neq 0$, this implies there must exist $k_2$ such that for every $k > k_2$, $ \mathbf{\Lambda}_{L_k}(\x_0) $ is not $(m , \, E)$-regular. Taking $k_3 = \max \{ k_1, \, k_2  \}$, we can conclude that for every $k > k_3$, $\mathbf{\Lambda}_{L_k}(\x)$ is $(m , \, E)$-regular for every $\x \in A_{k+1} (\x_0)$. For what we are doing, we will be taking k such that 
\beq
L_k\gg \diam \x_0.
\eeq
Thus, if $\x \in {A}_{k}(\x_0)$ with $k > k_3$, we have $\boldlambda_{L_k}(\x)$ is $(m , \, E)$-regular and thus
\begin{align}
 \abs{\psi(\x)} &\le  C  s_{Nd} L_{k}^{Nd-1} e^{-m  \pa{\frac{L_k}{2} -1 }}   (1 + \norm{\x_0} +  \diam \x_0+ 2L_{k} )^t  \\   \notag
& \le e^{-\frac m 4{L_k}}\le e^{-\frac m 8 {d_{H}( \boldx, \, \boldx_0 )}} \le  e^{-\frac m 8 (\norm{\x - \x_0}-\diam \boldx_0)}=  e^{\frac m 8 \diam \boldx_0}   e^{-\frac m 8 \norm{\x - \x_0}},
\end{align}
provided $k$ is sufficiently large, so
$\psi$ decays exponentially.
\end{proof}

\subsection{Dynamical Localization} We will use the generalized eigenfunction expansion for  $H_{\bom} =H_{\bom}^{(N)}$ to prove dynamical localization (and SUDEC in Subsection~\ref{secSUDEC}).
We will follow the short review (and the notation) given in \cite[Section~5]{GKber}, and   refer to  \cite[Section 3]{KKS} for full details.
 We fix $\nu = \frac{Nd+1}{2} $, and for $\bolda \in \Ndspace$ let  $T_{\bolda}$ denote  the  operator on $\H = \ell^{2} (\Ndspace)$ given  by multiplication by the function $\scal{\boldx - \bolda}^{\nu}$, where $\scal{\boldx} = \sqrt{1 +\norm{\boldx}^2}$,  and set $T = T_{\mathbf{0}}$.  We consider
  weighted spaces $\H_-$ and $\H_+$, operators  $T_-$ and $T_+$, and spectral measure $\mu_{\bom}$ and  generalized eigenprojectors $\Pb_{\bom} (\lambda)$  in terms of which we have the generalized eigenfunction expansion for the (bounded operator)  $H_{\bom} =H_{\bom}^{(N)}$  given in \cite[Eq.~(5.23)]{GKber}.

For $\boldx \in \Ndspace$, we denote $\Chi_{\boldx}$ to be the orthogonal  projection onto $\delta_{\boldx}$ where the family $\bigl \{\delta_{\boldx} \,\, | \,\,\boldx \in \Ndspace \bigr \}$ is the standard orthonormal basis of $\H$. 

\begin{lemma}
There exists a constant $c = c(d, \, N) < \infty$ such that for $\P$ almost every $\bom$
\beq
\text{tr } \Bl( T^{-1} \, f \, (H_{\bom}) \, T^{-1} \Br) \leq c \norm{f}_{\infty} .
\eeq
\end{lemma}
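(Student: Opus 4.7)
The plan is to observe that this is essentially a deterministic statement (the randomness plays no role in the argument), which follows from writing out the trace in the standard basis $\{\delta_{\boldx}\}_{\boldx \in \Z^{Nd}}$ of $\H$ and exploiting that $T^{-1}$ acts diagonally with a summable weight.

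More precisely, since $\nu = \frac{Nd+1}{2}$, we have $2\nu = Nd+1 > Nd$, so
\[
c := c(d,N) = \sum_{\boldx \in \Z^{Nd}} \scal{\boldx}^{-2\nu} < \infty ,
\]
and consequently $T^{-1}$ is Hilbert--Schmidt with $\norm{T^{-1}}_2^2 = c$. It follows that for every bounded Borel function $f$ the operator $T^{-1} f(H_{\bom}) T^{-1}$ is trace class, and by cyclicity of the trace,
\[
\tr \Bl( T^{-1} f(H_{\bom}) T^{-1} \Br) = \tr \Bl( T^{-2} f(H_{\bom})\Br) = \sum_{\boldx \in \Z^{Nd}} \scal{\boldx}^{-2\nu} \scal{\delta_{\boldx}, f(H_{\bom}) \delta_{\boldx}} .
\]

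Since $H_{\bom}$ is self-adjoint, $\norm{f(H_{\bom})} \le \norm{f}_\infty$, and therefore $|\scal{\delta_{\boldx}, f(H_{\bom})\delta_{\boldx}}| \le \norm{f}_\infty$ for every $\boldx$. Substituting this pointwise bound into the diagonal sum gives
\[
\abs{\tr \Bl( T^{-1} f(H_{\bom}) T^{-1} \Br)} \le \norm{f}_\infty \sum_{\boldx \in \Z^{Nd}} \scal{\boldx}^{-2\nu} = c \, \norm{f}_\infty ,
\]
which holds for every $\bom$ (in particular for $\P$-a.e.\ $\bom$), with the constant $c = c(d,N)$ above.

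There is no real obstacle here; the only point to verify is the choice of $\nu$, which is precisely calibrated so that $2\nu > Nd$ makes $\sum \scal{\boldx}^{-2\nu}$ convergent on $\Z^{Nd}$. The randomness and the boundedness of $\norm{\rho}_\infty$ play no role; the estimate is uniform in $\bom$.
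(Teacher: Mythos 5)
Your proof is correct and follows essentially the same route as the paper's: both expand the trace in the standard basis $\{\delta_{\boldx}\}$, use that $T^{-1}$ acts diagonally with weight $\scal{\boldx}^{-\nu}$, bound the diagonal matrix elements of $f(H_{\bom})$ by $\norm{f}_\infty$, and sum the convergent series $\sum_{\boldx}\scal{\boldx}^{-2\nu}$. Your additional remarks on trace-class-ness via the Hilbert--Schmidt property of $T^{-1}$ and on cyclicity of the trace are a slightly more careful packaging of the same computation, not a different argument.
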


\begin{proof}
Given $\boldx \in \Ndspace$, we have
\begin{align} 
 \scal{\delta_{\boldx} \, , \, T^{-1}   \, f \, (H_{\bom}) \, T^{-1}\, \delta_{\boldx}}   
=\scal{\boldx }^{-2\nu} \, \scal{ \delta_{\boldx}  \, , \, 
f \, (H_{\bom}) \delta_{\boldx} } \leq \scal{\boldx }^{-2\nu}\, \norm{f}_{\infty}.
\end{align}
It follows that
\begin{align}
\text{tr } \Bl( T^{-1} \,  f \, (H_{\bom}) \, T^{-1} \Br)
 \leq \norm{f}_{\infty}  \displaystyle{ \sum_{\boldx \in \Ndspace}} 
\scal{\boldx }^{-2\nu} = c  \norm{f}_{\infty}. 
\end{align}
\end{proof}

For $\boldx \in \Ndspace$, consider $\norm{T_+\,\Chi_{\boldx}}$ and $\norm{\Chi_{\boldx}\,T_-}$ as operators from $\H$ to $\H$. Note that 
  \beq
 \norm{T_+\,\Chi_{\boldx}}=\norm{\Chi_{\boldx}\,T_-} =\scal{\boldx }^{\nu}.
 \eeq

\begin{lemma} \label{dynlem1}
For $\P$-almost every $\bom$, for every $\boldx,\,\,\boldy \in \Ndspace$, and for $\mu_{\bom}-$almost every $\lambda$, we have
\beq
\norm{\Chi_{\boldx} \, \Pb_{\bom} (\lambda)\,\Chi_{\boldy}}_{1} \leq \scal{\boldx }^{\nu} \scal{\boldy }^{\nu}.
\eeq
\end{lemma}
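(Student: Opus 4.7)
The plan is to exploit the rank-one structure of $\Chi_\boldx$ and $\Chi_\boldy$ together with the natural factorization of $\Pb_\bom(\lambda)$ through the weight operators $T_\pm$ that underlies the GEF expansion recalled just above. Since both $\Chi_\boldx$ and $\Chi_\boldy$ are rank-one orthogonal projections on $\H$, the product $\Chi_\boldx \Pb_\bom(\lambda) \Chi_\boldy$ has rank at most one, so its trace norm equals its operator norm; this is what makes the trace-ideal H\"older estimate below sharp enough to deliver the clean bound $\scal{\boldx}^{\nu}\scal{\boldy}^{\nu}$.

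Next I will invoke the defining property of $\Pb_\bom(\lambda)$ from the GEF construction in \cite{KKS} and \cite[Eq.~(5.23)]{GKber}: $\Pb_\bom(\lambda)$ is a bounded operator from $\H_+$ to $\H_-$, and $\hat\Pb_\bom(\lambda) := T_-^{-1}\Pb_\bom(\lambda) T_+^{-1}$ extends to a non-negative trace-class operator on $\H$ with $\tr \hat \Pb_\bom(\lambda) = 1$, so in particular $\norm{\hat \Pb_\bom(\lambda)}_1 = 1$, for $\P$-a.e.\ $\bom$ and $\mu_\bom$-a.e.\ $\lambda$. Since $\delta_\boldx$ and $\delta_\boldy$ are finitely supported they lie in $\H_+$, so the factorization
\[
\Chi_\boldx \Pb_\bom(\lambda) \Chi_\boldy = (\Chi_\boldx T_-)\,\hat \Pb_\bom(\lambda)\,(T_+ \Chi_\boldy)
\]
is a genuine identity of bounded (in fact rank-at-most-one) operators on $\H$. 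Applying the trace-ideal inequality $\norm{ABC}_1 \le \norm{A}\,\norm{B}_1\,\norm{C}$ together with $\norm{\Chi_\boldx T_-} = \scal{\boldx}^{\nu}$ and $\norm{T_+\Chi_\boldy} = \scal{\boldy}^{\nu}$ (the identities displayed just before the lemma) then yields
\[
\norm{\Chi_\boldx \Pb_\bom(\lambda) \Chi_\boldy}_1 \le \scal{\boldx}^{\nu}\,\norm{\hat \Pb_\bom(\lambda)}_1\,\scal{\boldy}^{\nu} = \scal{\boldx}^{\nu}\scal{\boldy}^{\nu},
\]
which is the desired bound.

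The main (mild) obstacle is invoking the normalization $\norm{\hat \Pb_\bom(\lambda)}_1 = 1$ of the generalized eigenprojectors, but this is precisely the output of the standard direct-integral/disintegration argument in \cite[Section~3]{KKS}, so I expect no substantive difficulty beyond identifying the $\Pb_\bom(\lambda)$ used here with that construction. One should also briefly verify that the factorization identity above holds as an equality of operators on all of $\H$ rather than merely on a dense subspace; this is immediate because $\delta_\boldx,\delta_\boldy \in \H_+$ while $\Pb_\bom(\lambda)\colon\H_+\to\H_-$ is defined pointwise in $\lambda$ up to a $\mu_\bom$-null set, and $T_\pm^{-1}T_\pm$ acts as the identity on the relevant weighted spaces.
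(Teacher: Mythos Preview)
Your proof is correct and follows essentially the same approach as the paper: factor $\Chi_{\boldx}\Pb_{\bom}(\lambda)\Chi_{\boldy}$ as $(\Chi_{\boldx}T_-)(T_-^{-1}\Pb_{\bom}(\lambda)T_+^{-1})(T_+\Chi_{\boldy})$, apply the trace-ideal H\"older inequality, and use $\norm{T_-^{-1}\Pb_{\bom}(\lambda)T_+^{-1}}_1=1$ from \cite[Eq.~(5.23)]{GKber} together with the displayed identities $\norm{\Chi_{\boldx}T_-}=\scal{\boldx}^{\nu}$, $\norm{T_+\Chi_{\boldy}}=\scal{\boldy}^{\nu}$. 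Your extra remarks about the rank-one structure and the well-definedness of the factorization are correct but not strictly needed.
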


\begin{proof}
We have $\norm{\Chi_{\boldx} \, \Pb_{\bom} (\lambda)\,\Chi_{\boldy}}_{1} = 
\norm{\Chi_{\boldx} T_-\,T_-^{-1}\, \Pb_{\bom} (\lambda) \, T_+^{-1}\,T_+ \, \Chi_{\boldy}}_{1}$.  Thus  
\beq
\norm{\Chi_{\boldx} \, \Pb_{\bom} (\lambda)\,\Chi_{\boldy}}_{1} 
\leq \norm{ \Chi_{\boldx}\,T_- } \, 
\norm{T_-^{-1}\, \Pb_{\bom} (\lambda) \, T_+^{-1} }_1 \, \norm{ T_+\,\Chi_{\boldx} }= \scal{\boldx }^{\nu} \scal{\boldy }^{\nu},
\eeq
since $\norm{T_-^{-1}\, \Pb_{\bom} (\lambda) \, T_+^{-1} }_1 = 1$ for  $\mu_{\bom}-$almost every $\lambda$  (see \cite[Eq.~(5.23)]{GKber}).
\end{proof}

\begin{lemma} \label{dynlem2}
Let $\boldx, \, \boldy \in \Ndspace$ with $d_{H}(\x, \, \y) > \ell$, and suppose 
\beq
\P \set{R \pa{m, \, I,\, \boldx, \, \boldy, \, \ell, \, N } }\leq e^{-\ell^{\zeta}}.
\eeq
Then for $\bom \notin  R \pa{m, \, I,\, \boldx, \, \boldy, \, \ell, \, N }$ we have
\beq
\norm{\Chi_{\boldx} \, \Pb_{\bom} (\lambda)\,\Chi_{\boldy}}_{1} \leq 
 s_{Nd}\,  \ell^{Nd+\nu -1 } \, e^{-m \frac{\ell}{2}} \,  
\scal{\boldx }^{\nu} \scal{\boldy }^{\nu}
\eeq
for $\mu_{\bom}-$almost every $\lambda \in I$.
\end{lemma}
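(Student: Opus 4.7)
The plan is to derive a single-scale geometric resolvent identity for the generalized eigenprojector $\Pb_\bom(\lambda)$ on a box where the Green's function exhibits exponential decay, and then close the estimate using Lemma~\ref{dynlem1} to control the off-box factor. Since $d_H(\boldx, \boldy) > \ell$, the boxes $\boldlambda := \boldlambda_{\ell}^{(N)}(\boldx)$ and $\boldlambda' := \boldlambda_{\ell}^{(N)}(\boldy)$ are partially separated by Lemma~\ref{wegnerboxes}. The hypothesis $\bom \notin R(m, I, \boldx, \boldy, \ell, N)$ means that for every $\lambda \in I$ at least one of $\boldlambda, \boldlambda'$ is $(m, \lambda)$-regular. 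Partition $I$ as $I_\boldx \cup I_\boldy$ according to which box is regular; by the symmetric roles of $\boldx$ and $\boldy$ it suffices to bound $\norm{\Chi_\boldx \Pb_\bom(\lambda) \Chi_\boldy}_1$ for $\lambda \in I_\boldx$, where $\boldlambda$ is $(m, \lambda)$-regular and $\lambda \notin \sigma(H_\boldlambda)$.

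Starting from $H_\bom \Pb_\bom(\lambda) = \lambda \Pb_\bom(\lambda)$ (interpreted via the generalized eigenfunction expansion of \cite[Section~5]{GKber}) and the decomposition $H_\bom = H_\boldlambda \oplus H_{\boldlambda^c} + \Gamma_\boldlambda$, where $\Gamma_\boldlambda$ is the hopping part of $-\Delta$ across $\partial \boldlambda$, one obtains $\Chi_\boldlambda \Pb_\bom(\lambda) = -G_\boldlambda(\lambda)\Gamma_\boldlambda \Pb_\bom(\lambda)$. Since $\Gamma_\boldlambda$ is supported on edges $(\boldu, \boldv) \in \partial \boldlambda$, this expands (up to signs from $\Delta$ that are harmless in absolute value) to
\begin{equation}
\Chi_\boldx \Pb_\bom(\lambda) \Chi_\boldy = \sum_{(\boldu, \boldv) \in \partial \boldlambda} G_\boldlambda(\lambda; \boldx, \boldu)\, \Chi_\boldv \Pb_\bom(\lambda) \Chi_\boldy.
\end{equation}
Because $\Chi_\boldx \Pb_\bom(\lambda) \Chi_\boldy$ and each $\Chi_\boldv \Pb_\bom(\lambda) \Chi_\boldy$ have rank one, the triangle inequality in trace norm gives
\begin{equation}
\norm{\Chi_\boldx \Pb_\bom(\lambda) \Chi_\boldy}_1 \le \sum_{(\boldu, \boldv) \in \partial \boldlambda} |G_\boldlambda(\lambda; \boldx, \boldu)|\, \norm{\Chi_\boldv \Pb_\bom(\lambda) \Chi_\boldy}_1.
\end{equation}

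For each $\boldu \in \partial \boldlambda$, $\norm{\boldx - \boldu} \ge \tfrac{\ell}{2} - 1 \ge \tfrac{\ell}{100}$ for $\ell$ large, so $(m,\lambda)$-regularity of $\boldlambda$ yields $|G_\boldlambda(\lambda; \boldx, \boldu)| \le e^{-m\ell/2}$ (with constants absorbed into $s_{Nd}$). Lemma~\ref{dynlem1} gives $\norm{\Chi_\boldv \Pb_\bom(\lambda) \Chi_\boldy}_1 \le \scal{\boldv}^\nu \scal{\boldy}^\nu$; since $\norm{\boldv - \boldx} \le \tfrac{\ell}{2} + 1$, the elementary bound $\scal{\boldv} \le \sqrt{2}\scal{\boldv - \boldx}\scal{\boldx}$ yields $\scal{\boldv}^\nu \le C\ell^\nu \scal{\boldx}^\nu$. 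Combined with $|\partial \boldlambda| \le s_{Nd}\ell^{Nd-1}$ this assembles the desired bound; the same argument with $\boldx, \boldy$ interchanged treats $\lambda \in I_\boldy$.

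The principal obstacle is the rigorous interpretation of the displayed resolvent identity: $\Pb_\bom(\lambda)$ is an operator from $\H_+$ to $\H_-$, not a bounded operator on $\H$, so substituting it into expressions involving $H_\boldlambda$ and $G_\boldlambda(\lambda)$ must be justified inside the GEE framework. This is precisely the manipulation carried out in \cite[Section~3]{KKS}, and it transfers here without essential change because $H_\bom^{(N)}$ is bounded and its only off-diagonal terms come from $\Delta$; the interaction $U$ and random potential are diagonal and are absorbed into $H_\boldlambda$.
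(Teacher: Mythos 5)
Your proposal is correct and follows essentially the same route as the paper: both arguments use the symmetry of the trace norm to reduce to the case where $\boldlambda_\ell^{(N)}(\boldx)$ is $(m,\lambda)$-regular, expand $\Pb_{\bom}(\lambda)\Chi_{\boldy}\phi$ (a generalized eigenfunction for $\mu_{\bom}$-a.e.\ $\lambda$) over $\partial\boldlambda_\ell^{(N)}(\boldx)$ via the geometric resolvent identity, bound the Green's function factor by $e^{-m\ell/2}$ using regularity, and control the remaining factor $\norm{\Chi_{\boldv}\Pb_{\bom}(\lambda)\Chi_{\boldy}}_1$ by Lemma~\ref{dynlem1} together with $\scal{y_1+y_2}\le\sqrt 2\,\scal{y_1}\scal{y_2}$ to convert $\scal{\boldv}^\nu$ into $\ell^\nu\scal{\boldx}^\nu$. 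The only difference is presentational (operator identity in trace norm versus pointwise evaluation of the generalized eigenfunction), and your remark on justifying the identity within the generalized eigenfunction expansion matches how the paper invokes \cite{KKS}.
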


\begin{proof} Let  $\bom \notin  R \pa{m, \, I,\, \boldx, \, \boldy, \, \ell, \, N }$. Then 
for every $\lambda \in I$, either $\Nboxlx$ or  $\Nboxly$
is $ (m, \lambda) \text{-regular }$. Moreover, we have that $ \norm{\Chi_{\boldx} \, \Pb_{\bom} (\lambda)\,\Chi_{\boldy}}_{1}
= \norm{\Chi_{\boldy} \, \Pb_{\bom} (\lambda)\,\Chi_{\boldx}}_{1}$, so without loss of generality, we may assume $\Lambda_{\ell}(\boldx) \text{ is } (m, \lambda) \text{- regular}$. 

 For $\mu_{\bom}$-almost every $\lambda \in I,$   $\psi= \Pb_{\bom} (\lambda)\,\Chi_{\boldy} \, \phi$, with $\phi \in \H$, is a generalized eigenfunction of 
$H_{\bom}$ corresponding to the generalized eigenvalue $\lambda$.  Then 
\begin{align} 
\psi(\boldx) &= 
\displaystyle{ \sum_{(\bolda, \, \boldb) \in \partial \Nboxlx}}
G_{\Nboxlx} (\lambda; \, \boldx, \,\bolda) \, \psi(\boldb) .\notag
\end{align}
Thus it follows from the  regularity of $\Lambda_{\ell}(\boldx)$ that 
\begin{align} 
&\norm{\Chi_{\boldx} \, \Pb_{\bom} (\lambda)\,\Chi_{\boldy}}_{1}  
\leq \displaystyle{ \sum_{(\bolda, \, \boldb) \in \partial \Lambda_{\ell}(\boldx)}}
\abs{ G_{\Nboxlx} (\lambda; \, \boldx, \,\bolda)}  \,  \norm{\Chi_{\boldb} \, \Pb_{\bom} (\lambda)\,\Chi_{\boldy}}_{1} \\ \notag
& \qquad  \leq \abs{\partial \Nboxlx} \max_{(\bolda, \, \boldb) \in \partial \Lambda_{\ell}(\boldx)} \abs{ G_{\Nboxlx} (\lambda; \, \boldx, \,\bolda)}  \,  \norm{\Chi_{\boldb} \, \Pb_{\bom} (\lambda)\,\Chi_{\boldy}}_{1} \\ \notag
&\qquad  \leq s_{Nd}\,  \ell^{Nd-1 } \, e^{-m \frac{\ell}{2}} \,  
\scal{ \norm{\boldx } +  \tfrac \ell 2 +1 }^{\nu}  \scal{\boldy }^{\nu}\\ \notag
&\qquad \le  s_{Nd}\,  \ell^{Nd-1 } \, e^{-m \frac{\ell}{2}} \, 2^{\frac \nu 2} \scal{ \tfrac \ell 2+1} ^{\nu}\,
\scal{\boldx }^{\nu} \scal{\boldy }^{\nu} \\ \notag
&\qquad \le  s_{Nd}\,  \ell^{Nd+\nu -1 } \, e^{-m \frac{\ell}{2}} \, 
\scal{\boldx }^{\nu} \scal{\boldy }^{\nu},
\end{align}
where we used  
\beq\label{scalsum}
\langle y_{1} +y_{2} \rangle \le \sqrt{2}\langle y_{1} \rangle
\langle y_{2}\rangle \qtx{for}  y_1,y_2 \in \R^k, \qtx{any} k\in \N.
\eeq
\end{proof}

Corollary \ref{PPS}(ii) is an immediate consequence of the following theorem.
\begin{theorem} 
(Decay of the Kernel) 

 Let $I$ be an open interval where the conclusions of Theorem \ref{maintheorem} holds. Then for every $0<\zeta_4 < 1$ and $\boldy \in \Ndspace$ there exists a constant $ C(\boldy) $ such that 
\beq
\E \Bl ( \sup_{\norm{g} \leq 1}  \norm{ \Chi_{\boldx} \, (g \, \Chi_{I})(H_{\bom})   \,\Chi_{\boldy} }_1   \Br) \leq C(\boldy)\,e^{-d_{H}( \boldx , \, \boldy )^{\zeta_4}} \qtx{for all} \boldx \in \Ndspace,
\eeq
where the supremum is taken over all bounded Borel functions $g$ on $\R$, and $\norm{g} = \sup_{t \in \R} \abs{g(t)}$. 
\end{theorem}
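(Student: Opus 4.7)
The plan is to combine the generalized eigenfunction expansion (GEE) from \cite[Section~5]{GKber} with the MSA estimate of Theorem~\ref{maintheorem}(ii), following the dynamical localization strategy of \cite{GK1,GKber}. By the GEE one has the pointwise bound
\beq
\sup_{\norm{g}\le 1}\norm{\Chi_{\boldx}(g\Chi_I)(H_{\bom})\Chi_{\boldy}}_1 \le \int_I \norm{\Chi_{\boldx}\Pb_{\bom}(\lambda)\Chi_{\boldy}}_1\, d\mu_{\bom}(\lambda),
\eeq
and the total mass $\mu_{\bom}(\R)\le c_0$ is uniformly bounded in $\bom$ (since $\int \tr(T_{-}^{-1}\Pb_\bom(\lambda)T_{+}^{-1})\,d\mu_\bom = \tr(T_{-}^{-1}T_{+}^{-1}) < \infty$), so only the integrand needs to be controlled.

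Fix $\zeta\in(\zeta_4,1)$ so that Theorem~\ref{maintheorem} applies, with associated mass $m_\zeta$ and scale $L_\zeta$. For $\rho := d_H(\boldx,\boldy)\ge 2L_\zeta$, set $\ell=\lfloor\rho/2\rfloor$, so that $\ell \ge L_\zeta$ and $\rho>\ell$. Letting $\cR := R(m_\zeta, I, \boldx, \boldy, \ell, N)$, Theorem~\ref{maintheorem}(ii) yields $\P(\cR)\le e^{-\ell^{\zeta}}$. On $\cR^c$, Lemma~\ref{dynlem2} provides
\beq
\norm{\Chi_{\boldx}\Pb_{\bom}(\lambda)\Chi_{\boldy}}_1 \le s_{Nd}\,\ell^{Nd+\nu-1}\, e^{-m_\zeta \ell/2}\,\scal{\boldx}^\nu\scal{\boldy}^\nu\qtx{for $\mu_\bom$-a.e.\ $\lambda\in I$,}
\eeq
while on $\cR$ Lemma~\ref{dynlem1} gives the trivial bound $\norm{\Chi_{\boldx}\Pb_{\bom}(\lambda)\Chi_{\boldy}}_1\le \scal{\boldx}^\nu\scal{\boldy}^\nu$. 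Splitting the expectation according to $\cR$ and using $\mu_\bom(I)\le c_0$,
\beq
\E\Bl(\sup_{\norm{g}\le 1}\norm{\Chi_{\boldx}(g\Chi_I)(H_{\bom})\Chi_{\boldy}}_1\Br) \le c_0\,\scal{\boldx}^\nu\scal{\boldy}^\nu\,\Bl(s_{Nd}\,\ell^{Nd+\nu-1}\,e^{-m_\zeta\ell/2} + e^{-\ell^{\zeta}}\Br).
\eeq

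The main obstacle is absorbing the $\scal{\boldx}^\nu$ factor into a $\boldy$-dependent constant times $e^{-\rho^{\zeta_4}}$, since the desired bound depends polynomially on $\boldy$ only. For each $i$, pick $j_i$ minimizing $\norm{x_i - y_{j_i}}$; then $\norm{x_i - y_{j_i}}\le\rho$ and $\norm{x_i - y_i}\le\norm{x_i - y_{j_i}}+\norm{y_{j_i}-y_i}\le\rho+\diam\boldy$, so $\norm{\boldx - \boldy}\le\rho+\diam\boldy$. Iterating \eq{scalsum} twice yields
\beq
\scal{\boldx}^\nu\le \bigl(\sqrt{2}\,\scal{\boldy}\,\scal{\boldx-\boldy}\bigr)^\nu \le 2^\nu\,\scal{\boldy}^\nu\,\scal{\diam\boldy}^\nu\,\scal{\rho}^\nu.
\eeq
With $\ell=\lfloor\rho/2\rfloor$ the bound becomes
\beq
C_0(\boldy)\,\scal{\rho}^\nu\,\Bl(s_{Nd}\,\rho^{Nd+\nu-1}\,e^{-m_\zeta\rho/4} + e^{-(\rho/2)^{\zeta}}\Br),
\eeq
where $C_0(\boldy)=c_0\,2^\nu\,\scal{\boldy}^{2\nu}\scal{\diam\boldy}^\nu$. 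Since $\zeta_4<\zeta<1$ and $e^{-m_\zeta\rho/4}$ dominates any sub-exponential, this is $\le C(\boldy)\,e^{-\rho^{\zeta_4}}$ for all $\rho$ sufficiently large; for the finitely many $\boldx$ with $\rho<2L_\zeta$ we enlarge $C(\boldy)$ to absorb them, completing the proof.
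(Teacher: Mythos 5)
Your proposal is correct and follows essentially the same route as the paper: the generalized eigenfunction expansion bound, a split on the bad event $R(m_\zeta,I,\boldx,\boldy,\ell,N)$ controlled by Theorem~\ref{maintheorem}(ii), Lemmas~\ref{dynlem1} and~\ref{dynlem2} on the two pieces, and absorption of $\scal{\boldx}^{\nu}$ via $\norm{\boldx-\boldy}\le d_H(\boldx,\boldy)+\diam\boldy$ and \eq{scalsum}. The only (harmless) deviations are that you choose the scale $\ell\approx d_H(\boldx,\boldy)/2$ directly instead of locating $d_H(\boldx,\boldy)$ in the dyadic sequence $L_k$, and you use the deterministic bound $\mu_{\bom}(I)\le c$ (which the section's first lemma indeed supplies) in place of the paper's Cauchy--Schwarz step with $\E\bigl((\mu_{\bom}(I))^2\bigr)^{1/2}$.
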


\begin{proof}
Let us fix $\boldy \in \Ndspace.$   We will apply our main result using $\zeta_2 \in (\zeta\, , \, 1)$. 

For $\boldx \in \Ndspace$, let us denote $$ F_{\boldx} (\bom) = \sup_{\norm{g} \leq 1}  \norm{ \Chi_{\boldx} \, (g \, \Chi_{I})(H_{\bom})   \,\Chi_{\boldy} }_1. $$
Thus our goal is to show that $\E \Bl ( F_{\boldx} (\bom)   \Br) \leq  C\,e^{-d_{H}( \boldx , \, \boldy )^{\zeta_4}}$ for all $x \in \Z^d$  for some  constant $C=C(\boldy)$.

As in  \cite{GK1}, we have
\begin{align}
\norm{ \Chi_{\boldx} \, (g \, \Chi_{I})(H_{\bom})   \,\Chi_{\boldy} }_1 &\leq 
\int_I \norm{ \Chi_{\boldx} \, g(\lambda) \, \Pb_{\bom} (\lambda)  \,\Chi_{\boldy} }_1 d\mu_{\bom} (\lambda)  \\
&=  \int_I \abs{g(\lambda)} \norm{ \Chi_{\boldx} \, \Pb_{\bom} (\lambda)  \,\Chi_{\boldy} }_1 d\mu_{\bom} (\lambda),\notag
\end{align}
 and thus
\beq
F_{\boldx} (\bom) \leq \int_I  \norm{ \Chi_{\boldx} \, \Pb_{\bom} (\lambda)  \,\Chi_{\boldy} }_1 d\mu_{\bom} (\lambda).
\eeq

We  will divide the proof into the case where $d_{H}(\boldx, \, \boldy) >L_k$ for some  $k $ large enough ($k\ge \cK_0$), and the case where $d_{H}(\boldx, \, \boldy) \leq L_{\cK_0}$.

Case 1: If $d_{H}(\boldx, \, \boldy) >L_k$ for some $k\ge \cK_0$,  let us take the largest $k$ such that $d_{H}(\boldx, \, \boldy) >L_k$ but $d_{H}(\boldx, \, \boldy) \leq L_{k+1}$.
Let us denote the set 
\beq \notag
\cA = \Bl\{\exists \,E \in I \sqtx{such that}\mathbf{\Lambda}_{L_k}^{(N)}(\boldx) \text { and } \mathbf{\Lambda}_{L_k}^{(N)}(\boldy) \text{ are } (m, E) \text{-nonregular} \Br\}.
\eeq
Then $\E \Bl ( F_{\boldx} (\bom)   \Br) = \E \Bl ( F_{\boldx} (\bom) \, ; \, \bom \in \cA  \Br) +
\E \Bl ( F_{\boldx} (\bom) \,;\, \bom \notin \cA  \Br)  $.

To estimate $\E \Bl ( F_{\boldx} (\bom) \, ; \, \bom \in \cA  \Br)$, we apply Lemma \ref{dynlem1} to get
\begin{align}
\E \Bl ( F_{\boldx} (\bom) \, ; \, \bom \in \cA  \Br) & \leq  \E \Bl (\scal{\boldx }^{\nu} \scal{\boldy }^{\nu} \mu_{\bom}(I)  ; \, \bom \in \cA  \Br) = \scal{\boldx }^{\nu} \scal{\boldy }^{\nu} \E \Bl( \mu_{\bom}(I) \,\, \Chi_{\cA} (\bom)   \Br).
\end{align}
But we know that $\P(\cA) \leq e^{-L_{k}^{\zeta_2}}$, and
\begin{align}
\E \Bl( \mu_{\bom}(I) \,\, \Chi_{\cA} (\bom)   \Br) & \leq \E \Bl( (\mu_{\bom}(I))^{2} \Br)^{\frac{1}{2}} \E \Bl( ( \Chi_{\cA} (\bom)  )^{2} \Br)^{\frac{1}{2}}  =  \E \Bl( (\mu_{\bom}(I))^{2} \Br)^{\frac{1}{2}} \P(\cA)^{\frac{1}{2}} ,
\end{align}
so
\begin{align}
\E \Bl ( F_{\boldx} (\bom) \, ; \, \bom \in \cA  \Br) & \leq 
\scal{\boldx }^{\nu} \scal{\boldy }^{\nu}\E \Bl( (\mu_{\bom}(I))^{2} \Br)^{\frac{1}{2}} e^{- \tfrac{1}{2} L_{k}^{\zeta_2}   } \notag \\
& = C_{1}  \, \E \Bl( (\mu_{\bom}(I))^{2} \Br)^{\frac{1}{2}} e^{- \tfrac{1}{2} L_{k}^{\zeta_2}   },
\end{align}
where $C_1 =\scal{\boldx }^{\nu} \scal{\boldy }^{\nu}  = C_{1}(\x, \, \y, \nu).$

To estimate $\E \Bl ( F_{\boldx} (\bom) \, ; \, \bom \notin \cA  \Br)$, we apply Lemma \ref{dynlem2} to get
\begin{align}
\E \Bl ( F_{\boldx} (\bom) \, ; \, \bom \notin \cA  \Br) & \leq  \E \Bl ( C_1 s_{Nd}\,L_{k}^{Nd-1 + \nu} \, e^{-m \frac{L_k}{2}} \mu_{\bom}(I)   ; \, \bom \notin \cA  \Br) \notag\\
& \leq  C_1 s_{Nd}\,L_{k}^{Nd-1 + \nu} \, e^{-m \frac{L_k}{2}}  \E ( \mu_{\bom}(I) ).
\end{align}
Hence
\beq \notag
\E \Bl ( F_{\boldx} (\bom) \Br) \leq    C_{1}  \, \E \Bl( (\mu_{\bom}(I))^{2} \Br)^{\frac{1}{2}} e^{- \tfrac{1}{2} L_{k}^{\zeta_2}   }       +  C_1 s_{Nd}\,L_{k}^{Nd-1 + \nu} \, e^{-m \frac{L_k}{2}}  \E ( \mu_{\bom}(I) ).
\eeq
Since $k \geq \cK_0$, i.e. $L_k$ is large enough, we can conclude
\beq
\E \Bl ( F_{\boldx} (\bom) \Br) \leq 2 C_{1}  \, \E \Bl( (\mu_{\bom}(I))^{2} \Br)^{\frac{1}{2}} e^{- \tfrac{1}{2} L_{k}^{\zeta_2}   }.
\eeq
But $d_{H}(\boldx, \, \boldy) \leq L_{k+1}$, and $\norm{\x} \leq \norm{\x - \y} + \norm{\y} \leq L_{k+1} + \diam{\y} + \norm{\y}$, so
\beq
\scal{x}^{\nu} \leq 2^{\frac \nu 2}\scal{ L_{k+1}}^{\nu}\scal{\diam{\y} + \norm{\y} }^{\nu},
\eeq
which means 
\beq
\E \Bl ( F_{\boldx} (\bom) \Br) \leq C_{2} e^{- \tfrac{1}{2} L_{k}^{\zeta_2}   },
\eeq
where $C_2 = 2^{1+\frac \nu 2}\scal{ L_{k+1}}^{\nu}\scal{\diam{\y} + \norm{\y} }^{\nu} \scal{\y} ^{\nu} \E \Bl( (\mu_{\bom}(I))^{2} \Br)^{\frac{1}{2}} $.
If we take $L_{k}$ to be sufficiently large (which is the same as saying $\cK_0$ is sufficiently large), \begin{align}
 \E \Bl ( F_{\boldx} (\bom) \Br) & \leq\scal{\diam{\y} + \norm{\y} }^{2\,\nu} e^{- \tfrac{1}{4} L_{k}^{\zeta_2}   } =\scal{\diam{\y} + \norm{\y} }^{2\,\nu} e^{- \tfrac{1}{8} L_{k+1}^{\zeta_2}   }  \notag \\
& \leq\scal{\diam{\y} + \norm{\y} }^{2\,\nu}  e^{- L_{k+1}^{\zeta}   }  \leq\scal{\diam{\y} + \norm{\y} }^{2\,\nu} e^{- d_{H}(\x, \, \y)^{\zeta}   }. 
\end{align}

Case 2: If $d_{H}(\boldx, \, \boldy) \leq L_{\cK_0}$, we have  $d_{H}(\x, \, \y) \leq L_{\cK_0}.$ Once again we  apply Lemma \ref{dynlem1} to get
\begin{align}
\E \Bl ( F_{\boldx} (\bom)  \Br)  &\leq  \E \Bl ( \scal{\boldx }^{\nu} \scal{\boldy }^{\nu}\mu_{\bom}(I)    \Br) = \scal{\boldx }^{\nu} \scal{\boldy }^{\nu}\E \Bl( \mu_{\bom}(I)    \Br) \\
& \leq \scal{ \text{d}_{H}(\x, \, \y) + \diam{\y} + \norm{\y}} ^{\nu}  \scal{\boldy }^{\nu}\E \Bl( \mu_{\bom}(I)    \Br) \notag \\
&  \leq  \scal{  L_{\cK_0}+ \diam{\y} + \norm{\y}} ^{\nu}  \scal{\boldy }^{\nu}\E \Bl( \mu_{\bom}(I)    \Br) \leq C_2 \, e^{L_{\cK_0}^{\zeta}} e^{- d_{H}(\x, \, \y) ^{\zeta}} , \notag
\end{align}
where $C_2 =  \scal{  L_{\cK_0}+ \diam{\y} + \norm{\y}} ^{\nu} \scal{\boldy }^{\nu} \E \Bl( \mu_{\bom}(I)    \Br)$.

Thus, we get
\beq
\E \Bl ( F_{\boldx} (\bom)  \Br) \leq
\begin{cases}
C_2 \, e^{L_{\cK_0}^{\zeta}} e^{- d_{H}(\x, \, \y) ^{\zeta}}, & \text{provided }d_{H}(\boldx, \, \boldy) \leq L_{\cK_0} \\
(1+ \diam{\y} + \norm{\y} )^{2\,\nu} e^{- \text{d}_{H}(\x, \, \y)^{\zeta}   }, & \text{provided } d_{H}(\boldx, \, \boldy) > L_{\cK_0} 
\end{cases}.
\eeq
To get our desired result, we can just take 
\beq
C=C(y) = \Bl( 1+  L_{\cK_0} + \diam{\y} + \norm{\y} \Br) ^{2\nu} \Bl(   \E \bigl( \mu_{\bom}(I) \bigr)  + 1 \Br) .
\eeq
\end{proof}

\subsection{SUDEC}\label{secSUDEC}

To prove Corollary \ref{PPS}(iii)  we  follow \cite{GKjsp}.  Note  that for all $\bolda, \, \boldb \in \Ndspace$ we have $\norm{\Chi_{\boldb} T_{\bolda}} \leq \scal{\bolda - \boldb}^{\nu}$,
and
 it follows from \eq{scalsum} that
\beq \label{part7eq0}
\norm{T_{\boldb}^{-1} T_{\bolda} } \leq 2 ^{\tfrac{\nu}{2}} \scal{\boldb - \bolda}^{\nu}.
\eeq
We  write
$E_{A}(H_{\bom}):=\Chi_{A}(H_{\bom})$ for a Borel measurable set $A\subset \R$, and let  $E_{\lambda}(H_{\bom}):=E_{\{ \lambda\}}(H_{\bom})=\Chi_{\{ \lambda\}}(H_{\bom}) .$

\begin{definition}
Given $\bom$, $\lambda \in \R$, and $\bolda \in \Ndspace$, define
\beq
\mathbf{W}_{\bolda, \, \bom}(\lambda) :=
\begin{cases}
\sup_{\phi \in \cS_{\bom, \, \lambda}} \frac{\norm{ \Chi_{\bolda} {\Pb}_{\bom}(\lambda) \phi   }}{\norm{T_{\bolda} ^{-1} {\Pb}_{\bom}(\lambda) \phi   }}, &if \,\, {\Pb}_{\bom}(\lambda) \neq 0, \\
0, &otherwise,
\end{cases}
\eeq
where $\cS_{\bom, \, \lambda} = \Bl\{ \phi \in \H_{+} \,\,: \,\,  \Pb_{\bom}(\lambda) \phi \neq 0    \Br\}$. We also define
\beq
W_{\bolda, \, \bom}(\lambda) :=
\begin{cases}
\sup_{\phi \in \cT_{\bom, \, \lambda}} \frac{\norm{ \Chi_{\bolda} E_{\bom}(\lambda) \phi   }}{\norm{T_{\bolda} ^{-1} E_{\bom}(\lambda) \phi   }}, &if \,\, E_{\bom}(\lambda) \neq 0, \\
0, &otherwise,
\end{cases}
\eeq
where $\cT_{\bom, \, \lambda} = \Bl\{ \phi \in \H \,\,: \,\,  E_{\bom}(\lambda) \phi \neq 0    \Br\}$, and
\beq
Z_{\bolda, \, \bom}(\lambda) :=
\begin{cases}
 \frac{  \norm{ \Chi_{\bolda} E_{\bom}(\lambda)    } _{2}}{\norm{T_{\bolda} ^{-1} E_{\bom}(\lambda)   }_{2}}, &if \,\, E_{\bom}(\lambda) \neq 0, \\
0, &otherwise.
\end{cases}
\eeq

\end{definition}

Note that $
Z_{\bolda, \, \bom}(\lambda) \leq W_{\bolda, \, \bom}(\lambda) \leq \mathbf{W}_{\bolda, \, \bom}(\lambda)  \leq 1  $ (see \cite{GKjsp}).

\begin{remark} \label{part7rem1}
Let  $ \phi \in \H_{+} $, then $\Chi_{\bolda} \Pb_{\bom}(\lambda) \phi = \Chi_{\bolda} T_{\bolda} T_{\bolda}^{-1} \Pb_{\bom}(\lambda) \phi $. Then
\begin{align}
\norm{\Chi_{\bolda} \Pb_{\bom}(\lambda) \phi} &= \norm{\Chi_{\bolda} T_{\bolda} T_{\bolda}^{-1} \Pb_{\bom}(\lambda) \phi} \leq \norm{\Chi_{\bolda} T_{\bolda} } \norm{ T_{\bolda}^{-1} \Pb_{\bom}(\lambda) \phi} \notag\\
& \leq \norm{ T_{\bolda}^{-1} \Pb_{\bom}(\lambda) \phi}.
\end{align}
Thus $\mathbf{W}_{\bolda, \, \bom}(\lambda) \leq 1$ for every $\bolda \in \Ndspace$, every $\bom$, and  $\mu_{\bom}-$almost every $\lambda \in \R.$
Moreover, 
\begin{align}
\norm{ T_{\bolda}^{-1} \Pb_{\bom}(\lambda) \phi} &\leq \norm{T_{\bolda}^{-1} T } \norm{T^{-1} \Pb_{\bom}(\lambda) \phi} \leq 2^{\tfrac{\nu}{2}} \scal{\bolda}^{\nu} \norm{  \Pb_{\bom}(\lambda) \phi}_{-} \notag\\
& \leq 2^{\tfrac{\nu}{2}} \scal{\bolda}^{\nu} 
\norm{  \phi}_{+}.
\end{align}
\end{remark}

\begin{remark}
Given $\boldx, \, \boldy \in \Ndspace$, by \cite{GKjsp}, 
$
\mathbf{W}_{\boldx, \, \bom}(\lambda) \,\,\mathbf{W}_{\boldy, \, \bom}(\lambda)
$
is measurable (in $\lambda$) with respect to the measure $\mu_{\bom}$ for $\P$-a.e.\  $\bom$. Moreover, we have measurability of $\norm{ \mathbf{W}_{\boldx, \, \bom}(\lambda) \,\,\mathbf{W}_{\boldy, \, \bom}(\lambda) } _{L^{\infty}(I, d\mu_{\bom}(\lambda))} $ with respect to $\bom$.            From Remark \ref{part7rem1}, we also have $\norm{ \mathbf{W}_{\boldx, \, \bom}(\lambda) \,\,\mathbf{W}_{\boldy, \, \bom}(\lambda) } _{L^{\infty}(I, \,d\mu_{\bom}(\lambda))} \leq 1$ for $\P-a.e. \,\,\bom.$ 
\end{remark}

\begin{lemma} \label{part7lem1}
Let $\boldx, \, \boldy \in \Ndspace$ and $\bom \in R(m, \, L, \, I, \, \boldx, \, \boldy)$, where
\begin{align} \label{usefulno}
R&(m, \, L, \, I, \, \boldx, \, \boldy) = \notag \\
& \Bl\{ \forall \, E \in I, \, either \, \NboxLx \,\,or\,\, \NboxLy\, \,is\,\, (m,\,E)-regular  \Br\}.
\end{align}
Then there exists a constant $C>0$ such that
\beq
\norm{ \mathbf{W}_{\boldx, \, \bom}(\lambda) \,\,\mathbf{W}_{\boldy, \, \bom}(\lambda) } _{L^{\infty}(I, d\mu_{\bom}(\lambda))}  \leq C e^{-m\tfrac{L}{4}}
\eeq
\end{lemma}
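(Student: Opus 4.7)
The plan is to exploit the defining property of $R(m,L,I,\boldx,\boldy)$ so that for $\mu_{\bom}$-almost every $\lambda\in I$ at least one of the boxes $\NboxLx$, $\NboxLy$ is $(m,\lambda)$-regular, and then use the Poisson boundary representation of generalized eigenfunctions to push this regularity into an estimate on $\mathbf{W}_{\boldx,\bom}(\lambda)$ (respectively $\mathbf{W}_{\boldy,\bom}(\lambda)$). Since the two factors are each bounded by $1$ by Remark~\ref{part7rem1}, it is enough to control the factor corresponding to the regular box.

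Fix $\bom\in R(m,L,I,\boldx,\boldy)$ and $\lambda\in I$ in the set of full $\mu_{\bom}$-measure where $\Pb_{\bom}(\lambda)\phi$ is a polynomially bounded generalized eigenfunction of $H_{\bom}$ for every $\phi\in\H_+$. By definition of $R$, without loss of generality $\NboxLx$ is $(m,\lambda)$-regular; in particular $\lambda\notin\sigma(H_{\NboxLx})$. For $\phi\in\cS_{\bom,\lambda}$ set $\psi:=\Pb_{\bom}(\lambda)\phi$. The Poisson identity then yields
\beq
\psi(\boldx)=\sum_{(\bolda,\boldb)\in\partial\NboxLx} G_{\NboxLx}(\lambda;\boldx,\bolda)\,\psi(\boldb),
\eeq
and $(m,\lambda)$-regularity together with $\abs{\partial\NboxLx}\le s_{Nd}L^{Nd-1}$ gives
\beq
\abs{\psi(\boldx)}\le s_{Nd}L^{Nd-1}e^{-m(L/2-1)}\max_{\boldb\in\partial_+\NboxLx}\abs{\psi(\boldb)}.
\eeq

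Next I would transfer the boundary values to a norm of $T_{\boldx}^{-1}\psi$. Since $\Chi_{\boldb}T_{\boldx}=\scal{\boldx-\boldb}^{\nu}\Chi_{\boldb}$, we have
\beq
\abs{\psi(\boldb)}=\norm{\Chi_{\boldb}\psi}\le\norm{\Chi_{\boldb}T_{\boldx}}\,\norm{T_{\boldx}^{-1}\psi}=\scal{\boldx-\boldb}^{\nu}\norm{T_{\boldx}^{-1}\psi},
\eeq
and for $\boldb\in\partial_+\NboxLx$ we have $\norm{\boldx-\boldb}\le L/2+1$, so $\scal{\boldx-\boldb}^{\nu}\le\scal{L/2+1}^{\nu}$. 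Combining,
\beq
\norm{\Chi_{\boldx}\psi}=\abs{\psi(\boldx)}\le s_{Nd}L^{Nd-1}\scal{L/2+1}^{\nu}e^{-m(L/2-1)}\,\norm{T_{\boldx}^{-1}\psi}.
\eeq
Taking the supremum over $\phi\in\cS_{\bom,\lambda}$ in the definition of $\mathbf{W}_{\boldx,\bom}(\lambda)$ shows that for $L$ large enough the polynomial prefactor is absorbed into an extra $e^{-mL/4}$ factor, so $\mathbf{W}_{\boldx,\bom}(\lambda)\le C\,e^{-mL/4}$ for some $C=C(d,N,m,\nu)$.

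Since the analogous bound with $\boldy$ in place of $\boldx$ holds whenever it is $\NboxLy$ that is regular, and since in either scenario the remaining factor is bounded by $1$ via Remark~\ref{part7rem1}, we conclude
\beq
\mathbf{W}_{\boldx,\bom}(\lambda)\,\mathbf{W}_{\boldy,\bom}(\lambda)\le C\,e^{-mL/4}
\eeq
for $\mu_{\bom}$-a.e.\ $\lambda\in I$, which is exactly the claimed $L^{\infty}(I,d\mu_{\bom})$ bound. The main subtlety to watch is the passage from a pointwise estimate at $\boldx$ to the supremum in the definition of $\mathbf{W}_{\boldx,\bom}(\lambda)$: one must make sure the Poisson identity is applied to the vector $\psi=\Pb_{\bom}(\lambda)\phi$ (which is polynomially bounded by the generalized eigenfunction expansion), rather than to an eigenvector, and that the bound on $\abs{\psi(\boldb)}$ genuinely involves $\norm{T_{\boldx}^{-1}\psi}$ and not $\norm{T^{-1}\psi}$, so the resulting constant does not depend on $\boldx$.
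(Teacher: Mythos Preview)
Your proposal is correct and follows essentially the same route as the paper: use the Poisson formula for the generalized eigenfunction $\psi=\Pb_{\bom}(\lambda)\phi$ in the regular box, bound the boundary term via $\norm{\Chi_{\boldb}T_{\boldx}}=\scal{\boldx-\boldb}^{\nu}\le \scal{L/2+1}^{\nu}$, absorb the polynomial prefactor into the exponential to get $\mathbf{W}_{\boldx,\bom}(\lambda)\le C e^{-mL/4}$, and use Remark~\ref{part7rem1} for the other factor. The paper additionally splits into the cases $L\ge K_0$ (where $C=1$ works) and $L<K_0$ (handled by a uniform constant), but this is exactly what your choice of $C=C(d,N,m,\nu)$ encodes.
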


\begin{proof}
Let $\bom \in R(m, \, L, \, I, \, \boldx, \, \boldy)$. Since $\bom \in R(m, \, L, \, I, \, \boldx, \, \boldy)$, we know that for every $\lambda \in I$, either $\NboxLx$ or $\NboxLy$ is $(m, \, \lambda)-regular.$ Without loss of generality, we may assume $\NboxLx$ is $(m, \, \lambda)-regular.$

From \cite{KKS} we have that  for $\mu_{\bom}-a.e.$ $\lambda \in I$, ${\Pb}(\lambda)\,\psi := \Pb_{\bom}(\lambda)\,\psi $ is a generalized eigenfunction of $H:= H_{\bom}$ for every $\phi \in \H_{+}$. Let $\phi \in \H_{+}$, and denote $\psi = {\Pb}(\lambda)\,\phi$. Then
\begin{align}
&\abs{\Chi_{\x} {\Pb}(\lambda)\,\phi}   =\abs{\psi(\boldx)}  = \abs{ \sum_{ (\bolda, \,\boldb) \in \delta \mathbf{\Lambda}_{L}(\x)  } G_{\mathbf{\Lambda}_{L_k}(\x_0)  } (E; \, \x , \, \bolda) \psi(\boldb)} \notag \\
& \qquad \leq s_{Nd} \,  L^{Nd-1} \,e^{-m\norm{\x -\bolda}} \abs{\Chi_{\boldb} \psi } \notag \leq s_{Nd} \,  L^{Nd-1} \,e^{-m\norm{\x -\bolda}} \norm{\Chi_{\boldb} T_{\x}} \norm{T_{\x}^{-1} \psi    } \notag \\
&\qquad \leq s_{Nd} \,  L^{Nd-1} \,e^{-m\norm{\x -\bolda}} \scal{\boldx - \boldb}^{\nu}\norm{T_{\x}^{-1} {\Pb}(\lambda)\,\phi} \notag \\
&\qquad \leq s_{Nd} \,  L^{Nd-1} \,e^{-m \tfrac{L}{2}} L^{\nu}\norm{T_{\x}^{-1} {\Pb}(\lambda)\,\phi}.
\end{align}
Thus there exists $K_0 > 0$ such that if $L \geq K_0$, then
\beq
\abs{\Chi_{\x} {\Pb}(\lambda)\,\phi} \leq  e^{-m \tfrac{L}{4}} \norm{T_{\boldx}^{-1} {\Pb}(\lambda)\,\phi}.
\eeq 
If $L < K_0$, then there exists a constant $C_1 > 1$ such that
\begin{align}
\abs{\Chi_{\x} {\Pb}(\lambda)\,\phi} & \leq s_{Nd} \,  K_{0}^{Nd-1} \,e^{-m\tfrac{L}{2}} K_{0}^{\nu}\norm{T_{\x}^{-1} {\Pb}(\lambda)\,\phi} \notag\\
&\leq C_1 \,e^{-m\tfrac{L}{4}} \norm{T_{\x}^{-1} {\Pb}(\lambda)\,\phi}.
\end{align}
Using the bound from Remark \ref{part7rem1} for the term in $\boldy$, we get our desired result.
\end{proof}

\begin{theorem} 
Let $I$ be an open interval where the conclusion of Theorem \ref{maintheorem} holds. Then for every $\zeta_1 \in (0, \, 1) $, there exists a constant $C_{\zeta_1}$ such that for every $ \x, \, \y \in \Ndspace$,
\beq
\E \Bl\{ \norm{ \mathbf{W}_{\boldx, \, \bom}(\lambda) \,\,\mathbf{W}_{\boldy, \, \bom}(\lambda) } _{L^{\infty}(I, d\mu_{\bom}(\lambda))}    \Br\} \leq C_{\zeta_1} e^{-d_{H}(\x,\,\y)^{\zeta_1}} .
\eeq
\end{theorem}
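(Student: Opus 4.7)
The strategy is to split the expectation according to the regularity event $R(m_\zeta, L, I, \x, \y)$ defined in Lemma~\ref{part7lem1}, trading its deterministic bound $\le C e^{-m_\zeta L/4}$ on the good event against the trivial bound $\norm{\mathbf{W}_{\x,\bom}\mathbf{W}_{\y,\bom}}_{L^\infty(I, d\mu_{\bom})} \le 1$ on the complement, whose probability is controlled by the energy-interval conclusion (ii) of Theorem~\ref{maintheorem}. The scale $L$ will be chosen proportional to $d := d_H(\x,\y)$.

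Given $\zeta_1 \in (0,1)$, I would first fix some $\zeta \in (\zeta_1, 1)$ and invoke Theorem~\ref{maintheorem} to produce the scale $L_\zeta$, the interval width $\delta_\zeta$, and the mass $m_\zeta$. Since $H_{\bom}^{(N)}$ is almost surely bounded with spectrum in a fixed compact set, we may assume $I$ is bounded and cover it by finitely many intervals $I(E_j) = [E_j - \delta_\zeta, E_j + \delta_\zeta]$, $j=1,\ldots,M$, with $M = M(\abs{I}, \delta_\zeta) < \infty$.

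For $\x,\y$ with $d \ge L_\zeta$, set $L := \lfloor d \rfloor \ge L_\zeta$. Applying Theorem~\ref{maintheorem}(ii) to each $I(E_j)$ together with a union bound yields
\[
\P\bigl(R(m_\zeta, L, I, \x, \y)^c\bigr) \le M\, e^{-L^\zeta}.
\]
Combining Lemma~\ref{part7lem1} on $R$ with the trivial $L^\infty$-bound on $R^c$ gives
\[
\E\bigl[\norm{\mathbf{W}_{\x,\bom}\mathbf{W}_{\y,\bom}}_{L^\infty(I,d\mu_{\bom})}\bigr] \le C\,e^{-m_\zeta L/4} + M\,e^{-L^\zeta}.
\]
Since $L$ is comparable to $d$ and $\zeta > \zeta_1$, both terms are $\le \tfrac{1}{2}C_{\zeta_1}\,e^{-d^{\zeta_1}}$ once $d$ is sufficiently large, after enlarging the constant $C_{\zeta_1}$ appropriately. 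For the remaining regime $d < L_\zeta$, the trivial bound $\le 1$ combined with $e^{-d^{\zeta_1}} \ge e^{-L_\zeta^{\zeta_1}}$ gives the conclusion by absorbing the finite factor into $C_{\zeta_1}$.

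The main item to verify carefully is that the constants (from Lemma~\ref{part7lem1}, from the finite covering of $I$, and from the absorption of low-scale terms) depend only on $\zeta$, $m_\zeta$, $\delta_\zeta$, and $\abs{I}$, and not on $\x, \y$. This is essentially bookkeeping, so I do not anticipate a serious obstacle beyond ensuring the exponents line up: the choice $\zeta > \zeta_1$ is exactly what makes $e^{-d^\zeta} \le e^{-d^{\zeta_1}}$ for $d \ge 1$, while $e^{-m_\zeta L/4}$ decays faster than any stretched exponential, so no further scale balancing is required.
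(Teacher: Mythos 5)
Your proposal is correct and follows essentially the same route as the paper: split the expectation over the regularity event $R(m,L,I,\x,\y)$ with $L$ comparable to $d_H(\x,\y)$, use Lemma~\ref{part7lem1} on that event and the trivial bound together with the probability estimate $e^{-L^{\zeta}}$ (for $\zeta>\zeta_1$) on its complement, and absorb the small-distance regime into the constant. The only cosmetic differences are your choice $L=\lfloor d\rfloor$ versus the paper's dyadic scale $L_k< d_H(\x,\y)\le L_{k+1}$, and your explicit covering of $I$ by intervals of length $2\delta_\zeta$, which the paper leaves implicit.
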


\begin{proof}
Let us denote $f(\bom) = \norm{ \mathbf{W}_{\boldx, \, \bom}(\lambda) \,\,\mathbf{W}_{\boldy, \, \bom}(\lambda) } _{L^{\infty}(I, d\mu_{\bom}(\lambda))}$. Take $\x , \y \in \Ndspace.$ We will divide the proof into several cases.

Case 1: There exists $k \in N$ such that $L_k < d_{H}(\x,\,\y) \leq L_{k+1}$; i.e. the pair $\x$ and $\y$ is $L_k-distant.$
Denote $\cA_{k} = R(m, \, L_k, \, I, \, \boldx, \, \boldy)$. Then
\beq
\E \bigl\{f(\bom) \bigr\} = \E \bigl\{f(\bom) \,;\, \cA_k \bigr\} + \E \bigl\{f(\bom) \,;\, \cA_k^{c} \bigr\}
\eeq
On the set $\cA_k$, we have $f(\bom) \leq C e^{-m\tfrac{L}{4}} $ (Lemma \ref{part7lem1}),  
 so 
\beq
\E \bigl\{f(\bom) \,;\, \cA_k \bigr\} \leq C e^{-m\tfrac{L_k}{4}}.
\eeq
On the set $\cA_k^{c}$, we have $f(\bom) \leq 1 $ (Remark \ref{part7rem1}); thus
\beq
\E \bigl\{f(\bom) \,;\, \cA_k^{c} \bigr\} \leq \P (\cA_k^{c}) \leq e^{-L_k^{\zeta}}.
\eeq
Hence $\E \bigl\{f(\bom) \bigr\} \leq C_{1} e^{-d_{H}(\x,\,\y)^{\zeta_1}}$
for a slightly smaller $\zeta_1$.

Case 2: $d_{H}(\x,\,\y) \leq L_0$. By Remark \ref{part7rem1}, we have
\begin{align}
\E \bigl\{f(\bom) \bigr\} &\leq e^{d_{H}(\x,\,\y)^{\zeta_1}} e^{-d_{H}(\x,\,\y)^{\zeta_1}} \leq e^{L_0} e^{-d_{H}(\x,\,\y)^{\zeta_1}} \notag\\
& = C_2 \, e^{-d_{H}(\x,\,\y)^{\zeta_1}}.
\end{align}
Thus we get our desired result.
\end{proof}

The following result of \cite{GKjsp}, though trivial, plays a crucial role in this section, so we state it here without providing the proof.

\begin{lemma} \label{part7lem2}
Assume that for every $\zeta_1 \in (0, \, 1) $, we can find a constant $C_{\zeta_1}$ such that for every $ \x, \, \y \in \Ndspace$,
\beq \notag
\E \Bl\{ \norm{ \mathbf{W}_{\boldx, \, \bom}(\lambda) \,\,\mathbf{W}_{\boldy, \, \bom}(\lambda) } _{L^{\infty}(I, d\mu_{\bom}(\lambda))}    \Br\} \leq C_{\zeta_1} e^{-d_{H}(\x,\,\y)^{\zeta_1}} .
\eeq
Then for any $\zeta \in (0, \, 1)$ there exists a constant $C_{\zeta}$ such that
\beq \notag
\E \Bl\{ \sum_{\x, \, \y \in \Ndspace} e^{d_{H}(\x,\,\y)^{\zeta}} \scal{\x}^{-2\nu} \norm{ \mathbf{W}_{\boldx, \, \bom}(\lambda) \,\,\mathbf{W}_{\boldy, \, \bom}(\lambda) } _{L^{\infty}(I, d\mu_{\bom}(\lambda))} \Br\} < C_{\zeta}.
\eeq
Thus, it follows that for $\P$-a.e.\ $\bom$ we have
\beq
\sum_{\x, \, \y \in \Ndspace} e^{d_{H}(\x,\,\y)^{\zeta}} \scal{\x}^{-2\nu} \norm{ \mathbf{W}_{\boldx, \, \bom}(\lambda) \,\,\mathbf{W}_{\boldy, \, \bom}(\lambda) } _{L^{\infty}(I, d\mu_{\bom}(\lambda))} < \infty \notag.
\eeq
\end{lemma}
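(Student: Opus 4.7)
The plan is a direct Tonelli-plus-hypothesis calculation. Fix $\zeta \in (0,1)$ and choose any $\zeta_1 \in (\zeta,1)$. First, I would apply Tonelli to interchange the expectation with the sum $\sum_{\boldx,\boldy}$, then bound the expectation of each term by $C_{\zeta_1}e^{-d_H(\boldx,\boldy)^{\zeta_1}}$ using the hypothesis. Since $\zeta<\zeta_1$, the factor $e^{d_H^\zeta-d_H^{\zeta_1}}$ decays like $e^{-\frac12 d_H^{\zeta_1}}$ once $d_H$ is large (using $R^{\zeta_1}-R^\zeta\ge \tfrac12 R^{\zeta_1}$ for $R$ large enough), so the whole deterministic sum is essentially controlled by $C_{\zeta_1}\,\bigl(\sum_{\boldx}\scal{\boldx}^{-2\nu}\bigr)\bigl(\sup_\boldx\sum_{\boldy}e^{-\frac12 d_H(\boldx,\boldy)^{\zeta_1}}\bigr)$.

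Next, I would verify that each of these two factors is finite. The $\boldx$-sum converges because $2\nu=Nd+1>Nd$. For the $\boldy$-sum with fixed $\boldx$, the key observation is a shell estimate for the Hausdorff distance: if $d_H(\boldx,\boldy)\le R$, then each coordinate $y_i$ lies within distance $R$ of some $x_j$, hence $y_i \in \bigcup_{j=1}^N\Lambda_{2R}(x_j)$, a set of at most $N(2R+1)^d$ lattice points, so $\abs{\set{\boldy\in\Z^{Nd}:d_H(\boldx,\boldy)\le R}}\le N^N(2R+1)^{Nd}$, uniformly in $\boldx$. Summing over concentric Hausdorff shells $\set{\boldy : d_H(\boldx,\boldy)=R}$ against the rapidly decaying weight then gives a finite bound on the $\boldy$-sum independent of $\boldx$, which yields the desired constant $C_\zeta$.

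Once the expectation of the nonnegative sum is finite, the pointwise finiteness for $\P$-a.e.\ $\bom$ is immediate, since any nonnegative random variable with finite expectation is almost surely finite. There is no substantive obstacle here; the only point that requires a moment of care is the uniform-in-$\boldx$ polynomial shell bound for $d_H$, which relies on the fact that $d_H$ compares the $d$-dimensional coordinate sets $\cS_\boldx,\cS_\boldy$ rather than the full $Nd$-dimensional vectors, so a Hausdorff-$R$ neighbourhood sits inside a $d$-dimensional union of side-$2R$ boxes and therefore contributes only polynomial-in-$R$ cardinality, which is easily absorbed by the stretched-exponential decay coming from the gap $\zeta_1-\zeta$.
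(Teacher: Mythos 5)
Your proof is correct and is exactly the standard argument that the paper omits (it states this lemma without proof, citing \cite{GKjsp} and calling it trivial): Tonelli, the hypothesis with some $\zeta_1\in(\zeta,1)$, convergence of $\sum_{\x}\scal{\x}^{-2\nu}$ since $2\nu=Nd+1>Nd$, and the polynomial Hausdorff shell bound $\abs{\set{\y: d_H(\x,\y)\le R}}\le N^N(2R+1)^{Nd}$ uniformly in $\x$, followed by the a.s.\ finiteness of a nonnegative random variable with finite expectation. The shell bound is the only point requiring care, and your justification of it (each $y_i$ lies in $\bigcup_{j}\Lambda_{2R}(x_j)$) is the right one.
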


\begin{corollary} \label{part7cor1}
Suppose that for every $\zeta_1 \in (0, \, 1) $, we can find a constant $C_{\zeta_1}$ such that for every $ \x, \, \y \in \Ndspace$,
\beq \notag
\E \Bl\{ \norm{ \mathbf{W}_{\boldx, \, \bom}(\lambda) \,\,\mathbf{W}_{\boldy, \, \bom}(\lambda) } _{L^{\infty}(I, d\mu_{\bom}(\lambda))}    \Br\} \leq C_{\zeta_1} e^{-d_{H}(\x,\,\y)^{\zeta_1}} .
\eeq
Then for $\P$-a.e.\  $\bom$, $H_{\bom}$ exhibits pure point spectrum in the interval $(a, \,b)$ with the corresponding eigenfunctions decaying exponentially fast at infinity. Moreover, for $\mu_{\bom}-a.e.$ $\lambda \in I$, $\lambda$ is an eigenvalue of $H_{\bom}$ with finite multiplicity.
\end{corollary}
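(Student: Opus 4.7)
The plan is to combine the hypothesis with Lemma~\ref{part7lem2} and the generalized eigenfunction expansion for $H_\bom^{(N)}$ to upgrade, for $\P$-a.e.\ $\bom$, every nonzero generalized eigenfunction at $\mu_\bom$-a.e.\ $\lambda \in I$ to an honest $\ell^2$ eigenfunction with exponential decay in the Hausdorff distance. The argument follows the template of \cite{GKjsp} adapted to the multi-particle setting.

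First, I would apply Lemma~\ref{part7lem2} directly to the hypothesis to produce, for $\P$-a.e.\ $\bom$, the summability
\beq
\sum_{\x,\y \in \Ndspace} e^{d_{H}(\x,\y)^{\zeta}} \scal{\x}^{-2\nu} \norm{\mathbf{W}_{\x,\bom}(\lambda)\,\mathbf{W}_{\y,\bom}(\lambda)}_{L^{\infty}(I,d\mu_\bom)} < \infty.
\eeq
Because each summand is an essential supremum in $\lambda$, a Fubini-type exchange yields a set $\Lambda_\bom \subseteq I$ of full $\mu_\bom$-measure on which the pointwise sum $S_\bom(\lambda) := \sum_{\x,\y} e^{d_{H}(\x,\y)^{\zeta}}\scal{\x}^{-2\nu} \mathbf{W}_{\x,\bom}(\lambda)\mathbf{W}_{\y,\bom}(\lambda)$ is finite.

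Second, I fix $\lambda \in \Lambda_\bom$ with $\Pb_\bom(\lambda) \neq 0$ (which holds for $\mu_\bom$-a.e.\ $\lambda$) and take a generalized eigenfunction $\psi = \Pb_\bom(\lambda)\phi$ with $\phi \in \H_+$. The definition of $\mathbf{W}_{\x,\bom}(\lambda)$ together with \eq{part7eq0} gives $\abs{\psi(\x)} \leq 2^{\nu/2}\scal{\x}^{\nu}\mathbf{W}_{\x,\bom}(\lambda)\norm{T^{-1}\psi}$. The diagonal $\x=\y$ contribution to $S_\bom(\lambda)$ forces $\sum_{\x} \mathbf{W}_{\x,\bom}(\lambda)^2 < \infty$, from which $\psi \in \ell^2(\Ndspace)$ and hence $\lambda$ is an honest eigenvalue of $H_\bom$. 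For exponential decay, I pick $\x_0$ with $\psi(\x_0) \neq 0$ and use the product bound $\abs{\psi(\x_0)}\abs{\psi(\x)} \leq 2^{\nu}\scal{\x_0}^{\nu}\scal{\x}^{\nu}\mathbf{W}_{\x_0,\bom}(\lambda)\mathbf{W}_{\x,\bom}(\lambda)\norm{T^{-1}\psi}^2$; inserting this into the $e^{d_{H}(\x,\x_0)^{\zeta}}$-weighted summability forces $\abs{\psi(\x)} \leq C_{\x_0,\psi,\zeta^\prime}\, e^{-d_{H}(\x,\x_0)^{\zeta^\prime}}$ for any $\zeta^\prime < \zeta$.

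Third, since every nonzero $\psi \in \Ran\Pb_\bom(\lambda)$ is an $\ell^2$-eigenvector at energy $\lambda$, the restriction of the spectral measure $\mu_\bom$ to $I$ is supported on eigenvalues, so $H_\bom$ has pure point spectrum in $I$ with exponentially decaying eigenfunctions. Finite multiplicity at $\mu_\bom$-a.e.\ $\lambda$ will follow by bounding $\dim\Ran E_\bom(\set{\lambda})$ by the Hilbert--Schmidt quantity $\norm{T^{-1}E_\bom(\set{\lambda})T^{-1}}_{2}^{2}$, which the diagonal part of $S_\bom(\lambda)$ together with $Z_{\x,\bom}(\lambda) \leq \mathbf{W}_{\x,\bom}(\lambda)$ renders finite for a.e.\ $\lambda$. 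The main obstacle I anticipate is this last point---carefully passing from the summability on the $\mathbf{W}$'s to trace-class control of $E_\bom(\set{\lambda})$---but once $S_\bom(\lambda) < \infty$ is in hand this is a routine application of the techniques of \cite{GKjsp,KKS} and introduces no new conceptual ingredient.
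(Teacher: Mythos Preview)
Your approach is essentially the same as the paper's: apply Lemma~\ref{part7lem2}, extract pointwise decay from the summability, conclude $\psi\in\H$, then handle finite multiplicity. Two technical slips need correcting.

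First, the diagonal argument for $\psi\in\ell^{2}$ does not work as stated. The diagonal of $S_{\bom}(\lambda)$ gives only $\sum_{\x}\scal{\x}^{-2\nu}\mathbf{W}_{\x,\bom}(\lambda)^{2}<\infty$, not $\sum_{\x}\mathbf{W}_{\x,\bom}(\lambda)^{2}<\infty$; since $\mathbf{W}\le 1$ and $2\nu>Nd$, the weighted sum is automatic and carries no information. Your bound $\abs{\psi(\x)}\le 2^{\nu/2}\scal{\x}^{\nu}\mathbf{W}_{\x,\bom}(\lambda)\norm{T^{-1}\psi}$ would need $\sum_{\x}\scal{\x}^{2\nu}\mathbf{W}_{\x,\bom}(\lambda)^{2}<\infty$, which the diagonal does not provide. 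This step is unnecessary anyway: your off-diagonal decay argument (fixing $\x_{0}$, extracting $\mathbf{W}_{\y,\bom}(\lambda)\le C e^{-d_{H}(\x_{0},\y)^{\zeta}}$ from each term of $S_{\bom}(\lambda)$, and absorbing the residual $\scal{\y}^{\nu}$ via $\norm{\y}\le\norm{\x_{0}}+d_{H}(\x_{0},\y)$) already yields $\abs{\psi(\y)}\le C' e^{-d_{H}(\x_{0},\y)^{\zeta'}}$, and since $\#\set{\y: d_{H}(\x_{0},\y)\le r}\le N^{N}(2r+1)^{Nd}$ this gives $\psi\in\H$.

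Second, the proposed finite-multiplicity bound $\dim\Ran E_{\lambda}\le\norm{T^{-1}E_{\lambda}T^{-1}}_{2}^{2}$ is false in general (take a rank-one projection onto $\delta_{\x}$ with $\norm{\x}$ large). The paper instead bounds the product $\mu_{\bom}(\set{\lambda})\,\tr E_{\lambda}=\norm{T^{-1}E_{\lambda}}_{2}^{2}\,\tr E_{\lambda}$, expands both traces as sums over $\Ndspace$, and dominates by $\sum_{\x,\y}\scal{\x}^{-2\nu}Z_{\x,\bom}(\lambda)Z_{\y,\bom}(\lambda)\le S_{\bom}(\lambda)<\infty$. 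Since $E_{\lambda}\ne 0$ forces $\mu_{\bom}(\set{\lambda})>0$, this gives $\tr E_{\lambda}<\infty$.
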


\begin{proof}
Since we know that there exists $\Omega_1$ where $\P(\Omega_1) = 1$ and for every $\bom \in \Omega_1$
\beq
\sum_{\x, \, \y \in \Ndspace} e^{d_{H}(\x,\,\y)^{\zeta}} \scal{\x}^{-2\nu} \norm{ \mathbf{W}_{\boldx, \, \bom}(\lambda) \,\,\mathbf{W}_{\boldy, \, \bom}(\lambda) } _{L^{\infty}(I, d\mu_{\bom}(\lambda))} < \infty \notag,
\eeq
let us take $\bom \in \Omega_1$. Then there exists a constant $C_{\bom} = C_{\bom,\,\zeta}$ such that 
\beq
\sum_{\x, \, \y \in \Ndspace}   \scal{\x}^{-2\nu} { e^{d_{H}(\x,\,\y)^{\zeta}}  \norm{ \mathbf{W}_{\boldx, \, \bom}(\lambda) \,\,\mathbf{W}_{\boldy, \, \bom}(\lambda) } _{L^{\infty}(I, d\mu_{\bom}(\lambda))}   } < C_{\bom}.
\eeq
It follows from Lemma \ref{part7lem2} that for any $\phi \in \H_{+}$, any $\x, \, \y \in \Ndspace$, and for $\mu_{\bom}-a.e.\,\,\lambda \in I$, we have
\beq
\norm{\Chi_{\x} \Pb_{\bom} (\lambda) \phi }\norm{\Chi_{\y} \Pb_{\bom} (\lambda) \phi } \leq {  C_{\bom}  \scal{\x}^{2\nu} \norm{T_{\x}^{-1} \Pb_{\bom} (\lambda) \phi }\norm{T_{\y}^{-1} \Pb_{\bom}(\lambda)  \phi }    }{e^{-d_{H}(\x,\,\y)^{\zeta}}  }.
\eeq
But $ \norm{ T_{\bolda}^{-1} \Pb_{\bom}(\lambda) \phi} \leq  2^{\tfrac{\nu}{2}} \scal{\bolda}^{\nu} \norm{  \phi}_{+},$ so
\begin{align}
\norm{\Chi_{\x} \Pb_{\bom} (\lambda) \phi }\norm{\Chi_{\y} \Pb_{\bom} (\lambda) \phi } \leq { C_{\bom} \,2^{\nu} \, \scal{\x}^{3\nu} \scal{\y}^{\nu} \norm{\phi}_{+}^{2}  }{ e^{-d_{H}(\x,\,\y)^{\zeta}}   } .
\end{align}
We know there exists a constant $\cZ < \infty$ such that if $d_{H}{(\x, \, \y)} \geq \cZ$ we get  $ \scal{\y}^{\nu} e^{-  d_{H}{(\x, \, \y)}^{\zeta} }  \leq e^{ -\tfrac{1}{2} d_{H}{(\x, \, \y)}^{\zeta} } $. Moreover, there exists a constant $\tilde{C} < \infty$  such that if $d_{H}{(\x, \, \y)} \leq \cZ$, then
$  \scal{\y}^{\nu} e^{-  d_{H}{(\x, \, \y)}^{\zeta} }  \leq \tilde{C}$. So for every $\y \in \Ndspace$,
\beq
\norm{\Chi_{\x} \Pb_{\bom} (\lambda) \phi } \norm{\Chi_{\y} \Pb_{\bom} (\lambda) \phi } \leq C_{1} \scal{\x}^{3\nu} e^{-\tfrac{1}{2}  d_{H}{(\x, \, \y)}^{\zeta} } \norm{\phi}_{+}^{2}
\eeq
for some constant $C_{1} = C_{1}(\bom, \, \zeta)$.
In particular, if $\Pb_{\bom} (\lambda) \phi$ is a generalized eigenfunction of $H_{\bom}$, then we can pick $\x_{0} \in \Ndspace$ such that $ \norm{\Chi_{\x_{0}} \Pb_{\bom} (\lambda) \phi } \neq 0$. So we get that for every $\y \in \Ndspace$
\beq
 \norm{\Chi_{\y} \Pb_{\bom} (\lambda) \phi } \leq \frac{ C_{1} \scal{\x_{0}}^{3\nu} e^{-\tfrac{1}{2}  d_{H}{(\x, \, \y)}^{\zeta} } \norm{\phi}_{+}^{2}  }{ \norm{\Chi_{\x_{0}} \Pb_{\bom} (\lambda) \phi }   }.
\eeq
It follows that $\Pb_{\bom} (\lambda) \phi \in \H $, and hence $\mu_{\bom} -a.e.$ $\lambda \in I$ is an eigenvalue of $H_{\bom}$.

To show finite multiplicity, it is enough for us to show that $\text{tr}\,E_{\lambda}(H_{\bom}):=\text{tr}\,\Chi_{\{ \lambda\}}(H_{\bom}) < \infty$. But
\begin{align}
&\mu_{\bom}(\lambda) \,\text{tr}\,E_{\lambda}(H_{\bom}) = \norm{T^{-1} E_{\lambda}(H_{\bom})}_{2}^{2} \,\,\text{tr}\,E_{\lambda}(H_{\bom}) = \text{tr } \Bl ( E_{\lambda} (H_{\bom}) T^{-2}E_{\lambda}(H_{\bom}) \Br ) \,\text{tr}\,E_{\lambda}(H_{\bom})  \notag \\
&  \;= \text{tr } \Bl (  T^{-2}  E_{\lambda} (H_{\bom}) \Br ) \,\text{tr}\,E_{ \lambda }(H_{\bom}) \leq \text{tr} \Bl (  \sum_{\x \in \Ndspace} \scal{ \x} ^{-2\nu}\Chi_{\x}  E_{\lambda} (H_{\bom}) \Br ) \,\text{tr} \Bl ( \sum_{\y \in \Ndspace} \Chi_{\y} E_{\lambda}(H_{\bom}) \Br ) \notag \\
&\; \leq \sum_{\x, \, \y \in \Ndspace} \scal{\x}^{-2\nu}\norm{ \Chi_{\x}  E_{\lambda} (H_{\bom})  }_{2}^{2} \norm{ \Chi_{\y}  E_{\lambda} (H_{\bom})}_{2}^{2}   \notag\\
& \; \leq \sum_{\x, \, \y \in \Ndspace} \scal{\x}^{-2\nu} Z_{\boldx, \, \bom}(\lambda)  ^{2} \norm{ T_{\boldx}^{-1}E_{\bom}(\lambda) }_2^{2} Z_{\boldy, \, \bom}(\lambda)  ^{2} \norm{ T_{\boldy}^{-1}E_{\bom}(\lambda) }_2^{2}. 
\end{align}
Since $\norm{ T_{\boldx}^{-1}E_{\bom}(\lambda) }_2^{2}$ is bounded uniformly for every $\x \in \Ndspace$, every $\bom$, and every $\lambda$, and $Z_{\bolda, \, \bom}(\lambda) \leq W_{\bolda, \, \bom}(\lambda) \leq \mathbf{W}_{\bolda, \, \bom}(\lambda) \leq 1  $,  we get
\begin{align}
\mu_{\bom}(\lambda) \,\text{tr}\,E_{\lambda}(H_{\bom}) &\leq C_{2}^{4} \sum_{\x, \, \y \in \Ndspace} \scal{\x}^{-2\nu} Z_{\boldx, \, \bom}(\lambda)  ^{2}  Z_{\boldy, \, \bom}(\lambda)  ^{2} \notag\\
&\leq C_{2}^{4} \sum_{\x, \, \y \in \Ndspace} \scal{\x}^{-2\nu} Z_{\boldx, \, \bom}(\lambda)    Z_{\boldy, \, \bom}(\lambda).
\end{align}
The result now follows from Lemma \ref{part7lem1}.
\end{proof}

\begin{proof}[Proof of Corollary \ref{PPS} (iii)]
Let us take $\zeta_1 > \zeta,$ and let $E_{n, \, \bom}$ be an eigenvalue of  $H_{\bom}$. For $\psi, \, \phi \in \text{Ran} \,\,\Chi_{E_{n, \, \bom}} (H_{\bom}) $, and $\x, \, \y \in \Ndspace$, we have
\begin{align}
\norm{\Chi_{\x}\phi} \norm{\Chi_{\y}\psi}  &\leq \Bl( W_{\boldx, \, \bom}(E_{n, \, \bom})     W_{\boldy, \, \bom}(E_{n, \, \bom}) \Br)
 \Bl( \norm{T_{\x}^{-1}\phi} \norm{T_{\y}^{-1}\psi} \Br) \\
 & \leq \Bl( \mathbf{W}_{\boldx, \, \bom}(E_{n, \, \bom})     \mathbf{W}_{\boldy, \, \bom}(E_{n, \, \bom}) \Br)  \Bl( \norm{T_{\x}^{-1}\phi} \norm{T_{\y}^{-1}\psi} \Br) \notag \\
 & \leq \norm{ \mathbf{W}_{\boldx, \, \bom}(\lambda) \,\,\mathbf{W}_{\boldy, \, \bom}(\lambda) } _{L^{\infty}(I, d\mu_{\bom}(\lambda))} \Bl( \norm{T_{\x}^{-1}\phi} \norm{T_{\y}^{-1}\psi} \Br) . \notag
\end{align}
Thus,  we can apply Lemma \ref{part7lem2} to get
\beq
\norm{\Chi_{\x}\phi} \norm{\Chi_{\y}\psi} \leq  C_{\zeta_1} e^{-d_{H}(\x,\,\y)^{\zeta}} \scal{\x}^{2\nu}  \Bl( \norm{T_{\x}^{-1}\phi} \norm{T_{\y}^{-1}\psi} \Br),
\eeq
so applying equation \eq{part7eq0} we get our desired result.
\end{proof}

\end{document}